\documentclass[11pt]{article}
\RequirePackage{algorithm,algorithmic,graphicx,amssymb,epsf,epic,url,complexity}
\RequirePackage{fullpage}
\usepackage{enumitem}
\usepackage{authblk}
\usepackage{mdframed}
\usepackage{microtype, complexity}
\usepackage[usenames,dvipsnames]{xcolor}
\usepackage{amsthm}
\usepackage{amsmath}
\usepackage{upref}
\usepackage{amsfonts}
\usepackage{graphicx}
\usepackage{enumitem}
\usepackage{latexsym}
\usepackage{amssymb}
\usepackage{amscd}
\usepackage{mdframed}
\usepackage[all]{xy}
\usepackage{comment}
\usepackage{xcolor}
\usepackage{amsthm}
\usepackage{framed}
\usepackage[margin=1in]{geometry}

\usepackage{mathrsfs}
\usepackage{amsfonts}
\usepackage{epsfig}
\usepackage{epstopdf}
\usepackage{tcolorbox}
\usepackage{tikz}
\usepackage[hang, small,labelfont=bf,up,textfont=it,up]{caption} 
\usepackage{authblk}

\newcommand {\ignore} [1] {}
\def\eps{\varepsilon}

\newcommand{\ceil}[1]{\lceil #1 \rceil}

\newcommand{\sets}{\mathcal{S}}
\newcommand{\univ}{\mathcal{U}} 
\newcommand{\cost}{\mathsf{cost}}
\newcommand{\brac}[1]{\left(#1\right)}
\newcommand{\lev}{\mathsf{lev}}
\newcommand{\plev}{\mathsf{plev}}

\newcommand{\asn}{\mathsf{asn}}
\newcommand{\cov}{\mathsf{cov}}

\newcommand{\ind}{\mathbf{1}}

\def\EMPH#1{\emph{\textcolor{BrickRed} {#1}}}

\usepackage[usenames,dvipsnames]{xcolor}
\usepackage{tcolorbox}
\usepackage{mdframed}

\def\EMPH#1{\emph{\textcolor{BrickRed} {#1}}}

\usepackage{comment}

\usepackage{hyperref}                     
\usepackage[capitalize]{cleveref}

\usepackage{aliascnt}

\newtheorem{thm}{Theorem}

\newaliascnt{lem}{thm}
\newtheorem{lem}[lem]{Lemma}
\aliascntresetthe{lem}

\newaliascnt{prop}{thm}

\aliascntresetthe{prop}

\newaliascnt{cl}{thm}
\newtheorem{cl}[cl]{Claim}
\aliascntresetthe{cl}

\newaliascnt{obs}{thm}
\newtheorem{obs}[obs]{Observation}
\aliascntresetthe{obs}

\newaliascnt{cor}{thm}
\newtheorem{cor}[cor]{Corollary}
\aliascntresetthe{cor}

\crefname{thm}{Theorem}{Theorems}
\Crefname{thm}{Theorem}{Theorems}

\crefname{lem}{Lemma}{Lemmas}
\Crefname{lem}{Lemma}{Lemmas}

\crefname{prop}{Proposition}{Propositions}
\Crefname{prop}{Proposition}{Propositions}

\crefname{cl}{Claim}{Claims}
\Crefname{cl}{Claim}{Claims}

\crefname{obs}{Observation}{Observations}
\Crefname{obs}{Observation}{Observations}

\crefname{cor}{Corollary}{Corollaries}
\Crefname{cor}{Corollary}{Corollaries}


\newtheorem{question}{Question}
\newtheorem{mdresult2}[question]{Question}

\title{Dynamic Set Cover with Worst-Case Recourse}

\author[1]{Shay Solomon\thanks{Funded by the European Union (ERC, DynOpt, 101043159). Views and opinions expressed are however those of the author(s) only and do not necessarily reflect those of the European Union or the European Research Council. Neither the European Union nor the granting authority can be held responsible for them. This research was also supported by the Israel Science Foundation (ISF) grant No.1991/1, and by a grant from the United States-Israel Binational Science Foundation (BSF), Jerusalem, Israel, and the United States National Science Foundation (NSF).}}

\author[1]{Amitai Uzrad \thanks{Funded by the European Union (ERC, DynOpt, 101043159). Views and opinions expressed are however those of the author(s) only and do not necessarily reflect those of the European Union or the European Research Council. Neither the European Union nor the granting authority can be held responsible for them. This research was also supported by the Israel Science Foundation (ISF) grant No.1991/1.}}

\affil[1]{Tel Aviv University}

\begin{document}

\date{\empty}

\begin{titlepage}
\def\thepage{}
\maketitle

\begin{abstract}
\noindent In the dynamic set cover (SC) problem, the input is a dynamic universe of at most $n$ elements and a fixed collection of $m$ sets, where each element belongs to at most $f$ sets and each set has cost in $[1/C,1]$. 
The objective is to \emph{efficiently} maintain an approximate minimum SC under element updates; efficiency is primarily measured by the \emph{update time}, but another important parameter is the \emph{recourse} (number of changes to solution per update).  
Ideally, one would like to achieve low \EMPH{worst-case} bounds on both update time and  recourse.

One can achieve approximation $(1+\epsilon)\ln n$ (greedy-based) or $(1+\epsilon)f$ (primal–dual-based) with  \emph{worst-case update time}   $O\brac{f\log n}$ (ignoring $\epsilon$-dependencies).
However, despite a large body of work,
 no algorithm with low update time (even amortized) and nontrivial \EMPH{worst-case recourse} is known,
even for unweighted instances ($C = 1$)! 

We remedy this by providing a transformation 
that, given as a \EMPH{black-box} a SC algorithm with approximation $\alpha$ and update time $T$, returns a set cover algorithm with approximation $(2 + \epsilon)\alpha$, update time $O(T +\alpha C)$ and worst-case recourse $O(\alpha C)$. Our main results are obtained by leveraging this transformation for constant $C$:

\begin{itemize}
\item 
For $f = O(\log n)$, applying the transformation on  the best primal-dual-based 
algorithm yields worst-case recourse $O(f)$. For constant $f$ (e.g., vertex cover), we get near-optimal bounds on all parameters. 
We also show  that for approximation $O(f)$---a recourse of $\Omega(f)$ is inevitable for primal-dual-based algorithms, even if amortized and  in the decremental setting.
\item 
For $f = \Omega(\log n)$, applying the transformation on the best greedy-based algorithm yields worst-case recourse $O(\log n)$. As our main technical contribution, we  show that by \EMPH{opening the black box} and exploiting a certain \EMPH{robustness} property of the greedy-based algorithm, the worst-case recourse can be reduced  to $O(1)$,  without sacrificing 
the other parameters, yielding 
a $((2 + \epsilon) \ln n)$-approximation with worst-case update time $O\brac{f\log n}$ and $O(1)$ worst-case recourse.
\end{itemize}

\end{abstract}

\end{titlepage}

\pagenumbering {arabic}

\section{Introduction}

The minimum set cover (SC) problem is among the most basic and extensively studied problems in combinatorial optimization. It represents the canonical covering problem, standing in duality to the family of packing problems. The problem is defined on a set system $(\univ, \sets)$, where $\univ$ is a universe of $n$ elements and $\sets$ is a family of $m$ subsets of $\univ$, each set $s \in \sets$ having cost $\cost(s) \in [\frac{1}{C},1]$ (in the {\em unweighted} setting $C=1$). The \emph{frequency} $f$ of the system is the maximum number of sets containing any element of $\univ$. A collection $\sets' \subseteq \sets$ is a SC if every element of $\univ$ is contained in at least one set in $\sets^\prime$, and the goal is to find a cover $\sets^*$ of minimum cost $\cost(\sets^*) = \sum_{s \in \sets^*} \cost(s)$. 

The classical {\em primal–dual} (PD) and {\em greedy} algorithms
achieve approximations $f$ and (roughly) $\ln n$, respectively. These guarantees are believed to be optimal: improving beyond $(f-\epsilon)$ is impossible assuming the Unique Games Conjecture \cite{khot2008vertex}, and beyond $(1-\epsilon)\ln n$ is NP-hard \cite{dinur2014analytical,williamson2011design}. Both the   PD and greedy algorithms can be interpreted through the lens of {\em linear programming} (LP). The PD algorithm explicitly maintains feasible primal and dual solutions, repeatedly raising dual variables for uncovered elements until some set constraint becomes tight, and that set is added to the primal solution,  terminating with an integral $f$-approximation.
The greedy algorithm admits a \emph{dual-fitting} analysis: it constructs a dual solution  by interpreting the incremental choices of the algorithm as assigning ``charges'' to covered elements. These dual values may initially violate feasibility, but scaling them down uniformly by the harmonic number $H(n)$ restores feasibility, yielding the familiar ($\approx \ln n$) approximation  through weak duality.
Despite its simplicity, the SC problem captures the fundamental structure underlying LP formulations of covering problems and serves as a canonical model for studying approximation and LP-relaxation techniques.

\paragraph{\textbf{Dynamic Set Cover.}}
There is a growing body of work on the SC problem in the \emph{dynamic} setting \cite{abboud2019dynamic,assadi2021fully,bhattacharya2017deterministic,bhattacharya2015design,bhattacharya2019new,bhattacharya2021dynamic,bukov2023nearly,gupta2017online,solomon2023dynamic,10756164}, where the universe $\univ$ undergoes updates (while $|\univ|\leq n$) and the family $\sets$ of $m$ sets is fixed. The algorithms are separated to the \emph{low-frequency regime} ($f = O(\log n)$) and the \emph{high-frequency regime} ($f = \Omega(\log n)$).
The objective is to maintain a near-optimal approximate 
SC {\em efficiently}, where the most well-studied measure of efficiency is the {\em update time}--- the time 
  required to update the maintained solution per update step. One may try to optimize the amortized (average) update time of the algorithm or its worst-case (maximum) update time.
 Another important efficiency measure, which is subject to growing research attention in recent years,  is the \EMPH{recourse}---the \emph{number of changes} to the maintained solution per update step. One may try to optimize the amortized as well as the worst-case recourse bounds, just as with the update time measure.

\paragraph{\textbf{Recourse.}}
Dynamic and online algorithms with low recourse have been studied 
for a wide range of optimization problems, including set cover, graph matching, MIS, Steiner tree, flow, and scheduling; see \cite{r2,r6,r9,r15,r16,r28,r30,r32,r35,r36,gupta2017online,r38,r39,r51,solomon2023dynamic} and the references therein. In applications such as resource allocation (e.g., assigning servers to clients in a network) and vehicle scheduling (e.g., delivery routing), replacing one solution component with another can be prohibitive---often far more costly than the computation time required to determine the replacement itself. This provides a primary motivation for algorithms with low recourse.
Another motivation is that when the recourse bound is small, 
all changes to the maintained solution can be efficiently reported after each update, which is important in practical implementations---particularly when the algorithm serves as a black-box subroutine within a larger data structure or algorithm.

\paragraph{\textbf{Amortized Versus Worst-case.}}
In many practical scenarios—particularly in systems required to provide real-time responses—tight control over the worst-case update time or recourse is essential; consequently, algorithms with only amortized guarantees may be inadequate in such settings.
Alas, for various dynamic problems, there is a strong separation between the state-of-the-art algorithms with low {\em amortized} versus {\em worst-case} bounds, in the update time or the recourse, or both.
We  discuss in detail this separation for the SC problem below. 
We note that a low worst-case update time does not imply a low worst-case recourse, with
the Maximal Independent Set (MIS) problem providing a prime example: \cite{8948632} and \cite{8948654} achieve $\mathsf{poly}(\log n)$ worst-case update time with high probability (against an oblivious adversary), while \cite{10.1145/3188745.3188922} proved a lower bound of  $\Omega(n)$ on the worst-case recourse.
However, there are problems, such as approximate matchings, for which algorithms with low update time were strengthened to also achieve low worst-case recourse \cite{ShayNoam}; more on this in \Cref{techsec}.

\paragraph{\textbf{Our Focus.}}
This work focuses on 
dynamic SC algorithms with  \EMPH{low worst-case} recourse. Despite a large body of work, all known algorithms with non-trivial worst-case \emph{recourse} incur slow update times.
Can one get low worst-case recourse together with fast update time? Fast does not mean sub-exponential time or even polynomial in the input size, but rather 
poly-log in the input size (and polynomial in $f$). The holy grail is to get the state-of-the-art worst-case update time together with low worst-case recourse.

\subsection{Prior Work}

\paragraph{\textbf{Slow Update Time, Worst-Case Recourse.}}
In the high-frequency regime, the only known low worst-case recourse algorithm, with approximation $O(\log n)$, takes exponential time \cite{gupta2017online}.
For low frequency and for unweighted instances, one can naively maintain a maximal matching (MM) with a worst-case update time of $O(n)$ and a worst-case recourse of $O(1)$, and this generalizes for hypergraphs, with the update time and recourse bounds increasing by a factor of $f^2$. Taking all matched vertices gives an $f$-approximate hypergraph vertex cover (i.e., $f$-approximate SC).

\paragraph{\textbf{Slow Update Time, Amortized Recourse.}} 
Two algorithms with $O(1)$ amortized recourse were given in \cite{gupta2017online} in the high-frequency and low-frequency regimes, achieving approximations  $O(\log n)$ and $O(f)$ respectively, but with super-linear update time; essentially that of recomputing from scratch each update step.

\paragraph{\textbf{Fast Amortized Time, Amortized Recourse.}}
\cite{solomon2023dynamic} presented a
$((1 + \epsilon) \ln n)$-approximation algorithm with $O(\min \{\log n , \log C \})$ amortized recourse and $O\brac{\frac{f\log n}{\epsilon^5}}$ amortized update time. It is also likely that an amortized recourse of $\mathsf{poly}(f)$ can be obtained with the algorithms of \cite{bhattacharya2019new} and follow-ups, but they did not analyze the amortized recourse. 
All these algorithms have slow worst-case update times.

\paragraph{\textbf{Fast Worst-case Time, High Recourse.}}
\cite{bhattacharya2021dynamic} gave a $((1 + \epsilon)f)$-approximation algorithm with $O\brac{\frac{f\log ^2 (Cn)}{\epsilon^3}}$ worst-case update time. 
The state-of-the-art algorithms in both the low and high-frequency regimes, with near-optimal approximations of $(1+\epsilon)f$ (PD-based) and $(1+\epsilon)\ln n$ (greedy-based) respectively, achieve a worst-case update time of $O\brac{\frac{f\log n}{\epsilon^2}}$ \cite{10756164}. All algorithms in \cite{bhattacharya2021dynamic,10756164}
incur a huge recourse, even amortized, since their approach relies on the notion of {\em schedulers}, which essentially means running multiple background solutions simultaneously, and switching between different solutions to the output frequently, blowing up the recourse.

The following fundamental question naturally arises:

\begin{tcolorbox} [width=\linewidth, sharp corners=all, colback=white!95!black]
\begin{question} \label{q1}
Can one obtain a SC algorithm with a nontrivial worst-case recourse bound, together with a fast (ideally worst-case) update time?
\end{question}
\end{tcolorbox}

\subsection{Our Contribution} 

We answer Question \ref{q1} in the affirmative,
in both the high-frequency and low-frequency regimes, for instances with small aspect ratio $C$. Our first result, proved in \Cref{techsec}, is the following:

\begin{thm} [Black-box Transformation] \label{warmupthm}
Let $\mathcal{ALG}$ be any algorithm that maintains an $\alpha$-approximate SC in  (amortized or worst-case) update time $T$. Using $\mathcal{ALG}$ as a black box, one can maintain a $((2+\epsilon) \alpha)$-approximate SC (assuming $\epsilon \leq 0.5$ and $\alpha \geq 2$) 
in  (amortized or worst-case) update time
$T + O\brac{\frac{\alpha \cdot C}{\epsilon}}$, and with a worst-case recourse of $O\brac{\frac{\alpha \cdot C}{\epsilon}}$.
\end{thm}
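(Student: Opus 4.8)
The plan is to run $\mathcal{ALG}$ in the background and maintain a separate output cover $\sets_{\mathrm{out}}$ that \emph{lazily chases} $\mathcal{ALG}$'s solution, so that each update changes $\sets_{\mathrm{out}}$ by only $O(\alpha C/\epsilon)$ sets. Partition time into \emph{cycles}; at the start of cycle $i$ (time $t_i$) freeze a snapshot $\sets^{(i)}:=\sets_{\mathcal{ALG}}(t_i)$ of $\mathcal{ALG}$'s current solution. Over the first half of cycle $i$, gradually \emph{insert} all sets of $\sets^{(i)}$ into $\sets_{\mathrm{out}}$; over the second half, gradually \emph{delete} all sets of the previous snapshot $\sets^{(i-1)}$ from $\sets_{\mathrm{out}}$. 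In parallel, keep a buffer $\sets^{\mathrm{rec}}$ holding one arbitrary set covering each element whose current lifespan began after $t_{i-1}$ (an insertion adds one set to $\sets^{\mathrm{rec}}$, a deletion removes one, both $O(1)$ recourse). The crucial design choice is the cycle length $P_i:=\Theta\!\big(\tfrac{\epsilon}{\alpha}\max\{\cost(\sets^{(i)}),\cost(\sets^{(i-1)})\}\big)$: since $\sets^{(i)}$ has at most $C\cdot\cost(\sets^{(i)})$ sets (each set costs $\ge 1/C$) and $\sets^{(i-1)}$ has at most $C\cdot\cost(\sets^{(i-1)})$ sets, spreading these insertions and deletions over $\Theta(P_i)$ steps yields recourse $O(\alpha C/\epsilon)$ per step, essentially by the definition of $P_i$; together with the $O(1)$ for $\sets^{\mathrm{rec}}$ this gives the claimed worst-case recourse.

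Correctness splits into validity and approximation. For validity, note $\sets^{(j)}=\sets_{\mathcal{ALG}}(t_j)$ is a feasible cover of $\univ(t_j)$: in the first half of cycle $i$ the fully-present $\sets^{(i-1)}$ together with $\sets^{\mathrm{rec}}$ covers everything, because every currently-present element either was present at $t_{i-1}$ (hence lies in some set of $\sets^{(i-1)}$) or began its current lifespan after $t_{i-1}$ (hence is covered by $\sets^{\mathrm{rec}}$); in the second half $\sets^{(i)}$ is by then fully inserted and plays the same role. The add-then-delete ordering is exactly what makes this go through — interleaving the insertion of $\sets^{(i)}$ with the deletion of $\sets^{(i-1)}$ could leave a long-lived element momentarily uncovered. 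For approximation, throughout cycle $i$ we have $\sets_{\mathrm{out}}\subseteq \sets^{(i-1)}\cup\sets^{(i)}\cup\sets^{\mathrm{rec}}$, so $\cost(\sets_{\mathrm{out}})\le \cost(\sets^{(i-1)})+\cost(\sets^{(i)})+|\sets^{\mathrm{rec}}| \le \alpha\,\mathrm{OPT}(t_{i-1})+\alpha\,\mathrm{OPT}(t_i)+|\sets^{\mathrm{rec}}|$. I use here that $\mathrm{OPT}$ changes by at most $1$ per update; the point is that $P_i=\Theta(\tfrac{\epsilon}{\alpha}\max\{\cost(\sets^{(i)}),\cost(\sets^{(i-1)})\})=O(\epsilon\,\mathrm{OPT})$ because $\cost(\sets_{\mathcal{ALG}})\le\alpha\,\mathrm{OPT}$, so a straightforward induction shows consecutive cycles see $\mathrm{OPT}$ within a $(1\pm O(\epsilon))$ factor. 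Hence $\mathrm{OPT}(t_{i-1}),\mathrm{OPT}(t_i)\le(1+O(\epsilon))\,\mathrm{OPT}$ at the current time and $|\sets^{\mathrm{rec}}|\le P_{i-1}+P_i=O(\epsilon\,\mathrm{OPT})$; plugging in and using $\alpha\ge 2$, $\epsilon\le 1/2$ to absorb the lower-order terms yields $\cost(\sets_{\mathrm{out}})\le(2+\epsilon)\alpha\,\mathrm{OPT}$.

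The running time per step is $T$ to advance $\mathcal{ALG}$ plus $O(\alpha C/\epsilon)$ to perform the bounded batch of set changes into $\sets_{\mathrm{out}}$ and $\sets^{\mathrm{rec}}$; the one technical nuisance is obtaining the frozen snapshot $\sets^{(i)}$ at time $t_i$ without an $O(|\sets_{\mathcal{ALG}}|)$ spike in a single step — in the amortized-time regime one simply reads $\mathcal{ALG}$'s solution at $t_i$ (this amortizes to $O(\alpha C/\epsilon)$ over the cycle), while for worst-case time one maintains an internal mirror of $\mathcal{ALG}$'s solution in a structure supporting cheap snapshots. I also need to handle the degenerate regime $\mathrm{OPT}=O(1/\epsilon)$ separately (where $P_i$ would drop below $1$): there $\mathcal{ALG}$'s solution has only $O(\alpha C/\epsilon)$ sets, so one just outputs it directly, and crossing the $\Theta(1/\epsilon)$ threshold costs a single $O(\alpha C/\epsilon)$-recourse step. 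I expect the main obstacle to be precisely the tension in choosing $P_i$: it must be short enough that the $(2+\epsilon)$-factor does not degrade, yet long enough that the swap of two $\Theta(\alpha\,\mathrm{OPT}\,C)$-size snapshots fits a per-step budget of $O(\alpha C/\epsilon)$; the resolution is that tying $P_i$ to $\tfrac{\epsilon}{\alpha}\cdot(\text{cost of the snapshots being swapped})$ self-certifies the recourse bound and, through the $\alpha$-approximation guarantee, simultaneously certifies $P_i=O(\epsilon\,\mathrm{OPT})$.
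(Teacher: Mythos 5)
Your proposal is correct and follows essentially the same route as the paper: partition time into intervals of length $\Theta\bigl(\tfrac{\epsilon}{\alpha}\max\{\cost(\text{old}),\cost(\text{new})\}\bigr)$, add the new snapshot in the first half and delete the old one in the second, cover freshly inserted elements with one arbitrary set each (your buffer $\sets^{\mathrm{rec}}$ is the paper's $\mathcal{N}_i$), and control the approximation by an induction showing $\mathrm{OPT}$ drifts by only a $(1\pm O(\epsilon))$ factor per interval. The only cosmetic difference is that the paper folds the buffer into the next interval's source solution (so stale buffer sets are removed during the next removal phase rather than at the cycle boundary), which is the natural way to realize the bookkeeping you leave implicit.
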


\noindent Feeding \Cref{warmupthm} the PD-based algorithm of \cite{10756164} directly yields:

\begin{cor} [Low-Frequency] \label{lf}
For any set system $(\univ, \sets)$ that undergoes a sequence of element insertions and deletions, where the frequency is always bounded by $f$, and for any $\epsilon \in (0, \frac{1}{4})$, there is a deterministic algorithm that maintains a $((2+\epsilon)f)$-approximate SC in $O\brac{\frac{f \cdot \log n}{\epsilon^2} + \frac{f \cdot C}{\epsilon}}$  worst-case update time and with $O\brac{\frac{f \cdot C}{\epsilon}}$ worst-case recourse.
\end{cor}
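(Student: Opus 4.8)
The plan is to derive \Cref{lf} as essentially a one-line application of the black-box transformation of \Cref{warmupthm}, with the primal–dual-based dynamic SC algorithm of \cite{10756164} plugged in as the black box $\mathcal{ALG}$.

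\textbf{Step 1 (instantiate the black box).} Recall that \cite{10756164} gives, in the low-frequency regime, a deterministic fully-dynamic SC algorithm that, under element insertions and deletions with frequency always at most $f$ and set costs in $[1/C,1]$, maintains a $\brac{1+\epsilon'}f$-approximate SC with worst-case update time $O\brac{\frac{f\log n}{\epsilon'^{2}}}$, for any $\epsilon'\in(0,1)$. I would first verify that this guarantee matches the interface required by \Cref{warmupthm}, i.e.\ that it is exactly ``an $\alpha$-approximate SC in worst-case update time $T$'' with $\alpha:=\brac{1+\epsilon'}f$ and $T:=O\brac{\frac{f\log n}{\epsilon'^{2}}}$, including the weighted setting with aspect ratio $C$.

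\textbf{Step 2 (tune the accuracy split and apply the transformation).} I would set $\epsilon':=\epsilon/5$ and feed \Cref{warmupthm} the accuracy parameter $\epsilon'':=\epsilon/5$. Since $\epsilon<\tfrac14$ we have $\epsilon''\le 0.5$, and for $f\ge 2$ we have $\alpha=\brac{1+\epsilon'}f\ge 2$, so both hypotheses of \Cref{warmupthm} hold; the degenerate case $f=1$ is trivial, since then each element lies in a unique set and the unique minimal cover is maintainable with $O(1)$ worst-case recourse and $O(1)$ update time. \Cref{warmupthm} then produces a deterministic algorithm maintaining a $\brac{2+\epsilon''}\alpha$-approximate SC, and a short calculation gives $\brac{2+\tfrac{\epsilon}{5}}\brac{1+\tfrac{\epsilon}{5}}f=\brac{2+\tfrac{3\epsilon}{5}+\tfrac{\epsilon^{2}}{25}}f\le(2+\epsilon)f$ for every $\epsilon\in(0,\tfrac14)$, which is the claimed approximation factor.

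\textbf{Step 3 (propagate the overheads).} By \Cref{warmupthm} the update time becomes $T+O\brac{\frac{\alpha C}{\epsilon''}}=O\brac{\frac{f\log n}{\epsilon'^{2}}}+O\brac{\frac{(1+\epsilon')fC}{\epsilon''}}=O\brac{\frac{f\log n}{\epsilon^{2}}+\frac{fC}{\epsilon}}$, using $\epsilon',\epsilon''=\Theta(\epsilon)$ and $1+\epsilon'=O(1)$, and the worst-case recourse becomes $O\brac{\frac{\alpha C}{\epsilon''}}=O\brac{\frac{fC}{\epsilon}}$. Determinism is inherited because both the black box of \cite{10756164} and the reduction of \Cref{warmupthm} are deterministic.

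There is no genuine obstacle in this corollary — it is a direct substitution — so the only points demanding care are (i) confirming that the algorithm of \cite{10756164} really satisfies the precise interface assumed by \Cref{warmupthm} (worst-case update time, fully dynamic element updates, and weighted costs in $[1/C,1]$ rather than only the unweighted case), and (ii) splitting $\epsilon$ between the black box's accuracy $\epsilon'$ and the transformation's accuracy $\epsilon''$ so that the composed factor stays at most $(2+\epsilon)f$ while both additive overheads remain $\Theta(1/\epsilon)$-scaled; taking each equal to $\epsilon/5$ suffices.
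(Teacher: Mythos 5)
Your proposal is correct and matches the paper's approach exactly: \Cref{lf} is obtained by feeding the $((1+\epsilon)f)$-approximate, $O(f\log n/\epsilon^2)$ worst-case-update-time PD-based algorithm of \cite{10756164} into \Cref{warmupthm}. The paper states this as a direct consequence without spelling out the $\epsilon$-rescaling; your explicit split $\epsilon'=\epsilon''=\epsilon/5$ and the check of the hypotheses $\epsilon''\le 0.5$, $\alpha\ge 2$ (plus the trivial $f=1$ case) are the right routine details to fill in.
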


\noindent {\textbf{Remarks.}}
(1) \Cref{lf}
provides a recourse of $O(f)$ for constant $C$ (and $\epsilon$), and this is asymptotically optimal in the following sense: we show that any PD-based algorithm must have a recourse of $\Omega(f)$, even in the decremental setting. 
\noindent (2) For the minimum (weighted) {\em vertex cover} problem, by setting $f=2$ in \Cref{lf}, we can deterministically maintain
a $(4+\epsilon)$-approximate vertex cover in $O\brac{\frac{\log n}{\epsilon^2} + \frac{C}{\epsilon}}$  worst-case update time and with $O\brac{\frac{C}{\epsilon}}$ worst-case recourse. Despite a large body of work on dynamic vertex cover \cite{6108199,bhattacharya2017deterministic,bhattacharya2015design,bhattacharya2019new,bhattacharya2021dynamic,inbook123,6686191,10.1145/2700206,10.1145/1806689.1806753,doi:10.1137/1.9781611974331.ch51,7782946,Solomon18,10756164}, this is the first low worst-case recourse $O(1)$-approximation algorithm with low update time.

\bigskip

\noindent Feeding \Cref{warmupthm} the greedy-based algorithm of \cite{10756164} yields a worst-case recourse of $O\brac{\frac{\log n \cdot C}{\epsilon}}$. As our main technical contribution, we ``open the box'' to remove the $\log n$ factor from the recourse without harming any other parameter. We manage to do so by exploiting a certain \emph{robustness} property of the greedy-based algorithm.  
We say that a \emph{fixed} $\alpha$-approximate SC $X$ on the set system $(\univ,\sets)$ is \EMPH{robust} if following the deletion of up to $\delta \cdot \cost(X)$ {\em arbitrary} elements from $\univ$, for {\em any} $0 < \delta < 1$, $X$ is a $((1 + O(\delta)) \alpha)$-approximate SC. \footnote{We use $\delta$ and not $\epsilon$ since $\epsilon$ is reserved for the final approximation guarantee.} In \Cref{staticrob} we observe that, while the SC produced by the static PD algorithm is not robust, the static greedy algorithm produces a robust SC. A \emph{dynamic} SC algorithm is said to be robust if at any point in time the output SC is robust. As we will show in \Cref{dynamicrob}, although the notion of robustness is defined against a sequence of deletions, it naturally extends against any sequence of deletions and \emph{insertions}. We will prove there that the greedy-based dynamic algorithm of \cite{10756164} (with state-of-the-art worst-case update time) is robust (against deletions and insertions), which provides us with extra slack, as explained in \Cref{secim}, to prove the following:

\begin{thm} [High-Frequency] \label{hf}
For any set system $(\univ, \sets)$ that undergoes a sequence of element insertions and deletions, where the frequency is always bounded by $f$, and for any $\epsilon \in (0, \frac{1}{4})$, there is a deterministic algorithm that maintains a $((2+\epsilon)\ln n)$-approximate SC in $O\brac{\frac{f \cdot \log n}{\epsilon^2} + \frac{C}{\epsilon}}$  worst-case update time and with $O\brac{\frac{C}{\epsilon}}$ worst-case recourse.
\end{thm}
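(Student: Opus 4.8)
The plan is to generalize the black-box transformation of \Cref{warmupthm}, instantiated with the greedy-based algorithm $\mathcal{G}$ of \cite{10756164} (run with an internal parameter $\epsilon'=c\epsilon$ for a small constant $c$, so that $\mathcal{G}$ maintains a $(1+\epsilon')\ln n$-approximate SC in worst-case update time $O\brac{\frac{f\log n}{\epsilon^2}}$), but to exploit robustness to enlarge the rebuild phases by a $\Theta(\ln n)$ factor, which is exactly what shaves the $\ln n$ off the recourse. We run $\mathcal{G}$ in the background and never output its (high-recourse) solution directly. Instead, time is partitioned into phases; at the start of phase $i$ we freeze a snapshot $Z_i$ of $\mathcal{G}$'s current solution (so $\cost(Z_i)\le (1+\epsilon')\ln n\cdot\mathrm{OPT}$ at that moment), and over the course of phase $i$ we morph the maintained output from the previous phase's output $Y_{i-1}$ into $Y_i:=Z_i$ together with one ad-hoc ``patch'' set per element inserted during phase $i$. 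Concretely, phase $i$ has two sub-phases: in the first we gradually add all of $Z_i$ to the output (coverage is never violated, since $Y_{i-1}$ minus deletions plus patches already covers the current universe throughout); in the second we gradually delete the sets of $Y_{i-1}\setminus(Z_i\cup\text{patches})$, which are now redundant because $Z_i$ covers $\univ$ as of phase $i$'s start — hence also the current universe after deletions — and the patches handle the insertions. Each inserted element is immediately covered by a single new set of cost $\le 1$, contributing $O(1)$ to the recourse and $O(f)$ to the update time (together with the standard auxiliary counters of how many output sets cover each element).

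The crucial choice is the phase length: set $L_i:=\max\brac{1,\lceil\epsilon\cdot\max(\cost(Y_{i-1}),\cost(Z_i))\rceil}$. Since $Y_{i-1}$ and $Z_i$ each contain at most $C\cdot\cost(\cdot)$ sets, spreading the $O(|Y_{i-1}|+|Z_i|)$ additions and deletions uniformly over the $L_i$ steps of phase $i$ costs $O\brac{\frac{C}{\epsilon}}$ worst-case recourse per step (and the same in update time); the border cases where the $\max\{1,\cdot\}$ is active are exactly those in which $|Y_{i-1}|,|Z_i|<C/\epsilon$, so a one-shot rebuild already costs $O\brac{\frac{C}{\epsilon}}$ recourse. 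Note that $L_i=\Theta(\epsilon\cdot\cost(Z_i))=\Theta(\epsilon\ln n\cdot\mathrm{OPT})$ is larger by a $\Theta(\ln n)$ factor than the $\Theta(\epsilon\cdot\mathrm{OPT})$ phase length used inside \Cref{warmupthm} — the same total rebuild work $\Theta(\ln n\cdot\mathrm{OPT}\cdot C)$ spread over $\Theta(\ln n)$ times more steps — and this is precisely the source of the improved $O\brac{\frac{C}{\epsilon}}$ (rather than $O\brac{\frac{\ln n\cdot C}{\epsilon}}$) recourse. Feeding $\mathcal{G}$ in the background gives the $O\brac{\frac{f\log n}{\epsilon^2}}$ term in the update time, and the gradual rebuild gives the additive $O\brac{\frac{C}{\epsilon}}$.

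The heart of the argument is showing that such long phases do not hurt the approximation, and this is where robustness (\Cref{staticrob}, \Cref{dynamicrob}) enters. At any moment within phase $i$, the output is contained in $Y_{i-1}\cup Z_i\cup(\text{patches})$, so its cost is at most $\cost(Y_{i-1})+\cost(Z_i)+L_i\le(1+\epsilon)\brac{\cost(Y_{i-1})+\cost(Z_i)}$; the factor $2$ in the statement comes from this union of two $\ln n$-approximate covers during the transition. For the matching lower bound on the current $\mathrm{OPT}$, observe that at most $L_i$ updates have occurred since $Z_i$ was frozen. When $\cost(Z_i)\ge\cost(Y_{i-1})$ this is at most $\epsilon\cdot\cost(Z_i)$ updates, so by robustness of $\mathcal{G}$'s solution (extended to insertions and deletions, \Cref{dynamicrob}) the snapshot $Z_i$ — and hence the output — remains $\brac{(1+O(\epsilon))(1+\epsilon')\ln n}$-approximate throughout the phase; when instead $\cost(Y_{i-1})>\cost(Z_i)$ one argues the same way using the snapshot $Z_{i-1}$ underlying $Y_{i-1}$, against the $\le L_{i-1}+L_i$ updates elapsed since it was frozen. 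Choosing $c$ small enough then collapses $\brac{2+O(\epsilon)}(1+\epsilon')\ln n$ to $(2+\epsilon)\ln n$.

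The main obstacle — and the bulk of the work — is making the previous paragraph rigorous across all phases simultaneously: one must show the robustness invocations are always legitimate, i.e., that the number of updates since the relevant snapshot never exceeds a small constant times that snapshot's cost, and that $\cost(Z_i)$, $\cost(Y_i)$ and $\mathrm{OPT}$ cannot drift apart too fast from one phase to the next. The quantitative facts that make the induction close are: (i) $\cost(Y_i)\le\cost(Z_i)+L_i$, and $\cost(Z_{i+1})\le(1+\epsilon')\ln n\cdot\mathrm{OPT}_{i+1}$ with $\mathrm{OPT}_{i+1}\le\mathrm{OPT}_i+L_i$; and, most importantly, (ii) robustness forces $\mathrm{OPT}_{i+1}\ge(1-O(\epsilon))\,\cost(Z_i)/\brac{(1+\epsilon')\ln n}$, so $\mathrm{OPT}$ can never collapse relative to the cost of the maintained solution — even though, over a phase of length $\Theta(\epsilon\ln n\cdot\mathrm{OPT})$, $\mathrm{OPT}$ itself may swing by a polylogarithmic factor, which is exactly the behavior that a direct application of \Cref{warmupthm} cannot tolerate. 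Handling $\mathrm{OPT}=0$ and the tiny-$\mathrm{OPT}$ regime (where a phase degenerates to a single step) is routine given the $\max\{1,\cdot\}$ in the definition of $L_i$.
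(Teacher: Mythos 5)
Your overall architecture matches the paper's: run the greedy-based algorithm of \cite{10756164} in the background, lengthen each rebuild interval to $\Theta\brac{\epsilon\cdot\max(\cost(Y_{i-1}),\cost(Z_i))}$ (a $\ln n$ factor longer than in \Cref{warmupthm}), spread the $O(C\cdot\cost)$ set changes over the interval to get $O(C/\epsilon)$ worst-case recourse, and use robustness (\Cref{robusthf}) to argue that the approximation survives the longer interval. The recourse, update-time and feasibility claims are fine. The case $\cost(Z_i)\ge\cost(Y_{i-1})$ is also handled exactly as in the paper.

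The gap is in the other case, $\cost(Y_{i-1})>\cost(Z_i)$, which is precisely the regime the whole section exists to handle (it arises when deletions drive $\mathrm{OPT}$ down quickly, so the background solution shrinks while the output lags behind). You propose to invoke robustness on the one-phase-older snapshot $Z_{i-1}$ against the $\le L_{i-1}+L_i$ updates since it was frozen. But $L_{i-1}=\epsilon\cdot\max(\cost(Y_{i-2}),\cost(Z_{i-1}))$, and in this regime $\cost(Y_{i-2})$ can exceed $\cost(Z_{i-1})$ by up to a $\Theta(\ln n)$ factor, so $L_{i-1}+L_i$ is \emph{not} $O(\epsilon)\cdot\cost(Z_{i-1})$ and \Cref{robusthf} cannot be applied to $Z_{i-1}$ with $\delta=O(\epsilon)$. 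The same problem undermines your quantitative fact (ii): the bound $\mathrm{OPT}_{i+1}\ge(1-O(\epsilon))\cost(Z_i)/\brac{(1+\epsilon')\ln n}$ is itself derived from a robustness invocation on $Z_i$ over a phase of length $L_i=\epsilon\cost(Y_{i-1})\gg\epsilon\cost(Z_i)$, which is exactly the invocation whose legitimacy is in doubt, so the induction does not close. The paper's fix is to go back not one phase but to an \emph{anchor} $i^*\le i$, defined as the last interval with $\cost(\mathcal{X}_{i^*})\le 3\cost(\mathcal{B}_{i^*})$; it then proves (Claims \ref{equa9} and \ref{anchor}) that in every intermediate interval the output cost decays geometrically by a factor $(2+\epsilon)/6$, so the \emph{total} number of updates from the start of interval $i^*$ to the end of interval $i$ telescopes to at most $\frac{9}{10}\epsilon\cdot\cost(\mathcal{B}_{i^*})$, legitimizing a single application of \Cref{robusthf} to the anchor, after which both the source and target costs in interval $i$ are shown to be $(1+O(\epsilon))\cost(\mathcal{B}_{i^*})$. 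Without this multi-phase geometric-decay argument (or an equivalent potential-function argument), your proof does not go through.
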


\noindent {\textbf{Remark.}} For constant $C$, and in particular unweighted instances ($C=1$), \Cref{hf} gives an \emph{asymptotically optimal} worst-case recourse, with the state-of-the-art worst-case update time, and with twice the near-optimal approximation.

\bigskip

\noindent As a direct corollary of \Cref{hf}, we obtain a result for the {\em minimum dominating set} (DS) problem. In the DS problem, we are given a graph $G = (V,E)$, where $n=|V|$, and each vertex has a cost assigned to it. The goal is to find a subset of vertices $D \subseteq V$ of minimum total cost, such that for any vertex $v \in V$, either $v \in D$ or $v$ has a neighbor in $D$. In the dynamic setting, the adversary inserts/deletes an edge upon each update step. We derive the result for the DS problem, improving previous results \cite{hjuler_et_al:LIPIcs.STACS.2019.35,solomon2023dynamic}, via a simple reduction to the SC problem (described in Section 6 in \cite{10756164}), which allows us to use our SC algorithm provided by \Cref{hf} as a black box.

\begin{thm} [Dominating Set] \label{ds}
For any graph $G = (V,E)$ that undergoes a sequence of edge insertions and deletions, where the degree is always bounded by $\Delta$, and for any $\epsilon \in (0, \frac{1}{4})$, there is a dynamic algorithm that maintains a $((2+\epsilon)\ln n)$-approximate minimum weighted dominating set in $O\brac{\frac{\Delta \cdot \log n}{\epsilon^2} + \frac{C}{\epsilon}}$ deterministic worst-case update time and with $O\brac{\frac{C}{\epsilon}}$ worst-case recourse.
\end{thm}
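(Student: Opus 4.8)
The plan is to obtain \Cref{ds} as a direct consequence of \Cref{hf} via the folklore reduction from (weighted) dominating set to set cover, in the dynamic form described in Section~6 of \cite{10756164}, using the set cover algorithm of \Cref{hf} as a black box. First I would recall the static reduction: given $G=(V,E)$ with vertex costs, build a set system $(\univ,\sets)$ whose ground set $\univ$ has one element $e_v$ per vertex $v\in V$ (the ``domination requirement'' of $v$), and whose family $\sets$ has one set $s_w=\{e_w\}\cup\{e_v : v\in N(w)\}$ per vertex $w\in V$, with $\cost(s_w)$ equal to the cost of $w$. A subfamily $\{s_w : w\in D\}$ covers $\univ$ exactly when $D$ dominates $G$, and the two costs coincide, so minimum set covers are in a cost-preserving bijection with minimum dominating sets; in particular any $\alpha$-approximate cover yields an $\alpha$-approximate dominating set, and each change to the maintained cover (inserting or deleting a single set $s_w$) corresponds to exactly one change to the maintained dominating set (inserting or deleting $w$). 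Hence the approximation ratio and the recourse transfer with no loss.

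The second step is to reconcile this with the dynamic model, in which $\sets$ is fixed while $\univ$ is updated, whereas an edge update on $G$ perturbs the contents of two sets $s_u,s_v$. The fix is to simulate an insertion (resp.\ deletion) of edge $(u,v)$ by deleting the elements $e_u,e_v$ from $\univ$ and immediately re-inserting them with their updated membership lists (each $e_x$ is re-inserted as belonging to $s_x$ and to $s_y$ for the current neighbors $y$ of $x$). This costs $O(1)$ element updates per edge update; $\sets$ and the costs are untouched so the aspect ratio $C$ is preserved; $|\univ|=|V|=n$ throughout; and the frequency of $e_v$ is $\deg(v)+1\le \Delta+1$, so $f=O(\Delta)$ at all times. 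One should note that $e_v$ is never genuinely uncovered during this process (it is either absent from $\univ$ or contained in the present set $s_v$), so the simulated sequence is a legal input to the set cover algorithm.

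The final step is to run the algorithm of \Cref{hf} on the simulated instance. With $f=\Theta(\Delta)$ it maintains a $((2+\epsilon)\ln n)$-approximate cover in worst-case update time $O\brac{\frac{\Delta\log n}{\epsilon^2}+\frac{C}{\epsilon}}$ per element update and with $O\brac{\frac{C}{\epsilon}}$ worst-case recourse; since each edge update triggers only $O(1)$ element updates, the same bounds hold per edge update up to the hidden constant, and by the bijection above they translate into a $((2+\epsilon)\ln n)$-approximate dominating set with the stated worst-case update time and worst-case recourse, determinism being inherited from \Cref{hf}.

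There is essentially no technical obstacle here beyond the modeling point above: the only thing that needs care is checking that the delete-and-re-insert simulation keeps the frequency bound $f=O(\Delta)$ and the cost range $[1/C,1]$ invariant at every intermediate step, and that reporting the at most $O(C/\epsilon)$ changes after each edge update fits within the claimed update time; both are immediate. This matches, and in the relevant parameter regimes improves on, the prior dynamic dominating set results of \cite{hjuler_et_al:LIPIcs.STACS.2019.35,solomon2023dynamic}.
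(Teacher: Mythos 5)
Your proposal is correct and follows essentially the same route as the paper, which derives \Cref{ds} from \Cref{hf} precisely via the closed-neighborhood set-system reduction of Section~6 of \cite{10756164} (one element and one set per vertex, edge updates simulated by $O(1)$ element deletions/re-insertions, frequency $\deg(v)+1\le\Delta+1$). The paper merely cites that reduction as a black box, whereas you spell out its details; the substance is identical.
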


\subsection{Organization}
\Cref{robustsec} is devoted to the robustness property. We first (\Cref{staticrob}) observe that the SC produced by the static greedy algorithm is robust, whereas the one produced by the static PD algorithm is not. We then (\Cref{dynamicrob}) strengthen the observation on the robustness of the static greedy algorithm by proving that the greedy-based {\em dynamic} algorithm of \cite{10756164}, which achieves the current state-of-the-art worst-case update time, is also robust. \Cref{techsec} is devoted to the proof of \Cref{warmupthm}, which, when combined with the state-of-the-art PD-based algorithm of \cite{10756164}, directly implies \Cref{lf}. When \Cref{warmupthm} is combined with the state-of-the-art greedy-based algorithm of \cite{10756164}, a worst-case recourse of $O\brac{\frac{\log n \cdot C}{\epsilon}}$ is obtained; 
by carefully exploiting the robustness property proved in \Cref{dynamicrob},
in \Cref{secim} we demonstrate that the $\log n$ factor can be removed from this recourse bound to achieve a recourse of $O(\frac{C}{\epsilon})$, thus proving \Cref{hf}.

\section{The Robustness Property} \label{robustsec}

The main result of this section is  \Cref{robusthf},
which asserts that the greedy-based dynamic algorithm of \cite{10756164} is robust. This robustness property is crucial for shaving off the $\log n$ term from the recourse, thereby proving \Cref{hf} (see \Cref{secim} for details). We begin 
in \Cref{staticrob}
by observing (\Cref{obs:warm}) that the \emph{static} greedy algorithm produces a robust SC, and then note that the static PD algorithm is not robust. We view \Cref{obs:warm} as a warm-up for the main robustness lemma 
of the dynamic algorithm \cite{10756164} (\Cref{robusthf}), but it might also be of independent interest.

\subsection{The Robustness Property I: The Greedy and PD Static Algorithms} \label{staticrob}

\begin{obs}[Warm-up: Static greedy is robust] \label{obs:warm}
Let $(\univ,\sets)$ be any set system and let $X$ be a SC returned by the classic greedy algorithm, and $OPT$ the optimum value. Let $D \subseteq \univ$ be any set of deleted elements with $|D| \leq \delta \cdot \cost(X)$ for any $0 < \delta < 1$, and let $OPT'$ be the optimum value after these deletions.
Then: $$\frac{\cost(X)}{OPT'} \;\le\; \frac{H_n}{1-\delta} \;\leq\; (1+O(\delta))\,\ln n,$$

\noindent where $n:=|\univ|$ and $H_n$ is the $n$-th harmonic number.
\end{obs}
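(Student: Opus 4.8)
The plan is to use the dual-fitting interpretation of the greedy algorithm that the introduction already recalled. When greedy covers the universe, it processes elements in the order they get covered and charges each newly covered element a "price" equal to $\cost(s)/|s \cap (\text{currently uncovered})|$ for the set $s$ it picked; summing these prices over all elements equals $\cost(X)$ exactly. The standard fact is that if we define $y_e$ to be the price charged to element $e$, then $y/H_n$ is a feasible dual (fractional packing) for the set system $(\univ,\sets)$, hence $\sum_{e} y_e / H_n \le OPT$, i.e. $\cost(X) = \sum_e y_e \le H_n \cdot OPT$. This is just the classical analysis; I would state it as the first step.

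The second step is to observe that the same dual vector $y$, restricted to the surviving elements $\univ \setminus D$, is still a feasible dual for the reduced instance $(\univ \setminus D, \sets)$ — deleting elements only removes constraints from the covering LP, equivalently removes coordinates from the packing LP, so feasibility is preserved. Therefore $\sum_{e \in \univ \setminus D} y_e \le H_n \cdot OPT'$. Now the key inequality is
\[
\cost(X) = \sum_{e \in \univ} y_e = \sum_{e \in \univ \setminus D} y_e + \sum_{e \in D} y_e \le H_n \cdot OPT' + \sum_{e \in D} y_e.
\]
So I need an upper bound on the total price of the deleted elements.

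The third step — and the only real content — is bounding $\sum_{e \in D} y_e$. Here I would use that greedy's prices are at most $1$ for unweighted-style instances, but since costs lie in $[1/C,1]$ the prices can only be as large as... hmm, actually each price is $\cost(s)/(\text{number newly covered}) \le \cost(s) \le 1$, so every $y_e \le 1$. Wait — I should double-check whether the observation's bound $H_n/(1-\delta)$ is consistent with $|D| \le \delta \cdot \cost(X)$ and $y_e \le 1$: indeed $\sum_{e\in D} y_e \le |D| \le \delta\,\cost(X)$, which gives exactly $\cost(X) \le H_n\,OPT' + \delta\,\cost(X)$, hence $\cost(X)(1-\delta) \le H_n\, OPT'$, i.e. $\cost(X)/OPT' \le H_n/(1-\delta)$. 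So the bound $y_e \le 1$ is precisely what makes it work, and it holds because $\cost(s)\le 1$ for every set. I would spell this out. The final inequality $H_n/(1-\delta) \le (1+O(\delta))\ln n$ is a routine estimate using $1/(1-\delta) = 1 + O(\delta)$ for $\delta$ bounded away from $1$ together with $H_n \le (1+o(1))\ln n$ (or just $H_n \le \ln n + 1$, absorbing constants).

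The main obstacle is really just making sure the dual-fitting facts are quoted cleanly: that greedy's prices sum to $\cost(X)$, that scaling by $H_n$ yields feasibility, and — the one point genuinely specific to this weighted-with-aspect-ratio-$C$ setting — that each individual price is at most $1$ (not $1/C$ or anything involving $C$), which is what reduces the "lost" mass $\sum_{e \in D} y_e$ to at most $|D| \le \delta\,\cost(X)$. Everything else is bookkeeping. I would present it in four short movements: (i) recall greedy dual fitting and $y_e \le 1$; (ii) split $\cost(X)$ over $D$ and $\univ\setminus D$; (iii) apply feasibility of $y$ on the reduced instance to bound the surviving part by $H_n\,OPT'$ and the deleted part by $\delta\,\cost(X)$; (iv) rearrange and simplify.
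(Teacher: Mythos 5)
Your proposal is correct and follows essentially the same route as the paper's proof: dual fitting with prices summing to $\cost(X)$, the observation that each price is at most $1$ (so the deleted mass is at most $|D|\le\delta\cdot\cost(X)$), feasibility of the restricted dual on the surviving instance, and weak duality plus rearrangement. The paper phrases the split via restricted weights $q'(e)$ rather than your additive decomposition of $\cost(X)$, but the two are algebraically identical.
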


\begin{proof}
We analyze the greedy algorithm via LP duality. For weighted SC:
\[
\begin{aligned}
  \text{(P)}\quad & \min \sum_{S\in\mathcal{S}} \cost(S) \cdot x_S
  && \text{s.t. } \sum_{S\ni e} x_S \ge 1 && \forall e\in U, \quad x_S\ge 0, \\[3pt]
  \text{(D)}\quad & \max \sum_{e\in U} y_e
  && \text{s.t. } \sum_{e\in S} y_e \le \cost(S) && \forall S\in\mathcal{S}, \quad y_e\ge 0.
\end{aligned}
\]

\noindent In the $j$'th iteration, when the greedy algorithm selects a set $S_j$ that newly covers $c_j$ elements, assign to each such element $e$ a 
weight: $q(e) := \frac{\cost(S_j)}{c_j}$.
Greedy’s total cost 
equals the total weight assigned:
$\sum_{e\in \univ} q(e) = \cost(X)$.
Also, by the standard analysis, for any $S\in\mathcal{S}$: 
$$\sum_{e\in S} q(e) \le H_{|S|} \cdot \cost(S) \le H_n \cdot \cost(S),$$
\noindent so scaling $y_e := \frac{q(e)}{H_n}$ makes $y$ dual-feasible for (D).
By weak duality, $OPT \geq \sum_e y_e = \frac{\cost(X)}{H_n}$, giving Greedy’s $H_n$-approximation. Let $D\subseteq \univ$ be any set of deleted elements, and define the following {\em restricted weights} $q'(e)$ to be $0$ if $e \in D$ and $q(e)$ otherwise.
Each $q(e)\le 1$, hence the total removed weight satisfies $\sum_{e\in D} q(e) \leq |D| \leq \delta \cdot \cost(X)$.
Therefore, the non-deleted weight is:
\begin{equation}\label{eq:remain}
  \sum_{e \in \univ \setminus D} q'(e) = \cost(X) - \sum_{e\in D} q(e) \ge (1-\delta) \cdot \cost(X).
\end{equation}
For every set $S$, $\sum_{e\in S} q'(e)\le \sum_{e\in S} q(e)\le H_n \cdot \cost(S)$,
so 
$y'(e):= \frac{q'(e)}{H_n}$ is dual-feasible for the surviving instance $(\univ \setminus D,\mathcal{S})$. By weak duality,
\[
  OPT' \;\ge\; \sum_{e \in \univ \setminus D}  y'(e)
  = \frac{1}{H_n}\sum_{e \in \univ \setminus D} q'(e)
  \;\geq\; \frac{1-\delta}{H_n}\,\cost(X).
\]
Rearranging 
yields the claimed robustness bound:
\[
  \frac{\cost(X)}{OPT'} \;\le\; \frac{H_n}{1-\delta}
  \;\leq\; (1+O(\delta))\, \ln n.
\]
\end{proof}

\noindent {\textbf{Remark.}} Replacing $H_n$ with $H_d$, where $d=\max_{S\in\mathcal{S}}|S|$, gives:

\[
\displaystyle \frac{\cost(X)}{OPT'}\le\frac{H_d}{1-\delta} \leq (1+O(\delta)) \ln d.
\]

\paragraph{\bf The PD algorithm is not robust.} Consider the following canonic unweighted instance: For each element $e_i$, where $i = 1, \ldots ,n$, create $f$ sets $S^1_i, \ldots , S^f_i$ that contain only $e_i$. The PD algorithm could add all $n \cdot f$ sets to the SC, while $OPT$ is of size $n$.
If $\ell$ elements are deleted, for any $\ell$, we have $|OPT| = n-\ell$.
In particular, for  $\ell = n-O(1)$ (the extreme case is when $\ell = n$), $OPT$ is of size $O(1)$ (or 0); thus already for $\delta = 1/f$, after deleting a $\delta$ fraction of the elements in the solution produced by the PD algorithm, the approximation  may become arbitrarily large, thus the PD algorithm is not robust (and actually far from robust). 
Furthermore, if $\ell = n-O(1)$, we will need to delete $n \cdot f - A$ sets from the set cover to achieve an approximation factor of $O(A)$, which gives an amortized recourse of $\Omega(f)$ for any reasonable approximation.
This shows the asymptotic optimality of the recourse bound provided by \Cref{lf} (for constant $C$ and $\delta$): any PD-based algorithm must have a recourse of $\Omega(f)$, even amortized and even in the decremental setting.

{Although this example with singleton sets is degenerate,
as there is no reason to take more than one set to cover any element, 
one can use a less degenerate example with more dummy sets and elements.
The key point remains: if a single deletion can cause $|OPT|$ to drop, this does not mean that before the deletion the approximation was strictly better than $f$, as is with the greedy algorithm.
}

\subsection{The Robustness Property II: The Dynamic Algorithm 
\cite{10756164}} \label{dynamicrob}

In this section we prove that the greedy-based algorithm of \cite{10756164} is robust. In fact, we prove a stronger variant of this property, that the algorithm is also ``robust against insertions'' in a sense, as defined below.

\begin{lem} [\cite{10756164} is Robust - Greedy-Based] \label{robusthf}
Consider a solution given by \cite{10756164} denoted by $\mathcal{B}$, yielding an approximation factor of $(1 + \epsilon) \cdot \ln n$. Then throughout the next $\delta \cdot \cost(\mathcal{B})$ update steps (for any $0 \leq \delta \leq 1$) where all we do is maintain $\mathcal{B}$ naively (adding an arbitrary set containing each inserted element to the solution), the approximation factor is $(1 + O(\delta))(1 + \epsilon) \cdot \ln n$.
\end{lem}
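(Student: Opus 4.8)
\textbf{Proof plan for \Cref{robusthf}.}
The plan is to mimic the dual-fitting argument of \Cref{obs:warm}, but now tracking how the relevant dual solution degrades over a window of $\delta \cdot \cost(\mathcal{B})$ updates during which we only add sets naively. The starting point is that the greedy-based algorithm of \cite{10756164} admits a dual-fitting certificate analogous to the static greedy one: there are nonnegative element-charges $q(\cdot)$ (assigned when an element first becomes covered, inversely proportional to the ``bang-per-buck'' of the set that covered it) such that (i) $\sum_{e} q(e)$ is within a $(1+\epsilon)$ factor of $\cost(\mathcal{B})$, (ii) each $q(e) \le 1$ (or at most $\cost(S)\le 1$ for the set $S$ charging it), and (iii) for every set $S$, $\sum_{e\in S} q(e) \le (1+\epsilon) H_n \cdot \cost(S)$, so that $y := q/((1+\epsilon)H_n)$ is dual-feasible. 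I would first invoke (or re-derive from \cite{10756164}) this certificate at the moment the window begins; if the paper's black-box guarantee only asserts the $(1+\epsilon)\ln n$ approximation, I would note that their analysis is itself a dual-fitting analysis and extract the charges from it.

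Next I would argue that after $k \le \delta\cdot\cost(\mathcal{B})$ naive updates, a dual-feasible solution for the \emph{current} instance $(\univ_k,\sets)$ that certifies the claimed bound still exists. The key observations: deletions only remove elements and hence only remove charges (dual feasibility is preserved for surviving elements, exactly as in \eqref{eq:remain}); naive insertions add a set to $\mathcal{B}$ but we simply set the new element's charge to $0$, which keeps every set constraint $\sum_{e\in S} q'(e)\le (1+\epsilon)H_n\cost(S)$ intact. So $y' := q'/((1+\epsilon)H_n)$ remains dual-feasible throughout the window, giving $OPT_k \ge \sum_e q'(e)/((1+\epsilon)H_n)$. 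It remains to control the numerator from below and the current cost from above. Each of the $k$ updates changes $\sum_e q(e)$ by at most $1$ in absolute value: a deletion removes at most one charge of value $\le 1$, an insertion adds a zero charge. Hence $\sum_e q'(e) \ge \sum_e q(e) - k \ge (\cost(\mathcal{B})/(1+\epsilon)) - \delta\cost(\mathcal{B})$ at any point in the window; simultaneously the current cost $\cost(\mathcal{B}_k)$ is at most $\cost(\mathcal{B}) + k \le (1+\delta)\cost(\mathcal{B})$, since each naive insertion adds a set of cost $\le 1$ and deletions only remove sets (or leave them). Combining, the current approximation ratio is at most
\[
\frac{\cost(\mathcal{B}_k)}{OPT_k} \;\le\; \frac{(1+\delta)\cost(\mathcal{B})}{\bigl((1+\epsilon)^{-1} - \delta\bigr)\cost(\mathcal{B})}\cdot (1+\epsilon)H_n \;=\; \frac{(1+\delta)(1+\epsilon)}{1 - \delta(1+\epsilon)}\,(1+\epsilon)H_n,
\]
which for $\delta \le 1$ (and the restriction $\epsilon\le 1/4$) is $(1+O(\delta))(1+\epsilon)\ln n$, as claimed; here I use $H_n \le (1+o(1))\ln n$ and absorb lower-order terms, possibly shrinking the window slightly (e.g.\ to $\delta/2$) so that the denominator stays bounded away from zero.

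The main obstacle I anticipate is \emph{legitimately obtaining the charge certificate from the dynamic algorithm of \cite{10756164}} rather than from the static greedy algorithm. The dynamic algorithm maintains its solution through a hierarchical/level structure and periodic rebuilds, so its ``dual'' is not the textbook greedy dual; I need to confirm that at every point in time its output $\mathcal{B}$ carries charges $q(e)$ with the three properties above with the \emph{same} constants (in particular $q(e)\le 1$, which is what makes the per-step degradation at most $1$). If \cite{10756164} only guarantees the approximation ratio and not an explicit bounded-charge dual, a fallback is to observe that \emph{any} $\alpha$-approximate solution $X$ admits a dual $y$ with $\sum_e y_e \ge \cost(X)/\alpha$ only if $OPT \ge \cost(X)/\alpha$, which is automatic — but then bounding $\sum_e y_e$ after deletions requires knowing the $y_e$ are bounded, which is precisely the robustness content and not free for arbitrary solutions (indeed the PD example shows it can fail). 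So the crux is to verify that the greedy-based dynamic algorithm's internal accounting yields bounded charges; once that is in hand, the rest is the bookkeeping above. A secondary, minor point is handling the bound $|D| \le \delta\cost(\mathcal{B})$ when both insertions and deletions occur: I would phrase it as ``at most $\delta\cost(\mathcal{B})$ update steps total,'' track $\cost(\mathcal{B})$ at the start of the window, and note that $\cost(\mathcal{B})$ cannot shrink too much during the window either (it only decreases on deletions, by at most $1$ each), so normalizing by the initial $\cost(\mathcal{B})$ is safe up to the $(1+O(\delta))$ slack.
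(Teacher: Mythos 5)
Your plan is the same in spirit as the paper's proof: both treat the dynamic algorithm's internal accounting as a bounded dual-fitting certificate and track how it degrades over the window, with deletions removing at most one unit of ``charge'' each and insertions adding at most one unit of cost each. Your bookkeeping in the second paragraph matches the paper's Part II almost exactly (including the caveat that the denominator must stay bounded away from zero, which the paper glosses over but which is harmless since the lemma is only ever invoked with $\delta = O(\epsilon)$).

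The gap is the one you yourself flag: you never actually produce the charge certificate, and that is the substantive content of the lemma. The paper obtains it by opening up the level structure of \cite{10756164}: the charge of an element is $q(e) := \beta^{-\lev(e)} - \beta^{-\plev(e)}$, where $\lev(e)$ and $\plev(e)$ are the element's level and passive level. Property (ii) ($q(e)\le 1$) is immediate since $\lev(e)\ge 0$; property (i) ($\sum_e q(e) \ge (1-O(\epsilon))\cost(\mathcal{B})$) follows from the second invariant ($|\cov(s)|/\cost(s)\ge\beta^{\lev(s)}$ for $s\in\mathcal{B}$) combined with a \emph{relaxed} form of the third invariant that controls the total weight of passive elements level-by-level --- the paper explicitly notes that the original third invariant of \cite{10756164} must be weakened to a weighted version for the robustness argument to go through; and property (iii) (per-set feasibility $\sum_{e\in S}q(e)\le(1+O(\epsilon))\ln n\cdot\cost(S)$) follows from the first invariant ($|N_k(s)|/\cost(s)<\beta^{k+1}$) via a three-case summation over levels. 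Deletions are handled by keeping dead elements in a superset $\univ^+$ with their passive level raised so that their charge drops to zero, which is exactly your ``remove at most one charge of value $\le 1$'' step. So your fallback worry is well founded --- the certificate is not free and does not follow from the approximation ratio alone (as your own PD counterexample shows) --- but it is available, and extracting it from the invariants is where the real work of the proof lies. As written, your proposal is a correct reduction of the lemma to that extraction, not a proof of the lemma.
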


\subsubsection{Preliminaries}

We first go over the main definitions of \cite{10756164}. Let $\beta = 1+\epsilon$. All sets $s \in \mathcal{S}$ are assigned a level value $\lev(s)\in [-1, L]$ where $L = \ceil{\log_\beta(Cn)} + \ceil{10\log_\beta 1/\epsilon}$. Throughout the algorithm, a valid set cover $\mathcal{B}\subseteq \sets$ is maintained for all elements. Each element $e\in \univ$ is assigned to one of the sets $s\in \mathcal{B}$, which we denote by $\asn(e)$, and conversely, for each set $s\in \sets$, define its {\em covering set} $\cov(s)$ to be the collection of elements in $s$ that are assigned to $s$, namely $\cov(s) = \{e\mid \asn(e) = s\}$. The level of an element $e$ is defined as the level of the set it is assigned to, namely $\lev(e) = \lev(\asn(e))$, and we are guaranteed that $\lev(e) = \max\{lev(s) \vert s \ni e\}$, meaning $e$ is assigned to the set with the highest level containing $e$. We define the level of each set $s \notin \mathcal{B}$ to be $-1$, whereas the level of each set $s \in \mathcal{B}$ will lie in  $[0,L]$, so in particular we will have $\lev(e)\in [0, L]$, for each element $e \in \mathcal{U}$. Let $S_i = \{s\mid \lev(s) = i \}, \forall i\in [-1, L]$, and $E_i = \{e\in \univ\mid \lev(e) = i\}, \forall i\in [0, L]$.

Besides the level value $\lev(e)$ for elements $e$, a value of \emph{passive level} $\plev(e)$ such that $\lev(e)\leq \plev(e) \leq L$ is also maintained, which plays a major role in the algorithm. In contrast to the level $\lev(e)$ of an element $e$, which may decrease (as well as increase) by the algorithm, its passive level $\plev(e)$ is monotonically non-decreasing throughout its lifespan.

An element is said to be \emph{dead} if it was deleted by the adversary, hence it is supposed to be deleted from $\univ$ --- but it currently resides in $\univ$ as the algorithm has not removed it yet. An element is said to be \emph{alive} if it is not dead. To avoid confusion, the notation $\univ^+\supseteq \univ$ is used to denote the set of all dead and alive elements (i.e., the elements in the view of the algorithm), while $\univ$ is the set of alive elements (i.e., the elements in the eye of the adversary). We next introduce the following key definitions.

\begin{def} \label{actpass}
For each level $k$, 
an element 
$e\in \univ^+$
is called \EMPH{$k$-active} (respectively, \EMPH{$k$-passive}) if 
$\lev(e) \leq k< \plev(e)$
(resp., $\plev(e)\leq k$)
and let $A_k = \{e\in \univ^+\mid \lev(e) \leq k< \plev(e)\}$ and $P_k = \{e\in \univ^+\mid \plev(e)\leq k\}$
be the sets of all $k$-active and $k$-passive elements, respectively.
Notice that $A_k \cup P_k$ is the collection of all elements at level $\leq k$, and $A_k \cap P_k = \emptyset$. Moreover, if $A_k \cap P_{j} \neq \emptyset$ for two levels $k \ne j$, then $k < j$. For each set $s\in \sets$, define $N_k(s) = A_k\cap s$.
\end{def}

\subsubsection{Invariants}

The algorithm of \cite{10756164} maintains the following three invariants. The first and second are identical to the ones in presented in \cite{10756164}, but the third here is a \emph{relaxed} version of the third there, which originally demanded that for each $k \in [0, L]$, we have $|P_k|\leq 2\epsilon \cdot|A_k|$. This relaxation is crucial for the robustness proof.

\begin{enumerate}

	\item For any set $s\in \sets$ and for each $k \in [0, L]$, we have $\frac{|N_k(s)|}{\cost(s)} < \beta^{k+1}$.

	\item For any set $s\in \mathcal{B}$, we have $\frac{|\cov(s)|}{\cost(s)}\geq \beta^{\lev(s)}$; we note that $\cov(s)$ may include dead elements, i.e., elements in $\univ^+ \setminus \univ$. In particular, $\lev(s)\leq \ceil{\log_\beta(Cn)}$. Moreover, for each $s\notin \mathcal{B}$, $\lev(s) = -1$.
		
	\item $\sum_{k=0}^{L-1}\brac{\beta^{-k} - \beta^{-k-1}} \cdot |P_k|\leq 2\epsilon \cdot \sum_{k=0}^{L-1} \brac{\beta^{-k} - \beta^{-k-1}} \cdot|A_k|.$

\end{enumerate}

\subsubsection{Proof of \Cref{robusthf}}    

For the most part, the proof of \Cref{robusthf} follows the proof of the approximation factor in \cite{10756164}, except in one key area. We will present the entire formal proof and state exactly where this deviation occurs. Specifically, the first and third parts will be similar to \cite{10756164}, whereas the second part will be new. 

\paragraph{Part I - Similar to \cite{10756164}.}
The left hand side of the third invariant becomes:
	$$\begin{aligned}
		\sum_{k = 0}^{L-1}(\beta^{-k} - \beta^{-k-1})\cdot |P_k| 
		&= \sum_{e\in \univ^+}\sum_{k = 0}^{L-1}(\beta^{-k} - \beta^{-k-1}) \cdot \ind[e\in P_k]\\
		&=\sum_{e\in \univ^+}\sum_{k = \plev(e)}^{L-1}(\beta^{-k} - \beta^{-k-1})\\
		& = \sum_{e\in \univ^+}\brac{\beta^{-\plev(e)}-\beta^{-L}}
	\end{aligned}$$
	and the right-hand side of the third invariant becomes:
	$$\begin{aligned}
		2\epsilon\sum_{k = 0}^{L-1}(\beta^{-k} - \beta^{-k-1})\cdot |A_k| 
		&= 2\epsilon\sum_{e\in \univ^+}\sum_{k = 0}^{L-1}(\beta^{-k} - \beta^{-k-1}) \cdot \ind[e\in A_k]\\
		&= 2\epsilon\sum_{e\in \univ^+}\sum_{k = \lev(e)}^{\plev(e)-1}(\beta^{-k} - \beta^{-k-1})\\
		& = 2\epsilon\sum_{e\in \univ^+}\brac{\beta^{-\lev(e)}-\beta^{-\plev(e)}},
	\end{aligned}$$
	which yields:	
	$$\sum_{e\in \univ^+}\brac{\beta^{-\plev(e)}-\beta^{-L}}\leq 2\epsilon\cdot \sum_{e\in \univ^+}\brac{\beta^{-\lev(e)} - \beta^{-\plev(e)}}$$
	or equivalently, by adding $\sum_{e\in \univ^+}\brac{\beta^{-\lev(e)} - \beta^{-\plev(e)}}$ on the both sides, 
	\begin{equation} \label{eq:basicub1}
	\sum_{e\in \univ^+}\brac{\beta^{-\lev(e)}-\beta^{-L}} ~\leq~ (1+2\epsilon)\cdot \sum_{e\in \univ^+}\brac{\beta^{-\lev(e)} - \beta^{-\plev(e)}}.
	\end{equation}

\noindent We emphasize the point that $\univ^+$ also includes dead elements. We also have that:
\begin{equation} \label{eq:basicub11}
\begin{aligned}
	\cost(\mathcal{B}) &= \sum_{s\in \mathcal{B}}\cost(s)\leq \sum_{s\in \mathcal{B}}\beta^{-\lev(s)}\cdot |\cov(s)| ~=~ \beta\cdot \sum_{s\in \mathcal{B}}\sum_{e\in \cov(s)\cap\univ^+}\beta^{-\lev(e)} 
	\\ &\le (1+O(\epsilon))\cdot \sum_{e\in\univ^+}\brac{\beta^{-\lev(e)} - \beta^{-L}},
\end{aligned}
\end{equation}
where the first inequality holds by the second invariant and the second holds as $\lev(e)\leq L/2$ and hence $\beta^{-\lev(e)} - \beta^{-L} \ge \beta^{-\lev(e)}(1- \beta^{-\ceil{10\log_\beta 1/\epsilon}}) \ge \beta^{-\lev(e)}(1-\eps)$. 
Combining \Cref{eq:basicub1} and \Cref{eq:basicub11} we obtain:
\begin{equation} \label{eq:basicub111}
	\cost(\mathcal{B}) ~\leq~ (1+O(\epsilon))\cdot \sum_{e\in \univ^+}\brac{\beta^{-\lev(e)} - \beta^{-\plev(e)}}.
	\end{equation}

\paragraph{Part II - New Ideas.}
Our goal will be to show that following $\delta \cdot \cost(\mathcal{B})$ updates we have:

\begin{equation} \label{eq:basicub1111}
	\cost(\mathcal{B}) ~\leq~ (1 + O(\delta))(1+O(\epsilon))\cdot \sum_{e\in \univ^+}\brac{\beta^{-\lev(e)} - \beta^{-\plev(e)}}.
	\end{equation}

\noindent The worst thing that can occur due to an insertion is that the cost of the maintained set cover grows by one, and the worst thing that can occur due to a deletion is that the passive level of the deleted element will drop from the highest level to level $0$, because this is what lowers the right hand side of \Cref{eq:basicub111} the most. We will first handle the deletions. 
By $\delta \cdot \cost(\mathcal{B})$ deletions, the right hand side of \Cref{eq:basicub111} can be lowered only by less than $(1+O(\epsilon)) \cdot \delta \cdot \cost(\mathcal{B})$. This is because for each deleted element $e$, $\beta^{-\plev(e)}$ can rise only by less than $1$, and the rest does not change. Thus, after the $\delta \cdot \cost(\mathcal{B})$ deletions, we can claim that:
\begin{equation} \label{eq:basicub3}
\cost(\mathcal{B}) ~\leq~ (1+O(\epsilon))\cdot \sum_{e\in \univ^+}\brac{\beta^{-\lev(e)} - \beta^{-\plev(e)}} + (1+O(\epsilon)) \cdot \delta \cdot \cost(\mathcal{B}).  
\end{equation}
\noindent Rearranging, and for any $\delta \leq 1$, we indeed get that \Cref{eq:basicub1111} holds. Regarding the insertions, in the worst-case $\cost(\mathcal{B})$ grew by a factor of $(1+\delta)$, since the cost of each set is no more than $1$. Clearly \Cref{eq:basicub1111} would still hold (by multiplying the right side of \Cref{eq:basicub1111} by $(1 + \delta)$).

\paragraph{Part III - Similar to \cite{10756164}.} Denote by $\sets^*$ an optimum SC. Next, let us lower bound $\cost(\sets^*)$ using the term $\sum_{e\in \univ^+}\brac{\beta^{-\lev(e)} - \beta^{-\plev(e)}}$. For any $s\in \sets^*$, consider the following three cases for any index $k\in [L]$: 
	\begin{itemize}
	\item $k < \log_\beta(1/\cost(s))-1$.
		
		By the first invariant, we have: $|N_k(s)| < \beta^{k+1}\cdot \cost(s) < 1$, so $|N_k(s)| = 0$.

		\item $\log_\beta (1/\cost(s))-1\leq k \leq \log_\beta (n/\cost(s))$.
		
		By the first invariant, we have: 
		
		$$\frac{1}{\epsilon}\brac{\beta^{-k} - \beta^{-k-1}}|N_k(s)| = \beta^{-k-1}|N_k(s)| < \cost(s).$$

		\item $k > \ceil{\log_\beta (n / \cost(s))} = k_0$.
		
		In this case, we use the trivial bound: $|N_k(s)| \leq n \leq \beta^{k_0}\cdot \cost(s)$, and so we have:
		$$\frac{1}{\epsilon}\brac{\beta^{-k} - \beta^{-k-1}}|N_k(s)| = \beta^{-k-1}|N_k(s)| \le \beta^{k_0 - k-1}\cdot \cost(s).$$
		
	\end{itemize}
Observe that:
	\begin{equation} \label{lefthand}
		\begin{aligned}
		\frac{1}{\epsilon} \sum_{k = 0}^{L-1}(\beta^{-k} - \beta^{-k-1})\cdot |N_k(s)| 
		&= \frac{1}{\epsilon}\sum_{e\in s}\sum_{k = 0}^{L-1}(\beta^{-k} - \beta^{-k-1}) \cdot \ind[e\in N_k(s)]\\
		&= \frac{1}{\epsilon}\sum_{e\in s}\sum_{k = \lev(e)}^{\plev(e)-1}(\beta^{-k} - \beta^{-k-1})\\
		& = \frac{1}{\epsilon}\sum_{e\in s}\brac{\beta^{-\lev(e)}-\beta^{-\plev(e)}}.
	\end{aligned}
	\end{equation}
	By the above case analysis, we have:
	\begin{equation}
	\begin{aligned} \label{righthand}
	\frac{1}{\epsilon} \sum_{k = 0}^{L-1}(\beta^{-k} - \beta^{-k-1})\cdot |N_k(s)|
	&= \frac{1}{\epsilon} \sum_{0 \le k < \log_\beta(1/\cost(s))-1}(\beta^{-k} - \beta^{-k-1})\cdot |N_k(s)| 
	\\&~~~+
	\frac{1}{\epsilon} \sum_{ \log_\beta(1/\cost(s))-1 \le k \le  \log_\beta (n/\cost(s))}(\beta^{-k} - \beta^{-k-1})\cdot |N_k(s)| 
	\\&~~~+
		\frac{1}{\epsilon} \sum_{\log_\beta (n/\cost(s)) < k \le L-1}(\beta^{-k} - \beta^{-k-1})\cdot |N_k(s)| 
	\\&<   \sum_{0 \le k < \log_\beta(1/\cost(s))-1} 0 
		\\&~~~+ 
  \sum_{ \log_\beta(1/\cost(s))-1 \le k \le  \log_\beta (n/\cost(s))} \cost(s) 
		\\&~~~+  \sum_{\log_\beta (n/\cost(s)) < k \le L-1}\beta^{k_0 - k - 1} \cost(s) 		
	\\&< \brac{0 + (\log_\beta(n) + 2) + 1/\epsilon}\cdot \cost(s).
	\end{aligned}
	\end{equation}
Combining \Cref{lefthand} with \Cref{righthand} yields
	$$\frac{1}{\epsilon}\sum_{e\in s}\brac{\beta^{-\lev(e)} - \beta^{-\plev(e)}}\leq \brac{\log_\beta(n) + 2 + 1/\epsilon}\cdot \cost(s).$$
	Therefore, as $\ln(1+\epsilon) = \epsilon + O(\epsilon^2)$, under the assumption that $\eps =  \Omega(1 / \log n)$ we have:
	$$\sum_{e\in s}\brac{\beta^{-\lev(e)} - \beta^{-\plev(e)}}\leq
	(1+O(\epsilon))\ln n\cdot \cost(s).$$
	Since $\sets^*$ is a valid set cover for all elements in $\univ$ (all the alive elements) and as for each dead element $e$ (in $\univ^+ \setminus \univ$) we have $\brac{\beta^{-\lev(e)} - \beta^{-\plev(e)}} = 0$, it follows that:
	\begin{equation} \label{eq:conclude} \sum_{e\in \univ^+}\brac{\beta^{-\lev(e)} - \beta^{-\plev(e)}} ~\leq~
	\sum_{s \in S^*} \sum_{e \in s} (\beta^{-\lev(e)} - \beta^{-\plev(e)})
	~\le~ (1+O(\epsilon))\ln n\cdot \cost(\sets^*).
	\end{equation}

\noindent We conclude that:
$$\begin{aligned}
	\cost(\mathcal{B}) & \leq (1 + O(\delta))(1+O(\epsilon))\cdot \sum_{e\in \univ^+}\brac{\beta^{-\lev(e)} - \beta^{-\plev(e)}} 
	\le (1 + O(\delta))(1+O(\epsilon))\ln n\cdot \cost(\sets^*),
\end{aligned}$$
where the first inequality holds by \Cref{eq:basicub1111} and the second by \Cref{eq:conclude}. By scaling $\epsilon$, we conclude the proof of \Cref{robusthf}.

\section{$O\brac{\frac{\alpha C}{\epsilon}}$ Worst-case Recourse: Black-box Transformation} \label{techsec}

This section is devoted to the proof of \Cref{warmupthm}. When combined with the PD-based algorithm of \cite{10756164}, it directly implies \Cref{lf}. The ideas developed in \Cref{robustsec} are not yet employed; they will come into play in \Cref{secim}, where we refine and strengthen the result obtained here.

\subsection{A Static Reconfiguration Problem} \label{sec:staticrecon}

The framework of {\em reconfiguration} problems is subject to extensive research attention; see \cite{Heuvel_2013,ito,nishimura,ShayNoam} and the references therein.  
The basic goal in reconfiguration problems is to compute a \emph{gradual} transformation between two feasible solutions, so that all intermediate solutions remain feasible. 
For the SC problem, given two feasible solutions $\sets'_1$ (the source) and $\sets'_2$ (the target), the goal is to {\em gradually} transform $\sets'_1$ into $\sets'_2$, while changing only a small number of sets at each step. Assuming both solutions are $\alpha$-approximations, 
it is trivial to get a gradual transformation where the approximation ratio throughout the process is at most $2\alpha$.
Moreover, such a factor 2 loss is inevitable, even for vertex cover; e.g., 
consider $K_{n/2,n/2}$ (the complete bipartite graph with $n/2$ vertices on each side), and assume the source and target solutions consist of the vertices on the left and right side, respectively. To maintain feasibility, we must add all right-side vertices to the solution before we can remove from it even one left-side vertex. 
On the other hand, for packing problems, and for approximate matchings in particular, a reconfiguration between any source and target matchings need not incur a factor $2+\epsilon$ blow-up in the approximation; indeed, \cite{ShayNoam} gave a matching reconfiguration algorithm that increases the approximation by   a  factor of
$1+\epsilon$, and used it to achieve a black-box transformation for dynamic matching algorithms with low worst-case recourse.

\subsection{Overview} \label{bbt}

Our goal is to transform a given dynamic SC algorithm, denoted $\mathcal{ALG}$, that may have high worst-case recourse into one with low worst-case recourse, via a black-box transformation. To achieve this, we \EMPH{reduce our dynamic problem}---of guaranteeing low worst-case recourse while preserving almost the same approximation and update time guarantees of the black-box $\mathcal{ALG}$---\EMPH{to a static reconfiguration problem}. The basic idea is to try and stick to the same output SC, changing it as little as possible to cope with element updates, until the approximation ratio deteriorates (a bit), at which stage we would need to switch the output SC to a better SC, but we must do it gradually rather than instantaneously!
To this end, we shall run $\mathcal{ALG}$ in the ``background'', and ``sample'' it as the output SC only \EMPH{once in a while}; we would like to gradually transform the current output SC (whose approximation has deteriorated) into the freshly sampled SC (whose approximation is good) throughout a sufficiently long time interval---as long as needed by the solution to the static reconfiguration problem.

Thus, we partition the update sequence into disjoint consecutive intervals, and at the start of each interval, the current output SC serves as the source solution, and the SC maintained by $\mathcal{ALG}$ serves as the target. We gradually transform the output from the source to the target during the interval; we would need to choose the interval lengths appropriately (see below), so that we will be able to complete the transformation process by the time the interval ends with a low worst-case recourse and with the required  approximation guarantee.
One hurdle in the dynamic setting that we completely ignored, which does not arise in the static reconfiguration problem, is that the element updates that occur during an interval may damage both the feasibility and approximation guarantee of the output SC. However, we show that the naive treatment of element updates for achieving feasibility has little effect on the approximation factor for small aspect ratio $C$, and in general one can simply increase the recourse by a factor of $C$.

\subsection{Algorithm}

Our algorithm will simulate $\mathcal{ALG}$ in the background. The update sequence will be divided into intervals. It begins with an ``initial interval'', and then the first interval, the second interval, etc. We will denote the output solution, the background solution given by $\mathcal{ALG}$, and an optimum solution at the beginning of the $i$th interval by $\mathcal{X}_i$, $\mathcal{B}_i$ and $\mathcal{OPT}_i$, respectively. We define the initial interval as the $0$th interval, and assume $\mathcal{B}_0 = \mathcal{X}_0$. During the $i$th interval (for $i \geq 1$), we want to gradually transform the output from $\mathcal{X}_i$ to $\mathcal{B}_i$. Since at the end of the $i$th interval $\mathcal{B}_i$ might no longer be a legal SC, for every element that is inserted during the interval, we add to the output a set containing it (this is not necessarily needed but in the worst-case this does occur so we assume so for simplicity). We define this as \emph{naively maintaining} or \emph{naively extending} the solution. Define the sets that are naively added to the output during the $i$th interval by $\mathcal{N}_i$. Thus, our goal is to gradually transform the output from $\mathcal{X}_i$ (source) to $(\mathcal{B}_i \cup \mathcal{N}_i)$ (target) by the end of the $i$th interval. 

To do so, we divide the $i$th interval (for $i \geq 1$) into two phases of equal length, the \emph{adding phase} and the \emph{removing phase}. In the adding phase we gradually add sets in $\mathcal{B}_i \setminus \mathcal{X}_i$ to the output, and in the removing phase we gradually remove sets in $\mathcal{X}_i \setminus \mathcal{B}_i$ from the output. In addition, as mentioned in the previous paragraph, during both phases we add naively one set per insertion. At the end of the second phase we have that $\mathcal{X}_{i+1} = (\mathcal{B}_i \cup \mathcal{N}_i)$. 

The length of the $i$th interval will be $\frac{\epsilon}{6 \alpha} \cdot \max \{\cost(\mathcal{X}_i),\cost(\mathcal{B}_i)\}$ update steps. So each phase will be of length $\frac{\epsilon}{12 \alpha} \cdot \max \{\cost(\mathcal{X}_i),\cost(\mathcal{B}_i)\}$ update steps. We emphasize that the $i$th interval can also simply be of length $\frac{\epsilon}{6 \alpha} \cdot \cost(\mathcal{X}_i)$ or $\frac{\epsilon}{6 \alpha} \cdot \cost(\mathcal{B}_i)$ update steps (regardless of which is larger), and the theorem will still hold, but for the sake of consistency with \Cref{secim} we take the maximum, which is necessary there. The initial interval will simply be of length $\frac{\epsilon}{6 \alpha} \cdot \cost(\mathcal{B}_0)$ where at the end of it $\mathcal{X}_1 = (\mathcal{B}_0 \cup \mathcal{N}_0)$.

\subsection{Analysis}

\begin{obs} [Recourse] \label{wobsrecourse}
The worst-case recourse is $\frac{12 \alpha C}{\epsilon} + 1$.
\end{obs}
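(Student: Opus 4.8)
The plan is to bound the number of solution changes made at any single update step within an arbitrary interval. Changes come from two sources: (i) the naive extension (one set added per inserted element), and (ii) the gradual transformation from the source $\mathcal{X}_i$ to the target $\mathcal{B}_i$, which is spread evenly over the interval. Source (i) contributes at most $1$ per step, since each update touches one element. For source (ii), observe that during the $i$th interval the adding phase must insert all of $\mathcal{B}_i \setminus \mathcal{X}_i$ and the removing phase must delete all of $\mathcal{X}_i \setminus \mathcal{B}_i$; the total number of such set-changes is at most $|\mathcal{B}_i| + |\mathcal{X}_i|$. Since every set has cost at least $1/C$, we have $|\mathcal{B}_i| \le C\cdot \cost(\mathcal{B}_i)$ and $|\mathcal{X}_i| \le C\cdot \cost(\mathcal{X}_i)$, so the transformation entails at most $C\brac{\cost(\mathcal{X}_i) + \cost(\mathcal{B}_i)} \le 2C\max\{\cost(\mathcal{X}_i),\cost(\mathcal{B}_i)\}$ changes over the interval.

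Next I would distribute these changes uniformly across the interval. The interval has length $\frac{\epsilon}{6\alpha}\cdot\max\{\cost(\mathcal{X}_i),\cost(\mathcal{B}_i)\}$ update steps, with each of the two phases having half that length. The algorithm is free to perform the changes of the adding phase evenly over its $\frac{\epsilon}{12\alpha}\cdot\max\{\cost(\mathcal{X}_i),\cost(\mathcal{B}_i)\}$ steps, and likewise for the removing phase; since each phase needs to absorb at most $C\max\{\cost(\mathcal{X}_i),\cost(\mathcal{B}_i)\}$ set-changes (a crude bound on both $|\mathcal{B}_i\setminus\mathcal{X}_i|$ and $|\mathcal{X}_i\setminus\mathcal{B}_i|$ individually, using the cost-to-cardinality conversion above), the per-step transformation recourse is at most
\[
\left\lceil \frac{C\cdot\max\{\cost(\mathcal{X}_i),\cost(\mathcal{B}_i)\}}{\frac{\epsilon}{12\alpha}\cdot\max\{\cost(\mathcal{X}_i),\cost(\mathcal{B}_i)\}} \right\rceil = \left\lceil \frac{12\alpha C}{\epsilon}\right\rceil.
\]
Adding the $+1$ from the naive extension yields a worst-case recourse of $\frac{12\alpha C}{\epsilon}+1$. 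The initial interval is handled identically (it only performs the naive extension plus the transformation $\mathcal{B}_0\cup\mathcal{N}_0$, which is within the same bound, and in fact cheaper since $\mathcal{X}_0 = \mathcal{B}_0$).

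The only genuinely delicate point is that $\max\{\cost(\mathcal{X}_i),\cost(\mathcal{B}_i)\}$ could be tiny (even a constant), in which case the interval length $\frac{\epsilon}{6\alpha}\cdot\max\{\cost(\mathcal{X}_i),\cost(\mathcal{B}_i)\}$ might be less than one update step; then the ``gradual'' transformation degenerates into a single-step jump. I would address this by noting that when the interval spans fewer than one step, $\max\{\cost(\mathcal{X}_i),\cost(\mathcal{B}_i)\} = O(\alpha/\epsilon)$, so $|\mathcal{X}_i| + |\mathcal{B}_i| = O(\alpha C/\epsilon)$ and the entire transformation can be done in a single step within the claimed recourse budget anyway; alternatively one rounds interval lengths up to $\max\{1,\lceil\cdot\rceil\}$ and absorbs the rounding into the constants. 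Apart from this boundary bookkeeping, the statement follows directly from the even distribution of changes and the $1/C$ lower bound on set costs; this is the main (and only mildly subtle) obstacle, and it is minor.
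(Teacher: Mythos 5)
Your proposal is correct and matches the paper's own argument: bound the number of sets added/removed in each phase by $C\cdot\max\{\cost(\mathcal{X}_i),\cost(\mathcal{B}_i)\}$ via the $1/C$ lower bound on set costs, spread these evenly over the phase's $\frac{\epsilon}{12\alpha}\cdot\max\{\cost(\mathcal{X}_i),\cost(\mathcal{B}_i)\}$ update steps, and add $1$ for the naive per-insertion extension. Your extra remark about short intervals is a reasonable boundary consideration the paper glosses over, but it does not change the approach.
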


\begin{proof}
The worst-case recourse during the initial interval is one. In each phase in the $i$th interval ($i \geq 1$) we gradually add/remove $\leq C \cdot \max \{\cost(\mathcal{X}_i),\cost(\mathcal{B}_i)\}$ sets. This is done in up to $\frac{\epsilon}{12 \alpha} \cdot \max \{\cost(\mathcal{X}_i),\cost(\mathcal{B}_i)\}$ update steps. We add one more set per insertion. 
\end{proof}

\begin{obs} [Update Time] \label{wobstime}
The worst-case update time is $T + O\brac{\frac{\alpha C}{\epsilon}}$.
\end{obs}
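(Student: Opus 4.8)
The plan is to account for all the work the transformed algorithm performs per update step, beyond the work already charged to the black-box $\mathcal{ALG}$. At each step we do three things: (i) advance the simulation of $\mathcal{ALG}$, which costs $T$ by assumption; (ii) perform the gradual reconfiguration, i.e.\ add or remove a bounded number of sets from the output according to which phase of the current interval we are in; and (iii) naively patch feasibility for an inserted element by picking one set containing it. Step (i) is free (charged to $\mathcal{ALG}$), and step (iii) costs $O(f)$ (scan the $\le f$ sets containing the new element), which is dominated by $T$ since any reasonable black box has $T = \Omega(f)$; in any case this can be folded into the $T$ term. So the only new cost to bound is step (ii).

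For step (ii), I would argue exactly as in the recourse bound of \Cref{wobsrecourse}: during the $i$th interval each phase lasts $\frac{\epsilon}{12\alpha}\cdot\max\{\cost(\mathcal{X}_i),\cost(\mathcal{B}_i)\}$ update steps, and over the whole phase we add (resp.\ remove) at most $C\cdot\max\{\cost(\mathcal{X}_i),\cost(\mathcal{B}_i)\}$ sets, since each set has cost at least $1/C$. Spreading this evenly, the number of sets touched per update step is at most $\frac{C\cdot\max\{\cost(\mathcal{X}_i),\cost(\mathcal{B}_i)\}}{(\epsilon/12\alpha)\cdot\max\{\cost(\mathcal{X}_i),\cost(\mathcal{B}_i)\}} = \frac{12\alpha C}{\epsilon}$, plus the one naive set, giving $O\brac{\frac{\alpha C}{\epsilon}}$ work for the reconfiguration bookkeeping. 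One should also note that determining which sets to add/remove (the difference $\mathcal{B}_i\setminus\mathcal{X}_i$ and $\mathcal{X}_i\setminus\mathcal{B}_i$) and the interval boundaries can be computed incrementally — the symmetric difference is maintained as $\mathcal{ALG}$ changes its solution, and the interval length is read off from $\cost(\mathcal{X}_i)$ and $\cost(\mathcal{B}_i)$ at the start of the interval — so no extra per-step overhead is incurred there beyond a constant amount of arithmetic.

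Adding the three contributions: $T$ for simulating $\mathcal{ALG}$, $O\brac{\frac{\alpha C}{\epsilon}}$ for the reconfiguration steps, and $O(f)$ (absorbed into $T$) for the naive patching, gives a worst-case update time of $T + O\brac{\frac{\alpha C}{\epsilon}}$, as claimed. I do not anticipate a genuine obstacle here — this observation is essentially a restatement of the recourse bound together with the triviality that simulating the black box costs $T$; the only mild subtlety worth a sentence is ensuring the per-step amount of reconfiguration is well-defined even though set costs are real-valued in $[1/C,1]$, which is why the count of sets touched is $C$ times a cost rather than the cost itself, and why the $\epsilon$ and $\alpha$ factors cancel cleanly against the interval length.
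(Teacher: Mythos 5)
Your proposal is correct and matches the paper's argument, which is a one-line observation that the per-update work is dominated by either simulating $\mathcal{ALG}$ (cost $T$) or performing the reconfiguration, whose per-step volume is exactly the recourse bound of \Cref{wobsrecourse}. Your more detailed accounting (spreading $C\cdot\max\{\cost(\mathcal{X}_i),\cost(\mathcal{B}_i)\}$ set changes over $\frac{\epsilon}{12\alpha}\cdot\max\{\cost(\mathcal{X}_i),\cost(\mathcal{B}_i)\}$ steps, plus the naive patch) is just an expansion of the same reasoning.
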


\begin{proof}
Clearly the bottleneck is either running $\mathcal{ALG}$ in the background or the recourse.
\end{proof}

\begin{obs} [Legal Solution] \label{wobslegal}
At all times the output is a legal SC solution (covers all elements).
\end{obs}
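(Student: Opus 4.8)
The plan is to prove the statement by induction over time, following the partition of the update sequence into the initial ($0$th) interval and the intervals $i \ge 1$, and maintaining the invariant that at every intermediate update step the output collection is a legal set cover of the set of currently alive elements. For the base case, during the initial interval the output is $\mathcal{B}_0 = \mathcal{X}_0$ together with one set added per inserted element; since $\mathcal{B}_0$ is the solution of $\mathcal{ALG}$ it is in particular a legal cover of all elements alive at the start, so the output covers those elements, and the naively added sets cover the newly inserted ones. In particular, at the end of the initial interval $\mathcal{X}_1 = \mathcal{B}_0 \cup \mathcal{N}_0$ is a legal cover of the elements alive at that time, which starts the induction.

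For the inductive step, I would fix an interval $i \ge 1$ and assume $\mathcal{X}_i$ is a legal cover of the elements alive at the start of interval $i$. I would then partition the elements alive at an arbitrary time $t$ within the interval into (i) elements already alive at the start of interval $i$ that have not been deleted by time $t$, and (ii) elements inserted during the interval (up to time $t$) that have not been deleted. Type-(ii) elements are covered because the algorithm naively adds a set containing each inserted element, and the sets of $\mathcal{N}_i$ added so far are never removed during the interval. For type-(i) elements, the key observation is that both $\mathcal{X}_i$ and $\mathcal{B}_i$ are legal covers of all elements alive at the interval's start ($\mathcal{B}_i$ because it is produced by $\mathcal{ALG}$): throughout the adding phase the output is a superset of $\mathcal{X}_i$, since we only add sets (from $\mathcal{B}_i \setminus \mathcal{X}_i$) and never remove any; and throughout the removing phase the output is a superset of $\mathcal{B}_i$, since by then the adding phase has already inserted all of $\mathcal{B}_i \setminus \mathcal{X}_i$, and in the removing phase we delete only sets of $\mathcal{X}_i \setminus \mathcal{B}_i$, touching neither $\mathcal{B}_i$ nor $\mathcal{N}_i$. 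Hence type-(i) elements remain covered at every step of the interval.

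The only place where legality could plausibly break — and hence the main point to be careful about — is the removing phase, where deleting a set of $\mathcal{X}_i \setminus \mathcal{B}_i$ might a priori uncover a still-alive element $e$. This is exactly ruled out by the observation above: such an $e$ was alive at the interval's start, hence is covered by some set of $\mathcal{B}_i$, and all of $\mathcal{B}_i$ is already present in the output once the removing phase begins; elements deleted by the adversary need not be covered, and inserted elements are covered by $\mathcal{N}_i$. To close the induction, at the end of interval $i$ the output equals $\mathcal{X}_{i+1} = \mathcal{B}_i \cup \mathcal{N}_i$, which covers every element alive at that moment (survivors via $\mathcal{B}_i$, insertions via $\mathcal{N}_i$), so it is a legal cover and the hypothesis for interval $i+1$ holds, completing the induction.
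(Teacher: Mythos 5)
Your proof is correct and follows the same approach as the paper's (much terser) argument: the output always fully contains $\mathcal{X}_i$ during the adding phase and $\mathcal{B}_i$ during the removing phase, plus a naively added set for each element inserted during the interval. Your version merely spells out the induction and the case analysis that the paper leaves implicit.
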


\begin{proof}
In the beginning of the $i$th interval we have two legal solutions, $\mathcal{X}_i$ and $\mathcal{B}_i$. Throughout the $i$th interval the output solution always fully contains at least one of them, plus a set containing each inserted element.
\end{proof}

\noindent For the approximation factor, we will prove by induction on the intervals the following claim: 

\begin{cl} [Approximation Factor] \label{wapprox} For any $i \geq 0$:
    \begin{enumerate}
        \item The approximation factor throughout the $i$th interval is $(2 + \epsilon) \cdot \alpha$.
        \item $\frac{\cost(\mathcal{X}_{i+1})}{\cost(\mathcal{OPT}_{i+1})} \leq (1 + \frac{\epsilon}{3}) \cdot \alpha$. 
        
    \end{enumerate}
\end{cl}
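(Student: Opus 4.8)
The plan is to prove \Cref{wapprox} by induction on $i$, establishing parts (1) and (2) together at each step since they feed into each other. The base case $i=0$ (the initial interval): here the output equals $\mathcal{B}_0=\mathcal{X}_0$ throughout, except for the naively added sets $\mathcal{N}_0$, one per inserted element. Since the initial interval has length $\frac{\epsilon}{6\alpha}\cost(\mathcal{B}_0)$, we add at most $\frac{\epsilon}{6\alpha}\cost(\mathcal{B}_0)$ sets, each of cost $\le 1$, so $\cost(\text{output})\le(1+\frac{\epsilon}{6\alpha})\cost(\mathcal{B}_0)$. Meanwhile $\mathcal{B}_0$ was $\alpha$-approximate at the start, and $\mathcal{OPT}$ can only have decreased by at most the number of deletions $\le \frac{\epsilon}{6\alpha}\cost(\mathcal{B}_0)$ — but more carefully, since each deleted element reduces $\cost(\mathcal{OPT})$ by at most $1$ and at most $C\cdot\cost(\mathcal{OPT}_0)$ elements live in the optimum... actually the cleanest bound: the new optimum $\mathcal{OPT}$ satisfies $\cost(\mathcal{OPT})\ge \cost(\mathcal{OPT}_0)-(\text{number of deletions})$, but we need this in terms of $\cost(\mathcal{B}_0)\le\alpha\cost(\mathcal{OPT}_0)$. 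Using that the number of updates in the interval is $\frac{\epsilon}{6\alpha}\cost(\mathcal{B}_0)\le\frac{\epsilon}{6}\cost(\mathcal{OPT}_0)$, we get $\cost(\mathcal{OPT})\ge(1-\frac{\epsilon}{6})\cost(\mathcal{OPT}_0)$ throughout, and a short computation gives approximation $\le\frac{(1+\epsilon/6)\alpha}{1-\epsilon/6}\le(2+\epsilon)\alpha$ for $\epsilon\le 1/2$, $\alpha\ge2$; at the end, $\mathcal{X}_1=\mathcal{B}_0\cup\mathcal{N}_0$ and the same estimate gives part (2).

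For the inductive step, assume (2) holds for $i-1$, i.e. $\cost(\mathcal{X}_i)\le(1+\frac{\epsilon}{3})\alpha\cdot\cost(\mathcal{OPT}_i)$. During the $i$th interval of length $\ell_i=\frac{\epsilon}{6\alpha}\max\{\cost(\mathcal{X}_i),\cost(\mathcal{B}_i)\}$, the output is always a superset of either $\mathcal{X}_i$ or $(\mathcal{B}_i\cup\mathcal{N}_i)$ (by the phase structure in \Cref{wobslegal}), plus at most $\ell_i$ naively-added sets accumulated so far. So $\cost(\text{output})\le\max\{\cost(\mathcal{X}_i),\cost(\mathcal{B}_i)\}+\ell_i\cdot 1 \le (1+\frac{\epsilon}{6\alpha})\max\{\cost(\mathcal{X}_i),\cost(\mathcal{B}_i)\}$. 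Now I bound both $\cost(\mathcal{X}_i)$ and $\cost(\mathcal{B}_i)$ against the current optimum $\mathcal{OPT}$. For $\mathcal{B}_i$: it is $\alpha$-approximate for $\mathcal{OPT}_i$ (it is the black-box output at the interval start). For $\mathcal{X}_i$: by the inductive hypothesis (2) it is $(1+\frac{\epsilon}{3})\alpha$-approximate for $\mathcal{OPT}_i$. Finally, $\cost(\mathcal{OPT})$ at any point in the interval is at least $\cost(\mathcal{OPT}_i)$ minus the number of deletions so far, which is at most $\ell_i\le\frac{\epsilon}{6\alpha}\max\{\cost(\mathcal{X}_i),\cost(\mathcal{B}_i)\}\le\frac{\epsilon}{6\alpha}\cdot(1+\frac{\epsilon}{3})\alpha\cost(\mathcal{OPT}_i)\le\frac{\epsilon}{4}\cost(\mathcal{OPT}_i)$, so $\cost(\mathcal{OPT})\ge(1-\frac{\epsilon}{4})\cost(\mathcal{OPT}_i)$. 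Combining, the approximation throughout the interval is at most $\frac{(1+\frac{\epsilon}{6\alpha})(1+\frac{\epsilon}{3})\alpha}{1-\frac{\epsilon}{4}}$, which for $\epsilon\le\frac14$ and $\alpha\ge2$ is at most $(2+\epsilon)\alpha$ — this is the arithmetic to be checked, and the factor-2 slack is exactly the unavoidable reconfiguration loss discussed in \Cref{sec:staticrecon}.

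For part (2) of the inductive step: at the end of the $i$th interval, $\mathcal{X}_{i+1}=\mathcal{B}_i\cup\mathcal{N}_i$. We have $\cost(\mathcal{X}_{i+1})\le\cost(\mathcal{B}_i)+|\mathcal{N}_i|\le\cost(\mathcal{B}_i)+\ell_i$. Now $\cost(\mathcal{B}_i)\le\alpha\cost(\mathcal{OPT}_i)$, and we must compare against $\cost(\mathcal{OPT}_{i+1})$, the optimum at the interval's end. Again $\cost(\mathcal{OPT}_{i+1})\ge\cost(\mathcal{OPT}_i)-\ell_i\ge(1-\frac{\epsilon}{6\alpha}\cdot\frac{\cost(\mathcal{B}_i)}{\cost(\mathcal{OPT}_i)})\cost(\mathcal{OPT}_i)$; bounding the ratio by... here we need $\max\{\cost(\mathcal{X}_i),\cost(\mathcal{B}_i)\}$ in terms of $\cost(\mathcal{OPT}_i)$, which by the two bounds above is $\le(1+\frac{\epsilon}{3})\alpha\cost(\mathcal{OPT}_i)$, so $\ell_i\le\frac{\epsilon}{6}(1+\frac{\epsilon}{3})\cost(\mathcal{OPT}_i)$. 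Putting it together, $\frac{\cost(\mathcal{X}_{i+1})}{\cost(\mathcal{OPT}_{i+1})}\le\frac{\alpha\cost(\mathcal{OPT}_i)+\ell_i}{\cost(\mathcal{OPT}_i)-\ell_i}\le\frac{\alpha+\frac{\epsilon}{6}(1+\frac{\epsilon}{3})}{1-\frac{\epsilon}{6}(1+\frac{\epsilon}{3})}$, and a routine check shows this is $\le(1+\frac{\epsilon}{3})\alpha$ for $\epsilon\le\frac14$, $\alpha\ge2$, closing the induction.

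**Main obstacle.** The conceptual content is light — the structure is a clean induction — but the delicate point is the bookkeeping of how much $\cost(\mathcal{OPT})$ can drift during an interval relative to the interval length $\ell_i$, which itself depends on $\max\{\cost(\mathcal{X}_i),\cost(\mathcal{B}_i)\}$ rather than on $\cost(\mathcal{OPT}_i)$ directly; one has to bootstrap through part (2) of the hypothesis to control that maximum. The other subtlety is the role of $C$: naively extending the solution costs one \emph{set} per element but each set has cost $\ge 1/C$, so converting between "number of sets added" (the recourse quantity, $\le\frac{12\alpha C}{\epsilon}$ in \Cref{wobsrecourse}) and "cost added" (the approximation quantity) is where the aspect ratio enters; I'd keep the approximation argument entirely in terms of cost (each added set contributes $\le 1$ to cost) so that $C$ never appears in \Cref{wapprox}, and isolate the $C$ factor to the recourse bound alone, matching how the theorem statement separates the two.
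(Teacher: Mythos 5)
Your overall strategy---induction on intervals, with item (2) serving as the hypothesis that controls $\cost(\mathcal{X}_i)$, bounding the drift of the optimum by one unit of cost per update step (\Cref{obsinsert}), and converting the interval length into a fraction of $\cost(\mathcal{OPT}_i)$ via the approximation guarantees of $\mathcal{X}_i$ and $\mathcal{B}_i$---is exactly the paper's, and your treatment of item (2) in the inductive step matches \Cref{cyc}. However, there is a genuine error in your inductive step for item (1): you claim that the output cost during the $i$th interval is at most $\max\{\cost(\mathcal{X}_i),\cost(\mathcal{B}_i)\}+\ell_i$, justified by the output being a superset of one of the two solutions. Being a superset gives a \emph{lower} bound on the cost, not an upper bound. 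In fact, at the end of the adding phase the output contains all of $\mathcal{X}_i\cup\mathcal{B}_i$, so its cost can be as large as $\cost(\mathcal{X}_i)+\cost(\mathcal{B}_i)+\cost(\mathcal{N}_i)$ (e.g., when the two covers are disjoint); this sum is precisely the source of the factor $2$ in the theorem---the ``unavoidable reconfiguration loss'' you correctly cite from \Cref{sec:staticrecon} but then never actually pay in your computation, which instead lands at roughly $(1+O(\epsilon))\alpha$. The paper's \Cref{apsecond} uses the numerator $\cost(\mathcal{X}_i)+\cost(\mathcal{B}_i)+\frac{\epsilon}{6\alpha}\max\{\cost(\mathcal{X}_i),\cost(\mathcal{B}_i)\}$ and bounds the ratio by $\frac{(1+\epsilon/3)\alpha+\alpha+\frac{\epsilon}{6}(1+\frac{\epsilon}{3})}{1-\frac{\epsilon}{6}(1+\frac{\epsilon}{3})}\le(2+\epsilon)\alpha$; that is the computation you need in place of yours.

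A second, minor slip: in the base case you state the ratio as $\frac{(1+\epsilon/6)\alpha}{1-\epsilon/6}$, which is strictly larger than $(1+\frac{\epsilon}{3})\alpha$ and hence does not deliver item (2) for $i=0$ (``the same estimate'' does not suffice). Keeping the additive form, $\cost(\mathcal{X}_0^t)\le\cost(\mathcal{X}_0)+\frac{\epsilon}{6}\cost(\mathcal{OPT}_0)\le(\alpha+\frac{\epsilon}{6})\cost(\mathcal{OPT}_0)$ as in \Cref{w1,w3}, yields $\frac{\alpha+\epsilon/6}{1-\epsilon/6}\le(1+\frac{\epsilon}{3})\alpha$, which does hold for $\alpha\ge2$. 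Your closing remark about keeping the approximation analysis entirely in terms of cost (each naively added set contributing at most $1$) and isolating $C$ to the recourse bound is exactly how the paper organizes things.
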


\subsubsection{Proof of \Cref{wapprox}}
In this section we prove \Cref{wapprox}. \Cref{warmupthm} would hold by \Cref{wobsrecourse}, \Cref{wobstime}, \Cref{wobslegal} and \Cref{wapprox} (1).
\noindent For any $i \geq 0$, denote the output solution, the background solution given by $\mathcal{ALG}$, and an optimum solution, $t$ update steps after the beginning of the $i$th interval until it ends, by $\mathcal{X}_i^t$, $\mathcal{B}_i^t$ and $\mathcal{OPT}_i^t$, respectively. For the highest possible $t$ (given a specific $i$) we have $\mathcal{X}_i^t = \mathcal{X}_{i+1}$, $\mathcal{B}_i^t = \mathcal{B}_{i+1}$ and $\mathcal{OPT}_i^t = \mathcal{OPT}_{i+1}$. We start by recording the following simple observation:

\begin{obs} \label{obsinsert}
In each update step the cost of the output can grow by up to $1$, since the cost of sets lies in the range $[\frac{1}{C}, 1]$. Similarly, in each update step the cost of the optimum can fall by up to $1$, since only one set can leave an optimum solution each update step (and its cost is upper bounded by $1$).  
\end{obs}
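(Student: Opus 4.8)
The plan is to read both halves of the observation off the algorithm's description together with elementary monotonicity of set cover; no real computation is needed. For the first half, I would recall that the algorithm reacts to a single element update, beyond the planned gradual transformation (which is bookkept separately in the proof of \Cref{wapprox}), in exactly one of two ways: on an insertion it adds one set containing the new element (the naive extension), and on a deletion it adds nothing in direct response. Since every set in $\sets$ has cost at most $1$, the cost that a single update step adds to the output this way is at most $1$ (and is $0$ for a deletion). This part is immediate.

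For the second half I would split by the type of update. An insertion only introduces a new covering constraint, so every feasible cover of the post-update universe is also feasible for the pre-update universe; hence $\cost(\mathcal{OPT})$ cannot decrease on an insertion. For the deletion of an element $e$, let $\sets^{*}$ be an optimum cover of the post-deletion universe $\univ \setminus \{e\}$ and let $s \in \sets$ be any set with $e \in s$ (one exists, since $e$ was coverable). Then $\sets^{*} \cup \{s\}$ is a feasible cover of the pre-deletion universe, so $\cost(\mathcal{OPT}^{\mathrm{old}}) \le \cost(\sets^{*}) + \cost(s) \le \cost(\mathcal{OPT}^{\mathrm{new}}) + 1$ using $\cost(s) \le 1$; rearranging gives $\cost(\mathcal{OPT}^{\mathrm{new}}) \ge \cost(\mathcal{OPT}^{\mathrm{old}}) - 1$, so the optimum falls by at most $1$ per step.

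The only point that warrants a sentence rather than a glance is the deletion direction: it is the ``bridge set'' $s$, upgrading an optimum of the smaller instance to a feasible cover of the larger one, that incarnates the informal phrase ``only one set can leave an optimum solution each update step.'' There is otherwise no obstacle; everything rests on the single fact that set costs are bounded above by $1$.
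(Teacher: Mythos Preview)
Your argument is correct and matches the paper's reasoning, which treats this observation as essentially self-evident and embeds the justification in the statement itself (no separate proof is given). Your treatment of the deletion case for the optimum---adding a single ``bridge set'' to an optimum of the smaller instance to obtain a feasible cover of the larger one---is in fact more careful than the paper's informal ``only one set can leave an optimum solution,'' which glosses over the possibility that the two optima are structurally unrelated.
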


\noindent We begin by proving the base case.

\begin{obs} \label{w1}
For any $t$ we have that $\cost(\mathcal{X}_0^t) \leq \cost(\mathcal{X}_0) + \frac{\epsilon}{6} \cdot \cost(\mathcal{OPT}_0)$ and $\cost(\mathcal{OPT}_0^t) \geq (1 - \frac{\epsilon}{6}) \cdot \cost(\mathcal{OPT}_0)$.
\end{obs}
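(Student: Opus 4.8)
The plan is to control the ``drift'' of the two quantities $\cost(\mathcal{X}_0^t)$ and $\cost(\mathcal{OPT}_0^t)$ over the course of the initial interval via \Cref{obsinsert}, and then convert the known length of the initial interval into the stated additive slack using $\alpha$-approximateness of the starting solution. Recall that the initial interval has length exactly $\frac{\epsilon}{6\alpha}\cdot\cost(\mathcal{B}_0)$ update steps, that during it we only \emph{naively extend} the output (adding one set per insertion and never touching the output on a deletion), and that $\mathcal{B}_0=\mathcal{X}_0$ is the $\alpha$-approximate solution we begin from, so $\cost(\mathcal{X}_0)=\cost(\mathcal{B}_0)\le \alpha\cdot\cost(\mathcal{OPT}_0)$. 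Every $t$ relevant here therefore satisfies $t\le \frac{\epsilon}{6\alpha}\cdot\cost(\mathcal{B}_0)$.

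For the first inequality, I would use \Cref{obsinsert}: an insertion raises the output cost by at most $1$ (the newly added set costs at most $1$), and a deletion does not raise it at all under naive maintenance; hence $\cost(\mathcal{X}_0^t)\le \cost(\mathcal{X}_0)+t\le \cost(\mathcal{X}_0)+\frac{\epsilon}{6\alpha}\cdot\cost(\mathcal{B}_0)\le \cost(\mathcal{X}_0)+\frac{\epsilon}{6}\cdot\cost(\mathcal{OPT}_0)$, where the last step plugs in $\cost(\mathcal{B}_0)\le\alpha\cdot\cost(\mathcal{OPT}_0)$.

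For the second inequality, I would again invoke \Cref{obsinsert}: an insertion cannot \emph{decrease} $\cost(\mathcal{OPT})$ (the new optimum covers a superset of the old universe, hence is feasible for the old instance), while a deletion decreases it by at most $1$ (appending to the new optimum one set containing the deleted element yields a feasible cover of the old universe). Summing over the at most $t$ update steps gives $\cost(\mathcal{OPT}_0^t)\ge \cost(\mathcal{OPT}_0)-t\ge \cost(\mathcal{OPT}_0)-\frac{\epsilon}{6\alpha}\cdot\cost(\mathcal{B}_0)\ge \brac{1-\tfrac{\epsilon}{6}}\cdot\cost(\mathcal{OPT}_0)$, once more using $\cost(\mathcal{B}_0)\le\alpha\cdot\cost(\mathcal{OPT}_0)$.

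There is no real technical obstacle in this base case; the only points requiring care are (i) using the correct length $\frac{\epsilon}{6\alpha}\cdot\cost(\mathcal{B}_0)$ of the initial interval rather than an $\mathcal{X}$- or $\mathcal{OPT}$-based length, and (ii) being precise that $\cost(\mathcal{OPT})$ can drop by at most $1$ per step, which is exactly the content of \Cref{obsinsert}. I would also remark that this same two-sided drift argument is the skeleton that the inductive step of \Cref{wapprox} reuses, there combined with the reconfiguration bookkeeping of the adding/removing phases, so phrasing it cleanly here pays off later.
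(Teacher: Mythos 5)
Your proof is correct and matches the paper's argument essentially verbatim: both bound the drift by $1$ per step via \Cref{obsinsert}, multiply by the initial interval length $\frac{\epsilon}{6\alpha}\cdot\cost(\mathcal{B}_0)$, and convert to $\frac{\epsilon}{6}\cdot\cost(\mathcal{OPT}_0)$ using $\cost(\mathcal{B}_0)\le\alpha\cdot\cost(\mathcal{OPT}_0)$. Your added justifications for why insertions cannot decrease $\cost(\mathcal{OPT})$ and why deletions decrease it by at most $1$ are a harmless elaboration of what the paper leaves implicit in \Cref{obsinsert}.
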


\begin{proof} 
By \Cref{obsinsert}, $\cost(\mathcal{X}_0^t) \leq \cost(\mathcal{X}_0) + \frac{\epsilon}{6 \alpha} \cdot \cost(\mathcal{B}_0) \leq \cost(\mathcal{X}_0) + \frac{\epsilon}{6} \cdot \cost(\mathcal{OPT}_0)$. Similarly, by \Cref{obsinsert}, $\cost(\mathcal{OPT}_0^t) \geq \cost(\mathcal{OPT}_0) - \frac{\epsilon}{6 \alpha} \cdot \cost(\mathcal{B}_0) \geq \cost(\mathcal{OPT}_0) - \frac{\epsilon}{6} \cdot \cost(\mathcal{OPT}_0)$.
\end{proof}

\begin{obs} \label{w3}
The approximation factor $t$ update steps after the beginning of the initial interval and until it ends is $(1 + \frac{\epsilon}{3}) \cdot \alpha$.
\end{obs}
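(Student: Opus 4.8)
The plan is to read off the bound directly from \Cref{w1}, combined with the $\alpha$-approximation guarantee of the background algorithm at time zero. Recall that by construction $\mathcal{B}_0 = \mathcal{X}_0$, that $\mathcal{B}_0$ is the SC maintained by $\mathcal{ALG}$, and hence $\cost(\mathcal{X}_0) = \cost(\mathcal{B}_0) \le \alpha \cdot \cost(\mathcal{OPT}_0)$. Since during the initial interval the output is just $\mathcal{X}_0$ naively extended, \Cref{wobslegal} guarantees that it is a legal SC at every step $t$, so the quantity whose value we must control is well-defined and equals $\cost(\mathcal{X}_0^t)/\cost(\mathcal{OPT}_0^t)$ for every $t$ from the start of the initial interval until it ends.

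First I would invoke \Cref{w1} to bound the numerator and denominator simultaneously: $\cost(\mathcal{X}_0^t) \le \cost(\mathcal{X}_0) + \tfrac{\epsilon}{6}\cost(\mathcal{OPT}_0) \le \brac{\alpha + \tfrac{\epsilon}{6}}\cost(\mathcal{OPT}_0)$, using the $\alpha$-approximation of $\mathcal{X}_0$, and $\cost(\mathcal{OPT}_0^t) \ge \brac{1 - \tfrac{\epsilon}{6}}\cost(\mathcal{OPT}_0)$. Dividing the two bounds, the factor $\cost(\mathcal{OPT}_0)$ cancels and the approximation factor at step $t$ is at most $\dfrac{\alpha + \epsilon/6}{1 - \epsilon/6}$, a bound that does not depend on $t$ and therefore holds uniformly throughout the initial interval.

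It then remains to discharge the elementary numerical inequality $\dfrac{\alpha + \epsilon/6}{1-\epsilon/6} \le \brac{1 + \tfrac{\epsilon}{3}}\alpha$. Cross-multiplying, expanding, and cancelling the common term $\alpha$, this reduces to $1 \le \alpha\brac{1 - \tfrac{\epsilon}{3}}$, which holds since $\alpha \ge 2$ and $\epsilon \le \tfrac12$ (so $\alpha\brac{1-\epsilon/3} \ge 2\cdot\tfrac56 > 1$), exactly the standing assumptions of \Cref{warmupthm}. I do not expect any real obstacle here; the computation is routine, and the only points requiring a bit of care are to keep the bound uniform over all $t$ in the interval (so that it covers the whole initial interval "until it ends", as stated), and to remember to use the hypotheses $\alpha\ge 2$ and $\epsilon\le\tfrac12$ when closing out the final inequality. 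This observation, applied at the last step $t$ of the initial interval where $\mathcal{X}_0^t = \mathcal{X}_1$ and $\mathcal{OPT}_0^t = \mathcal{OPT}_1$, also yields the base case $i=0$ of both parts of \Cref{wapprox}.
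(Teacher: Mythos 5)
Your proposal is correct and follows essentially the same route as the paper: invoke \Cref{w1} for the numerator and denominator, use $\cost(\mathcal{X}_0)\le\alpha\cdot\cost(\mathcal{OPT}_0)$, cancel $\cost(\mathcal{OPT}_0)$, and verify $\frac{\alpha+\epsilon/6}{1-\epsilon/6}\le(1+\frac{\epsilon}{3})\alpha$ under $\alpha\ge 2$. The only difference is that you spell out the final numerical inequality (reducing it to $1\le\alpha(1-\epsilon/3)$), which the paper simply asserts.
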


\begin{proof}
By \Cref{w1} we get that:
\[
\begin{array}{rcl}
\displaystyle
\frac{\cost(\mathcal{X}_0^t)}{\cost(\mathcal{OPT}_0^t)} 
& \leq & 
\displaystyle \frac{\cost(\mathcal{X}_0) + \frac{\epsilon}{6} \cdot \cost(\mathcal{OPT}_0)}{(1 - \frac{\epsilon}{6}) \cdot \cost(\mathcal{OPT}_0)} \\[4mm]
& \leq & 
\displaystyle \frac{(\alpha + \frac{\epsilon}{6}) \cdot \cost(\mathcal{OPT}_0)}{(1 - \frac{\epsilon}{6}) \cdot \cost(\mathcal{OPT}_0)} \\[4mm]
& = & 
\displaystyle \frac{\alpha + \frac{\epsilon}{6}}{1 - \frac{\epsilon}{6}} \\[4mm]
& \leq & 
(1 + \frac{\epsilon}{3}) \cdot \alpha,
\end{array}
\]

\noindent where the last inequality holds for any $\epsilon < 1$ and $\alpha \geq 2$. 
\end{proof}

\noindent We have shown that both items of \Cref{wapprox} hold for the base case ($i = 0$). For the induction step, we begin with the first item:

\begin{obs} \label{apsecond}
The approximation factor $t$ update steps after the beginning of the $i$th interval and until it ends is $(2 + \epsilon) \cdot \alpha$.
\end{obs}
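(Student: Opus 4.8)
The plan is to bound, for any time $t$ within the $i$th interval, the cost of the output $\cost(\mathcal{X}_i^t)$ from above and $\cost(\mathcal{OPT}_i^t)$ from below, and then take the ratio. By the inductive hypothesis (item 2 of \Cref{wapprox} applied to interval $i-1$, or the base case when $i=0$... wait, here $i\ge 1$), we have $\cost(\mathcal{X}_i)\le (1+\tfrac{\epsilon}{3})\alpha\cdot\cost(\mathcal{OPT}_i)$, and since $\mathcal{B}_i$ is an $\alpha$-approximate SC, $\cost(\mathcal{B}_i)\le\alpha\cdot\cost(\mathcal{OPT}_i)$. Hence $\max\{\cost(\mathcal{X}_i),\cost(\mathcal{B}_i)\}\le(1+\tfrac{\epsilon}{3})\alpha\cdot\cost(\mathcal{OPT}_i)$. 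The interval has length $\tfrac{\epsilon}{6\alpha}\max\{\cost(\mathcal{X}_i),\cost(\mathcal{B}_i)\}$ update steps, so $t$ is at most that quantity.

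For the numerator: during the interval the output equals (a subset of) $\mathcal{X}_i\cup\mathcal{B}_i$ plus the naively added sets $\mathcal{N}_i^t$. So $\cost(\mathcal{X}_i^t)\le\cost(\mathcal{X}_i)+\cost(\mathcal{B}_i)+\cost(\mathcal{N}_i^t)\le 2\max\{\cost(\mathcal{X}_i),\cost(\mathcal{B}_i)\}+t$, and by \Cref{obsinsert} the naive additions cost at most $t\le\tfrac{\epsilon}{6\alpha}\max\{\cost(\mathcal{X}_i),\cost(\mathcal{B}_i)\}$. Thus
\[
\cost(\mathcal{X}_i^t)\;\le\;\Bigl(2+\tfrac{\epsilon}{6\alpha}\Bigr)\max\{\cost(\mathcal{X}_i),\cost(\mathcal{B}_i)\}\;\le\;\Bigl(2+\tfrac{\epsilon}{6}\Bigr)\Bigl(1+\tfrac{\epsilon}{3}\Bigr)\alpha\cdot\cost(\mathcal{OPT}_i).
\]
For the denominator: by \Cref{obsinsert} the optimum can drop by at most $1$ per step, so $\cost(\mathcal{OPT}_i^t)\ge\cost(\mathcal{OPT}_i)-t\ge\cost(\mathcal{OPT}_i)-\tfrac{\epsilon}{6\alpha}\max\{\cost(\mathcal{X}_i),\cost(\mathcal{B}_i)\}\ge\bigl(1-\tfrac{\epsilon}{6}(1+\tfrac{\epsilon}{3})\bigr)\cost(\mathcal{OPT}_i)\ge(1-\tfrac{\epsilon}{3})\cost(\mathcal{OPT}_i)$ for $\epsilon\le\tfrac14$.

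Dividing, $\dfrac{\cost(\mathcal{X}_i^t)}{\cost(\mathcal{OPT}_i^t)}\le\dfrac{(2+\tfrac{\epsilon}{6})(1+\tfrac{\epsilon}{3})}{1-\tfrac{\epsilon}{3}}\,\alpha$, and a routine check shows this is at most $(2+\epsilon)\alpha$ for all $\epsilon\le\tfrac14$ (using $\alpha\ge 2$ is not even needed here, only $\epsilon$ small). I expect the only mild obstacle to be bookkeeping the constants so that everything collapses cleanly into $(2+\epsilon)\alpha$; the conceptual content is entirely in the observation that the output is always sandwiched inside $\mathcal{X}_i\cup\mathcal{B}_i$ plus a cheap naive extension, combined with the interval length being a small $\tfrac{\epsilon}{6\alpha}$-fraction of the relevant costs. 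One subtlety to state carefully: item 2 of the induction gives the bound on $\cost(\mathcal{X}_i)$ relative to $\cost(\mathcal{OPT}_i)$ at the \emph{start} of the interval, which is exactly what we use; no circularity arises because item 1 for interval $i$ only consumes item 2 for interval $i-1$ (or the base case).
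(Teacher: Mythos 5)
Your overall strategy is exactly the paper's: bound $\cost(\mathcal{X}_i^t)$ by the cost of $\mathcal{X}_i\cup\mathcal{B}_i$ plus the naive additions via \Cref{obsinsert}, lower-bound $\cost(\mathcal{OPT}_i^t)$ via \Cref{obsinsert}, and invoke item~2 of the induction hypothesis at the start of the interval (and you are right that there is no circularity). However, the "routine check" at the end is false, and the error is not mere bookkeeping. By collapsing $\cost(\mathcal{X}_i)+\cost(\mathcal{B}_i)\le 2\max\{\cost(\mathcal{X}_i),\cost(\mathcal{B}_i)\}\le 2(1+\tfrac{\epsilon}{3})\alpha\cdot\cost(\mathcal{OPT}_i)$ you get a numerator coefficient of $2+\tfrac{2\epsilon}{3}$ (plus lower-order terms), whereas bounding the two summands separately --- $\cost(\mathcal{X}_i)\le(1+\tfrac{\epsilon}{3})\alpha\cdot\cost(\mathcal{OPT}_i)$ by induction and $\cost(\mathcal{B}_i)\le\alpha\cdot\cost(\mathcal{OPT}_i)$ since $\mathcal{B}_i$ is the background $\alpha$-approximation --- gives $2+\tfrac{\epsilon}{3}$, which is what the paper uses. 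The extra $\tfrac{\epsilon}{3}\alpha$ is fatal: since the denominator already costs a factor $1-\Theta(\epsilon)$, one needs the entire remaining $\tfrac{2\epsilon}{3}\alpha$ of slack between $2+\tfrac{\epsilon}{3}$ and $2+\epsilon$ to absorb it. Concretely, writing $u=\tfrac{\epsilon}{6}(1+\tfrac{\epsilon}{3})$, your bound requires $2(1+\tfrac{\epsilon}{3})\alpha+u\le(2+\epsilon)\alpha(1-u)$, which simplifies to $u\,(1+(2+\epsilon)\alpha)\le\tfrac{\epsilon\alpha}{3}$ and hence to $1+2\alpha\le 2\alpha$ after dropping positive terms --- false for every $\epsilon>0$ and every $\alpha$. (Numerically, at $\epsilon=0.01$ your displayed ratio is $\approx 2.015\alpha>2.01\alpha$.) The fix is simply to keep the two costs separate as above; note also that the paper's final inequality does use $\alpha\ge 2$ to absorb the additive $\tfrac{\epsilon}{6}(1+\tfrac{\epsilon}{3})$ term that is not multiplied by $\alpha$, contrary to your parenthetical remark.
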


\begin{proof}
\[
\begin{array}{rcl}
\displaystyle
\frac{\cost(\mathcal{X}_i^t)}{\cost(\mathcal{OPT}_i^t)} 
& \leq & 
\displaystyle \frac{\cost(\mathcal{X}_i) + \cost(\mathcal{B}_i) + \frac{\epsilon}{6 \alpha} \cdot \max \{\cost(\mathcal{X}_i),\cost(\mathcal{B}_i)\}}{\cost(\mathcal{OPT}_i) - \frac{\epsilon}{6 \alpha} \cdot \max \{\cost(\mathcal{X}_i),\cost(\mathcal{B}_i)\}} \\[4mm]
& \leq & 
\displaystyle \frac{(1 + \frac{\epsilon}{3}) \cdot \alpha + \alpha + \frac{\epsilon}{6} \cdot (1 + \frac{\epsilon}{3})}{1 - \frac{\epsilon}{6} \cdot (1 + \frac{\epsilon}{3})} \\[4mm]
& \leq & 
(2 + \epsilon) \cdot \alpha,
\end{array}
\]

\noindent where the first inequality is by \Cref{obsinsert}, the second by the induction hypothesis (\Cref{wapprox} (2)), and the last holds for any $\epsilon \leq 0.7$ and $\alpha \geq 2$.
\end{proof}

\begin{obs} \label{cyc}
At the beginning of the $(i+1)$th interval, the approximation factor is $(1 + \frac{\epsilon}{3}) \cdot \alpha$.
\end{obs}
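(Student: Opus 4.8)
The plan is to bound the cost of the output solution $\mathcal{X}_{i+1}$ at the moment the $i$th interval ends, and to lower bound $\cost(\mathcal{OPT}_{i+1})$, then take the ratio. Recall that $\mathcal{X}_{i+1} = \mathcal{B}_i \cup \mathcal{N}_i$, where $\mathcal{N}_i$ is the set of sets added naively (one per insertion) during the $i$th interval. The interval has length $\frac{\epsilon}{6\alpha}\cdot\max\{\cost(\mathcal{X}_i),\cost(\mathcal{B}_i)\}$ update steps, so at most that many insertions occur in it, and each contributes a set of cost at most $1$ to $\mathcal{N}_i$; hence $\cost(\mathcal{N}_i) \le \frac{\epsilon}{6\alpha}\cdot\max\{\cost(\mathcal{X}_i),\cost(\mathcal{B}_i)\}$. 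Therefore $\cost(\mathcal{X}_{i+1}) \le \cost(\mathcal{B}_i) + \frac{\epsilon}{6\alpha}\cdot\max\{\cost(\mathcal{X}_i),\cost(\mathcal{B}_i)\}$.

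First I would control $\cost(\mathcal{B}_i)$. Since $\mathcal{B}_i$ is the background solution output by $\mathcal{ALG}$ at the start of the $i$th interval and $\mathcal{ALG}$ maintains an $\alpha$-approximation, $\cost(\mathcal{B}_i) \le \alpha\cdot\cost(\mathcal{OPT}_i)$. Next I would invoke the induction hypothesis, namely \Cref{wapprox}(2) for index $i-1$ (equivalently, \Cref{w3}/\Cref{apsecond} giving the bound at the start of interval $i$), which yields $\cost(\mathcal{X}_i) \le (1+\frac{\epsilon}{3})\alpha\cdot\cost(\mathcal{OPT}_i)$. Combining, $\max\{\cost(\mathcal{X}_i),\cost(\mathcal{B}_i)\} \le (1+\frac{\epsilon}{3})\alpha\cdot\cost(\mathcal{OPT}_i)$, so the naive additions are bounded by $\frac{\epsilon}{6\alpha}\cdot(1+\frac{\epsilon}{3})\alpha\cdot\cost(\mathcal{OPT}_i) = \frac{\epsilon}{6}(1+\frac{\epsilon}{3})\cdot\cost(\mathcal{OPT}_i)$, and altogether $\cost(\mathcal{X}_{i+1}) \le \big(\alpha + \frac{\epsilon}{6}(1+\frac{\epsilon}{3})\big)\cdot\cost(\mathcal{OPT}_i)$.

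For the denominator I would use \Cref{obsinsert}: over the $\frac{\epsilon}{6\alpha}\cdot\max\{\cost(\mathcal{X}_i),\cost(\mathcal{B}_i)\}$ update steps of the interval, $\cost(\mathcal{OPT})$ drops by at most $1$ per step, so $\cost(\mathcal{OPT}_{i+1}) \ge \cost(\mathcal{OPT}_i) - \frac{\epsilon}{6\alpha}\cdot(1+\frac{\epsilon}{3})\alpha\cdot\cost(\mathcal{OPT}_i) = \big(1 - \frac{\epsilon}{6}(1+\frac{\epsilon}{3})\big)\cdot\cost(\mathcal{OPT}_i)$. Dividing the two bounds, the $\cost(\mathcal{OPT}_i)$ factors cancel and we get
\[
\frac{\cost(\mathcal{X}_{i+1})}{\cost(\mathcal{OPT}_{i+1})} \;\le\; \frac{\alpha + \frac{\epsilon}{6}(1+\frac{\epsilon}{3})}{1 - \frac{\epsilon}{6}(1+\frac{\epsilon}{3})},
\]
which, exactly as in the computation of \Cref{w3} (with the harmless extra factor $(1+\frac{\epsilon}{3}) \le 2$ absorbed), is at most $(1+\frac{\epsilon}{3})\alpha$ for $\epsilon < 1$ and $\alpha \ge 2$. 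This is precisely \Cref{wapprox}(2) for index $i$, i.e. the claim of \Cref{cyc}. I do not expect a genuine obstacle here; the only mild subtlety is making sure the "max" in the interval length is handled uniformly — but since both $\cost(\mathcal{X}_i)$ and $\cost(\mathcal{B}_i)$ are individually $\le (1+\frac{\epsilon}{3})\alpha\cdot\cost(\mathcal{OPT}_i)$, their maximum satisfies the same bound, so the argument goes through verbatim.
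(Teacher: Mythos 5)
Your proof is correct and follows essentially the same route as the paper's: bound the numerator by $\cost(\mathcal{B}_i)$ plus the naively added cost, lower bound the denominator via \Cref{obsinsert}, and use the induction hypothesis \Cref{wapprox}(2) to express everything as multiples of $\cost(\mathcal{OPT}_i)$, arriving at the identical final ratio $\frac{\alpha + \frac{\epsilon}{6}(1+\frac{\epsilon}{3})}{1 - \frac{\epsilon}{6}(1+\frac{\epsilon}{3})}$. The only minor slip is the claimed range ``$\epsilon<1$'' for the last inequality---with the extra $(1+\frac{\epsilon}{3})$ factor it in fact requires roughly $\epsilon\le 0.5$ when $\alpha=2$ (as the paper states), but this is harmless since \Cref{warmupthm} assumes $\epsilon\le 0.5$ anyway.
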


\begin{proof}
\[
\begin{array}{rcl}
\displaystyle
\frac{\cost(\mathcal{X}_{i+1})}{\cost(\mathcal{OPT}_{i+1})} 
& \leq & 
\displaystyle \frac{\cost(\mathcal{B}_i) + \frac{\epsilon}{6 \alpha} \cdot \max \{\cost(\mathcal{X}_i),\cost(\mathcal{B}_i)\}}{\cost(\mathcal{OPT}_i) - \frac{\epsilon}{6 \alpha} \cdot \max \{\cost(\mathcal{X}_i),\cost(\mathcal{B}_i)\}} \\[4mm]
& \leq & 
\displaystyle \frac{\alpha + \frac{\epsilon}{6} \cdot (1 + \frac{\epsilon}{3})}{1 - \frac{\epsilon}{6} \cdot (1 + \frac{\epsilon}{3})} \\[4mm]
& \leq & 
(1 + \frac{\epsilon}{3}) \cdot \alpha,
\end{array}
\]

\noindent where the first inequality is by \Cref{obsinsert}, the second by the induction hypothesis (\Cref{wapprox} (2)), and the last holds for any $\epsilon \leq 0.5$ and $\alpha \geq 2$.
\end{proof}

\noindent We have shown that both items of \Cref{wapprox} hold for the induction step, concluding its proof. This completes the proof of \Cref{warmupthm}.

\section{Worst-Case Recourse: $O\brac{\frac{\log n \cdot C}{\epsilon}}$ to $O\brac{\frac{C}{\epsilon}}$ (Greedy)} \label{secim}

We begin by recording the main greedy-based result obtained by \cite{10756164}:

\begin{thm} [\cite{10756164} - Greedy-Based] \label{suzgreedy}
For any set system $(\univ, \sets)$ that undergoes a sequence of element insertions and deletions, where the frequency is bounded by $f$, and for any $\epsilon \in (0, \frac{1}{4})$, there is a deterministic algorithm that maintains a $((1+\epsilon)\ln n)$-approximate SC in $O\brac{\frac{f\log n}{\epsilon^2}}$  worst-case update time.
\end{thm}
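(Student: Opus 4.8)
The plan is to maintain, deterministically and at all times, the level-based data structure described in the Preliminaries of \Cref{dynamicrob}: a valid cover $\mathcal{B}\subseteq\sets$ with each $s\in\mathcal{B}$ at a level $\lev(s)\in[0,L]$ (and $\lev(s)=-1$ otherwise), each element $e$ assigned to the highest-level set containing it with $\lev(e)=\lev(\asn(e))$ and a monotone passive level $\plev(e)\ge\lev(e)$, deleted-but-not-yet-removed (``dead'') elements retained in $\univ^+$, and the three invariants of \Cref{dynamicrob} enforced. Morally, a set placed at level $i$ covers $\Theta(\beta^i)$ elements per unit of cost (invariant 2) and no set is ``too efficient for its level'' (invariant 1), so the hierarchy simulates the greedy algorithm, while invariant 3 caps the mass of ``deferred'' (passive) elements against the ``in-flight'' (active) ones. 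One then shows (i) the invariants imply the $((1+\epsilon)\ln n)$-approximation and (ii) the structure is maintainable in $O(f\log n/\epsilon^2)$ worst-case time per element update.

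\emph{Approximation.} This is the $\delta=0$ specialization of Parts I and III of the proof of \Cref{robusthf} (Part II, which handles the extra $\delta\cdot\cost(\mathcal{B})$ updates, becomes vacuous): invariant 2 gives $\cost(\mathcal{B})\le\beta\sum_{s\in\mathcal{B}}\sum_{e\in\cov(s)\cap\univ^+}\beta^{-\lev(e)}$, which is $(1+O(\epsilon))\sum_{e\in\univ^+}(\beta^{-\lev(e)}-\beta^{-L})$ since $\lev(e)\le L/2$; invariant 3, after telescoping both of its sides, upgrades $\beta^{-L}$ to $\beta^{-\plev(e)}$. For the matching lower bound set $y_e:=(\beta^{-\lev(e)}-\beta^{-\plev(e)})/\big((1+O(\epsilon))\ln n\big)$; using the identity $\sum_k(\beta^{-k}-\beta^{-k-1})|N_k(s)|=\sum_{e\in s}(\beta^{-\lev(e)}-\beta^{-\plev(e)})$ and the three-range estimate of $|N_k(s)|$ from invariant 1 shows $y$ is feasible for the dual LP of \Cref{obs:warm} whenever $\epsilon=\Omega(1/\log n)$, so weak duality (together with the fact that dead elements contribute $0$ while $\sets^*$ covers every alive element) gives $\cost(\mathcal{B})\le(1+O(\epsilon))\ln n\cdot\cost(\sets^*)$; rescaling $\epsilon$ finishes.

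\emph{Maintenance under updates.} On inserting $e$, scan the $\le f$ sets containing it; if some set can \emph{absorb} $e$ without breaking invariant 1 (and honoring $\lev(e)=\max_{s\ni e}\lev(s)$), assign $e$ there and push $\plev(e)$ as high as the invariants permit; otherwise some level is overfull and is scheduled for a \emph{rebuild}. On deleting $e$, mark it dead (leaving it in $\univ^+$), which may eventually push $\asn(e)$ below its invariant-2 threshold and schedule a rebuild. A rebuild at level $i$ recomputes $\mathcal{B}$ and the assignment on levels $\le i$ by a few greedy-style sweeps; its cost is $O(f\beta^{i}\cdot\mathrm{poly}(1/\epsilon))$ (the $O(\beta^i)$ element count follows from invariant 1 applied to $\sets^*$), and a charging argument shows a level-$i$ rebuild fires only once per $\Omega(\epsilon\beta^{i})$ updates; summing over the $O(L)$ levels, $L=O\big(\log_\beta(Cn)+\log_\beta(1/\epsilon)\big)$, gives $O(f\log n/\epsilon^2)$ amortized per update.

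\emph{The main obstacle --- deamortization.} Converting the amortized bound into a worst-case one is where essentially all of the technical work lies. Each triggered rebuild is run as a background process, advanced by a bounded amount per subsequent update so that it completes before the next rebuild at the same level is due; with $O(L)$ levels and at most one pending rebuild per level, the per-update cost stays $O(f\log n/\epsilon^2)$. The two delicate points are: (a) the cover must stay valid and all three invariants must hold during a \emph{partially completed} rebuild --- this is exactly what the passive level is for, ``pre-committing'' each touched element to the level it will occupy once the background rebuild reaches it, so that the elements in transit (those with $\lev(e)<\plev(e)$, forming the active sets $A_k$) are correctly accounted for in invariants 1 and 3; and (b) the \emph{relaxed} invariant 3 must survive interleaved rebuilds at different levels, since that is precisely what keeps the approximation from decaying mid-transition (and, as \Cref{dynamicrob} notes, its relaxation is what later powers the robustness argument). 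Coordinating the simultaneously pending background rebuilds so they do not collide, and the charging argument bounding both the trigger rate and the per-step background work, is the crux of the proof.
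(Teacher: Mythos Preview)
This theorem is not proved in the present paper at all: it is \emph{recorded} as the main greedy-based result of \cite{10756164} and then used as a black box (both as input to the transformation of \Cref{warmupthm} and, via its internal structure, for the robustness argument of \Cref{robusthf}). So there is no ``paper's own proof'' to compare your proposal against here.

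That said, your approximation paragraph is essentially the $\delta=0$ specialization of Parts I and III of the proof of \Cref{robusthf}, which is itself lifted from \cite{10756164}; so on that half you are in line with the source.

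Your maintenance and deamortization paragraphs, however, are only a high-level sketch, and they gloss over the actual content of \cite{10756164}. A few concrete gaps: (i) your amortized analysis (``a level-$i$ rebuild fires only once per $\Omega(\epsilon\beta^i)$ updates; summing over the $O(L)$ levels'') is asserted without the potential/charging scheme that justifies it, and the claim that a rebuild at level $i$ touches $O(\beta^i)$ elements ``by invariant~1 applied to $\sets^*$'' is not right --- invariant~1 bounds $|N_k(s)|$ for a single set, not the total number of elements at levels $\le i$; (ii) the deamortization is not ``one pending rebuild per level, advance each by a bounded amount per update'' --- the actual scheme in \cite{10756164} uses schedulers and carefully interleaved background processes, and the interaction between concurrent rebuilds (your point (b)) is precisely where the nontrivial work is, not something that can be waved through. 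You correctly flag this as ``the crux,'' but what you have written is a plausibility argument rather than a proof. Since the present paper does not attempt to reprove the theorem either, the appropriate fix is simply to cite \cite{10756164} for the full argument and, if you want to say anything more, restrict yourself to the approximation derivation (which the paper does reproduce in \Cref{dynamicrob}).
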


\noindent We can feed this algorithm into our black-box transformation presented in \Cref{techsec} to obtain a $((2+\epsilon) \ln n)$-approximate SC in $O\brac{\frac{f \cdot \log n}{\epsilon^2} + \frac{\log n \cdot C}{\epsilon}}$ worst-case update time and with a worst-case recourse of $O\brac{\frac{\log n \cdot C}{\epsilon}}$. Our goal is to remove the $\log n$ factor from the recourse bound. The basic idea is to increase the intervals' lengths as defined in \Cref{techsec} by a factor of $\ln n$ to reduce the recourse, and argue---using the robustness established in \Cref{dynamicrob}---that the approximation factor remains in check. However, the situation is much more subtle, since the robustness of a SC solution depends on its cost, while the costs of the source and target solutions at the start and end of each interval may differ by a factor of $\ln n$. This subtlety requires a considerably more intricate analysis; see \Cref{algapp} for details.

\subsection{Algorithm Description} \label{algdes}

We use the same algorithm as in \Cref{techsec}, with the only modification being that the interval lengths are increased by a factor of $\alpha = \ln n$. Namely, the length of the $i$th interval will be $\frac{\epsilon}{6} \cdot \max \{\cost(\mathcal{X}_i),\cost(\mathcal{B}_i)\}$ update steps, and each phase will be of length $\frac{\epsilon}{12} \cdot \max \{\cost(\mathcal{X}_i),\cost(\mathcal{B}_i)\}$ update steps. The initial interval will simply be of length $\frac{\epsilon}{6} \cdot \cost(\mathcal{B}_0)$.

\begin{obs} [Recourse] \label{obsrecourse}
The worst-case recourse is $\frac{12C}{\epsilon} + 1$.
\end{obs}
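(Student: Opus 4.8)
The plan is to repeat the counting argument behind \Cref{wobsrecourse}, now with the rescaled interval lengths from \Cref{algdes}. First I would record the elementary fact that any feasible solution $\mathcal{Y}$ has $|\mathcal{Y}| \le C\cdot\cost(\mathcal{Y})$, since every set has cost at least $1/C$. In particular, at the start of the $i$th interval ($i\ge 1$), the number of sets in the symmetric difference of the source $\mathcal{X}_i$ and target $\mathcal{B}_i$ is at most $|\mathcal{X}_i|+|\mathcal{B}_i|\le C\cdot(\cost(\mathcal{X}_i)+\cost(\mathcal{B}_i)) \le 2C\cdot\max\{\cost(\mathcal{X}_i),\cost(\mathcal{B}_i)\}$; more precisely, the adding phase inserts the $\le C\cdot\cost(\mathcal{B}_i)$ sets of $\mathcal{B}_i\setminus\mathcal{X}_i$, and the removing phase deletes the $\le C\cdot\cost(\mathcal{X}_i)$ sets of $\mathcal{X}_i\setminus\mathcal{B}_i$, so each phase performs at most $C\cdot\max\{\cost(\mathcal{X}_i),\cost(\mathcal{B}_i)\}$ set changes for the transformation itself.

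Next I would divide this total work evenly across the length of the phase. By \Cref{algdes}, each phase of the $i$th interval spans $\frac{\epsilon}{12}\cdot\max\{\cost(\mathcal{X}_i),\cost(\mathcal{B}_i)\}$ update steps, so spreading the $\le C\cdot\max\{\cost(\mathcal{X}_i),\cost(\mathcal{B}_i)\}$ changes uniformly costs at most
\[
\frac{C\cdot\max\{\cost(\mathcal{X}_i),\cost(\mathcal{B}_i)\}}{\frac{\epsilon}{12}\cdot\max\{\cost(\mathcal{X}_i),\cost(\mathcal{B}_i)\}}=\frac{12C}{\epsilon}
\]
set changes per update step; here it is crucial that the same quantity $\max\{\cost(\mathcal{X}_i),\cost(\mathcal{B}_i)\}$ appears in numerator and denominator so that it cancels regardless of its magnitude (this is exactly why the interval length in \Cref{algdes} is defined using the maximum of the two costs). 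On top of this, each update step may insert one element, for which we naively add one set to the output, contributing an additive $+1$. During the initial ($0$th) interval only naive extension happens, so its per-step recourse is $1$.

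Combining, the worst-case recourse over all intervals is at most $\frac{12C}{\epsilon}+1$, as claimed. I do not expect a genuine obstacle here: the argument is essentially identical to \Cref{wobsrecourse}, and the only point requiring a moment's care is that increasing the interval length by the factor $\alpha=\ln n$ does not change the per-step bound, because it simultaneously increases the amount of transformation work that must be scheduled within the interval only up to the same $\max\{\cost(\mathcal{X}_i),\cost(\mathcal{B}_i)\}$ factor, which cancels. (The nontrivial consequence of the rescaling—that the approximation guarantee survives—is handled separately in \Cref{algapp} via the robustness of \Cref{robusthf}, and is not needed for this observation.)
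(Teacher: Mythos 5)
Your proposal is correct and follows essentially the same argument as the paper's proof: bound the number of sets changed per phase by $C\cdot\max\{\cost(\mathcal{X}_i),\cost(\mathcal{B}_i)\}$, spread them over the $\frac{\epsilon}{12}\cdot\max\{\cost(\mathcal{X}_i),\cost(\mathcal{B}_i)\}$ steps of the phase so the maximum cancels, and add one for the naive set per insertion. The extra details you supply (the $|\mathcal{Y}|\le C\cdot\cost(\mathcal{Y})$ bound and the remark on why the $\ln n$ rescaling leaves the per-step bound unchanged) are consistent with, and merely elaborate on, what the paper does.
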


\begin{proof}
In each phase in the $i$th interval we gradually add/remove $\leq C \cdot \max \{\cost(\mathcal{X}_i),\cost(\mathcal{B}_i)\}$ sets. This is done in up to $\frac{\epsilon}{12} \cdot \max \{\cost(\mathcal{X}_i),\cost(\mathcal{B}_i)\}$ update steps. We add one more set per insertion.
\end{proof}

\begin{obs} [Update Time] \label{obstime}
The worst-case update time is $O\brac{\frac{f \cdot \log n}{\epsilon^2} + \frac{C}{\epsilon}}$.
\end{obs}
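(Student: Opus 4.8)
This proof is essentially a bookkeeping statement about the update time of the modified algorithm of \Cref{algdes}, and my plan is to observe that nothing in the cost structure of a single update step changed relative to \Cref{techsec}. The algorithm still runs $\mathcal{ALG}$ — here the greedy-based algorithm of \Cref{suzgreedy} — in the background, paying its worst-case update time $T = O\brac{\frac{f\log n}{\epsilon^2}}$ per step; and it still performs a bounded number of set additions/removals to the output per step (plus the naive extension for insertions). The only change is that the \emph{intervals are longer} by a factor of $\alpha=\ln n$, which \emph{decreases} the per-step recourse, not increases it. Hence the first step is simply to invoke \Cref{obsrecourse}: the worst-case recourse is $\frac{12C}{\epsilon}+1 = O\brac{\frac{C}{\epsilon}}$, so the time spent each step editing the output solution and reporting the changes is $O\brac{\frac{C}{\epsilon}}$.

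Next I would argue, exactly as in \Cref{wobstime}, that the per-step work is the sum of two contributions: (i) simulating one update step of the black-box $\mathcal{ALG}$, costing $T$, and (ii) the ``driver'' overhead of the transformation — computing which set of $\mathcal{B}_i\setminus\mathcal{X}_i$ or $\mathcal{X}_i\setminus\mathcal{B}_i$ to touch next, performing that $O\brac{\frac{C}{\epsilon}}$ worth of changes, and naively covering any inserted element (one set, $O(1)$ work, since one can keep for each element a pointer to some set containing it). There is no other component: we do not recompute optima, and the amortized-to-worst-case conversion in the black box is irrelevant here because $\mathcal{ALG}$ of \Cref{suzgreedy} is already worst-case. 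Taking $T = O\brac{\frac{f\log n}{\epsilon^2}}$ and adding the $O\brac{\frac{C}{\epsilon}}$ overhead yields the claimed bound $O\brac{\frac{f\log n}{\epsilon^2}+\frac{C}{\epsilon}}$.

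I do not expect a genuine obstacle here — the observation is a one-line consequence of \Cref{obsrecourse} together with the structure of the algorithm, and the only mild subtlety is the same one flagged for the earlier transformation: one must make sure that the queue of sets to add/remove during an interval, and the pointer structure used for naive extension, can be maintained in $O(1)$ (or $O\brac{\frac{C}{\epsilon}}$) time per step rather than requiring a fresh scan of $\mathcal{B}_i\setminus\mathcal{X}_i$. This is handled by computing the symmetric difference once at the start of each interval — which is charged to that interval's length, of order $\Omega\brac{\frac{\epsilon}{\alpha}\max\{\cost(\mathcal{X}_i),\cost(\mathcal{B}_i)\}}$ steps, over which $O(C\max\{\cost(\mathcal{X}_i),\cost(\mathcal{B}_i)\})$ sets must be moved — so the amortized cost of that bookkeeping is again $O\brac{\frac{C}{\epsilon}}$ per step and can be deamortized by spreading the computation across the phase. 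Thus the worst-case update time is $O\brac{\frac{f\cdot\log n}{\epsilon^2}+\frac{C}{\epsilon}}$, as stated.
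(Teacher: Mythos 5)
Your proposal is correct and matches the paper's (one-line) argument: the per-step cost is dominated by either the background run of the greedy-based algorithm of \cite{10756164}, costing $O\brac{\frac{f\log n}{\epsilon^2}}$, or the $O\brac{\frac{C}{\epsilon}}$ worst-case recourse bound from \Cref{obsrecourse}. The additional bookkeeping details you supply (pointers for naive extension, spreading the symmetric-difference computation over the interval) are sensible elaborations of the same approach, which the paper leaves implicit.
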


\begin{proof}
Clearly the bottleneck is either running \cite{10756164} in the background or the recourse.
\end{proof}

\begin{obs} [Legal Solution] \label{obslegal}
At all times the output is a legal SC solution (covers all elements).
\end{obs}

\begin{proof}
In the beginning of the $i$th interval we have two legal solutions, $\mathcal{X}_i$ and $\mathcal{B}_i$. Throughout the $i$th interval the output solution always fully contains at least one of them, plus a set containing each inserted element.
\end{proof}

\subsection{Approximation Factor} \label{algapp}

Throughout the $i$th interval our output solution is contained in $(\mathcal{B}_i \cup \mathcal{N}_i \cup \mathcal{X}_i)$. Thus, we would like to give an upper bound to the approximation factor of this legal solution. We will split to $\mathcal{X}_i$, the \emph{source} solution and $(\mathcal{B}_i \cup \mathcal{N}_i)$, the \emph{target} solution. Instead of explicitly analyzing the approximation factor of each one, each interval $i$ will have an \emph{anchor}, which is a solution given by \cite{10756164}, specifically $\mathcal{B}_{i^*}$ for some $i^* \leq i$. Loosely, our goal is to show that the cost of the source/target solution throughout the $i$th interval is no larger than the cost of some ``theoretical'' solution, that we do not actually maintain, that has a good approximation factor throughout the $i$th interval, which is the naive extension of the anchor throughout the $i$th interval. Thus, we must show three things:

\begin{enumerate}
    \item The cost of the source solution is no larger than the cost of the naively maintained anchor.
    \item The cost of the target solution is no larger than the cost of the naively maintained anchor.
    \item The naively maintained anchor has a good approximation factor throughout the $i$th interval, which will be done by applying \Cref{robusthf} on it.
\end{enumerate}

\noindent We emphasize that we do not actually maintain the naively extended anchor, it is used solely for the analysis. We first argue that the easy case is if $\cost(\mathcal{X}_i) \leq \cost(\mathcal{B}_i)$. Indeed, since the ``fresh'' SC, $\mathcal{B}_i$, is the dominant one, we can simply take $i^* = i$, meaning $\mathcal{B}_i$ is the anchor. Clearly the naively maintained anchor is a $((1 + O(\epsilon)) \cdot \ln n)$-approximate SC throughout the $i$th interval, since the length of the $i$th interval is $\frac{\epsilon}{6} \cdot \cost (\mathcal{B}_i)$. Thus we can apply \Cref{robusthf} with $\delta = \frac{\epsilon}{6}$. Moreover, the cost of the target solution is no larger than the cost of the naively maintained anchor throughout the $i$th interval trivially because it \emph{is} the naively maintained anchor throughout the $i$th interval. The cost of the source solution is no larger than the cost of the naively maintained anchor throughout the $i$th interval since $\cost(\mathcal{X}_i) \leq \cost(\mathcal{B}_i)$, and throughout the interval the source solution does not change and the cost of the naively maintained anchor can only rise. Since throughout the $i$th interval our output solution is contained in $(\mathcal{B}_i \cup \mathcal{N}_i \cup \mathcal{X}_i)$, and $\cost(\mathcal{B}_i \cup \mathcal{N}_i \cup \mathcal{X}_i) \leq \cost(\mathcal{B}_i \cup \mathcal{N}_i) + \cost(\mathcal{X}_i)$, we conclude with the following corollary:

\begin{cor} \label{wrap1}
If $\cost(\mathcal{X}_i) \leq \cost(\mathcal{B}_i)$, then throughout the $i$th interval the output solution gives an approximation ratio of $(2 + O(\epsilon)) \cdot \ln n$.
\end{cor}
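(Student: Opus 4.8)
The plan is to bound the cost of the output solution $\mathcal{X}_i^t$ at any point $t$ during the $i$th interval by the cost of the ``theoretical'' naively-extended anchor, which here is simply $\mathcal{B}_i$ naively extended. First I would invoke the induction hypothesis coming from the overall argument (which parallels \Cref{wapprox}, except that the approximation bound at interval starts is now $(1+O(\epsilon))\ln n$ rather than $(1+\tfrac{\epsilon}{3})\alpha$, thanks to robustness rather than the $(2+\epsilon)$-reconfiguration slack) to record that at the start of the interval $\cost(\mathcal{B}_i) \le (1+O(\epsilon))\ln n \cdot \cost(\mathcal{OPT}_i)$; combined with $\cost(\mathcal{X}_i) \le \cost(\mathcal{B}_i)$ this gives control of both source and target at time $0$. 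Then, since the interval has length $\tfrac{\epsilon}{6}\cost(\mathcal{B}_i)$, at most $\tfrac{\epsilon}{6}\cost(\mathcal{B}_i)$ updates occur during it, so I can apply \Cref{robusthf} with $\delta = \tfrac{\epsilon}{6}$ to the naively-maintained anchor $\mathcal{B}_i$: throughout the interval it remains a $(1+O(\epsilon))(1+O(\delta))\ln n = (1+O(\epsilon))\ln n$-approximate SC.

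Next I would establish the three bullet points listed before the corollary. The target bound is immediate: the target solution $\mathcal{B}_i \cup \mathcal{N}_i$ is, by construction of the algorithm, exactly the naively maintained anchor at every point of the interval (same naive extension rule), so its cost equals the anchor's cost. The source bound follows because $\mathcal{X}_i$ is frozen during the interval while the anchor's cost only grows (naive extension never removes sets), and $\cost(\mathcal{X}_i)\le\cost(\mathcal{B}_i)$ at time $0$; hence $\cost(\mathcal{X}_i)\le$ the anchor's current cost throughout. Finally, since the output $\mathcal{X}_i^t$ is always contained in $\mathcal{B}_i\cup\mathcal{N}_i\cup\mathcal{X}_i$, I get $\cost(\mathcal{X}_i^t)\le \cost(\mathcal{B}_i\cup\mathcal{N}_i)+\cost(\mathcal{X}_i)\le 2\cdot(\text{anchor cost at time }t)$, and dividing by $\cost(\mathcal{OPT}_i^t)$ and using the robustness bound on the anchor yields an approximation of $2(1+O(\epsilon))\ln n = (2+O(\epsilon))\ln n$, as claimed.

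The only mildly delicate point, and the one I expect to need care, is making sure the robustness statement \Cref{robusthf} is applied with the correct reference cost and the correct denominator: robustness guarantees the anchor stays $(1+O(\delta))(1+\epsilon)\ln n$-approximate \emph{relative to the current optimum} $\cost(\mathcal{OPT}_i^t)$ (not the stale $\cost(\mathcal{OPT}_i)$), and one must check that $\tfrac{\epsilon}{6}\cost(\mathcal{B}_i)$ updates is indeed within the budget of ``$\delta\cdot\cost(\mathcal{B})$ update steps'' in \Cref{robusthf} with $\mathcal{B}=\mathcal{B}_i$ and $\delta=\tfrac{\epsilon}{6}$ — which holds since $\mathcal{B}_i$ is precisely a solution of the form handled by \Cref{robusthf} at the interval's start. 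Everything else is a routine chain of inequalities of the same flavor as \Cref{apsecond}; no factor-$2$-vs-factor-$1$ subtlety arises because $\cost(\mathcal{X}_i)\le\cost(\mathcal{B}_i)$ lets the source be absorbed cleanly into the target's bound. (The genuinely hard case, where $\cost(\mathcal{X}_i) > \cost(\mathcal{B}_i)$ and the anchor must be chosen as some earlier $\mathcal{B}_{i^*}$, is deferred to the subsequent part of \Cref{algapp}.)
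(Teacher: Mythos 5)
Your proposal is correct and follows essentially the same route as the paper: take $\mathcal{B}_i$ itself as the anchor, apply \Cref{robusthf} with $\delta=\epsilon/6$ to its naive extension, note that the target \emph{is} the naively maintained anchor and that the frozen source is dominated by it since $\cost(\mathcal{X}_i)\le\cost(\mathcal{B}_i)$, and conclude via $\cost(\mathcal{B}_i\cup\mathcal{N}_i\cup\mathcal{X}_i)\le\cost(\mathcal{B}_i\cup\mathcal{N}_i)+\cost(\mathcal{X}_i)$. The only cosmetic difference is that you invoke an induction hypothesis to control $\cost(\mathcal{B}_i)$ at the interval's start, which is unnecessary here since $\mathcal{B}_i$ is the fresh background solution of \cite{10756164} and \Cref{robusthf} applies to it directly.
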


\noindent If $\cost(\mathcal{X}_i) > \cost(\mathcal{B}_i)$, the anchor of the $i$th interval will be $\mathcal{B}_{i^*}$ where $i^*$ is the last interval $< i$ such that $\cost(\mathcal{X}_{i^*}) \leq 3 \cdot \cost(\mathcal{B}_{i^*})$. Note that $i^*$ is well defined since $\cost(\mathcal{X}_{0}) = \cost(\mathcal{B}_{0})$.

\begin{cl} \label{equa9}
For any $i^* + 1 \leq j \leq i$, we have that:
$$
\cost(\mathcal{X}_j) \leq \displaystyle \left(\frac{2 + \epsilon}{6}\right)^{\,j - i^* - 1} \cdot \left(1 + \frac{\epsilon}{2}\right) \cdot \cost(\mathcal{B}_{i^*}).
$$
\end{cl}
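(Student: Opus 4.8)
The plan is to prove \Cref{equa9} by induction on $j$, using two facts that are immediate from the definition of $i^*$ as the last interval $<i$ with $\cost(\mathcal{X}_{i^*})\le 3\cost(\mathcal{B}_{i^*})$: \textbf{(i)} $\cost(\mathcal{X}_{i^*})\le 3\cost(\mathcal{B}_{i^*})$, so $\max\{\cost(\mathcal{X}_{i^*}),\cost(\mathcal{B}_{i^*})\}\le 3\cost(\mathcal{B}_{i^*})$; and \textbf{(ii)} by maximality of $i^*$, for every $j$ with $i^*<j<i$ we have $\cost(\mathcal{X}_j)>3\cost(\mathcal{B}_j)$, and hence $\max\{\cost(\mathcal{X}_j),\cost(\mathcal{B}_j)\}=\cost(\mathcal{X}_j)$. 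The only other ingredient is the bookkeeping identity $\mathcal{X}_{j+1}=\mathcal{B}_j\cup\mathcal{N}_j$ from the algorithm description, together with the bound $\cost(\mathcal{N}_j)\le(\text{number of insertions during interval }j)\le(\text{length of interval }j)=\frac{\epsilon}{6}\max\{\cost(\mathcal{X}_j),\cost(\mathcal{B}_j)\}$, valid since each insertion adds at most one naive set and each set has cost at most $1$.

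For the base case $j=i^*+1$, fact (i) bounds the length of the $i^*$th interval by $\frac{\epsilon}{6}\cdot 3\cost(\mathcal{B}_{i^*})=\frac{\epsilon}{2}\cost(\mathcal{B}_{i^*})$, so $\cost(\mathcal{N}_{i^*})\le\frac{\epsilon}{2}\cost(\mathcal{B}_{i^*})$; since $\mathcal{X}_{i^*+1}=\mathcal{B}_{i^*}\cup\mathcal{N}_{i^*}$ this yields $\cost(\mathcal{X}_{i^*+1})\le\cost(\mathcal{B}_{i^*})+\cost(\mathcal{N}_{i^*})\le(1+\frac{\epsilon}{2})\cost(\mathcal{B}_{i^*})$, which is the claimed bound with exponent $0$. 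For the inductive step I take $j$ with $i^*+1\le j\le i-1$, assume the bound for $j$, and prove it for $j+1$. Since $i^*<j<i$, fact (ii) gives $\cost(\mathcal{N}_j)\le\frac{\epsilon}{6}\cost(\mathcal{X}_j)$ and $\cost(\mathcal{B}_j)<\frac13\cost(\mathcal{X}_j)$, so $\cost(\mathcal{X}_{j+1})\le\cost(\mathcal{B}_j)+\cost(\mathcal{N}_j)<(\frac13+\frac{\epsilon}{6})\cost(\mathcal{X}_j)=\frac{2+\epsilon}{6}\cost(\mathcal{X}_j)$; substituting the induction hypothesis for $\cost(\mathcal{X}_j)$ yields exactly $\cost(\mathcal{X}_{j+1})\le(\frac{2+\epsilon}{6})^{(j+1)-i^*-1}(1+\frac{\epsilon}{2})\cost(\mathcal{B}_{i^*})$. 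When $i=i^*+1$ there is no inductive step and the base case is the entire statement.

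I do not expect a real obstacle: the claim is a geometric-decay estimate driven entirely by the $3$-to-$1$ gap between $\cost(\mathcal{X}_j)$ and $\cost(\mathcal{B}_j)$ supplied by fact (ii). The points that need care are organizational rather than mathematical: extracting facts (i) and (ii) from the definition of $i^*$ with the correct quantifiers---especially that (ii) holds for \emph{all} indices strictly between $i^*$ and $i$, which is what lets the induction reach every $j\le i$---and, in the base case, controlling $\cost(\mathcal{N}_{i^*})$ through the interval length rather than through $\cost(\mathcal{X}_{i^*})$ directly (here $\cost(\mathcal{X}_{i^*})$ need not dominate $\cost(\mathcal{B}_{i^*})$, so the factor $3$ from the definition of $i^*$ is exactly what is needed).
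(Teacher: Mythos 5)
Your proposal is correct and follows essentially the same route as the paper: the paper's proof is exactly your base case (its Equation~\eqref{eq7}, using $\cost(\mathcal{X}_{i^*})\le 3\cost(\mathcal{B}_{i^*})$ to bound the length of interval $i^*$ by $\tfrac{\epsilon}{2}\cost(\mathcal{B}_{i^*})$) together with your one-step contraction $\cost(\mathcal{X}_{j})<\frac{2+\epsilon}{6}\cost(\mathcal{X}_{j-1})$ (its Equation~\eqref{eq6}), unrolled rather than phrased as induction. The points you flag as needing care---the quantifier in the maximality of $i^*$ and bounding $\cost(\mathcal{N}_{i^*})$ via the interval length---are handled identically in the paper.
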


\begin{proof}
For any $i^* + 1 \leq j < i$, we have that $\cost(\mathcal{X}_{j}) > 3 \cdot \cost(\mathcal{B}_{j})$. Note that for any $i^* + 2 \leq j \leq i$, we also have:
\begin{equation} \label{eq6}
\begin{array}{rcl}
\cost(\mathcal{X}_j)
& \leq & \cost(\mathcal{B}_{j-1}) + \cost(\mathcal{N}_{j-1}) < \displaystyle \frac{1}{3} \cdot \cost(\mathcal{X}_{j-1}) + \frac{\epsilon}{6} \cdot \cost(\mathcal{X}_{j-1}) \\[2mm]
& = & \displaystyle \left(\frac{2 + \epsilon}{6}\right) \cdot \cost(\mathcal{X}_{j-1}),
\end{array}
\end{equation}

\noindent and specifically for $j = i^* + 1$ we have:
\begin{equation} \label{eq7}
\begin{array}{rcl}
\cost(\mathcal{X}_{i^* + 1})
& \leq & \cost(\mathcal{B}_{i^*}) + \cost(\mathcal{N}_{i^*}) \leq  \cost(\mathcal{B}_{i^*}) + 3 \cdot \frac{\epsilon}{6} \cdot \cost(\mathcal{B}_{i^*}) \\[2mm]
& = & \displaystyle \left(1 + \frac{\epsilon}{2}\right) \cdot \cost(\mathcal{B}_{i^*}).
\end{array}
\end{equation}

\noindent Thus, for any $i^* + 1 \leq j \leq i$ we have:
\begin{equation} \label{eq8}
\begin{array}{rcl}
\cost(\mathcal{X}_j)
& \leq & \displaystyle \left(\frac{2 + \epsilon}{6}\right)^{\,j - i^* - 1} \cdot \cost(\mathcal{X}_{i^* + 1}) \\[4mm]
& \leq & \displaystyle \left(\frac{2 + \epsilon}{6}\right)^{\,j - i^* - 1} \cdot \left(1 + \frac{\epsilon}{2}\right) \cdot \cost(\mathcal{B}_{i^*}).
\end{array}
\end{equation}  
\end{proof}

\begin{cl} \label{anchor}
The number of update steps from the beginning of the $(i^*)$th interval to the end of the $i$th interval is $\leq \frac{9}{10} \cdot \epsilon \cdot \cost(\mathcal{B}_{i^*})$.
\end{cl}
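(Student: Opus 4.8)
\textbf{Proof plan for \Cref{anchor}.}
The plan is to bound the total interval lengths by a geometric series. Recall that the length of interval $j$ (for $j \geq 1$) is $\frac{\epsilon}{6} \cdot \max\{\cost(\mathcal{X}_j), \cost(\mathcal{B}_j)\}$ update steps, and the length of the initial interval is $\frac{\epsilon}{6} \cdot \cost(\mathcal{B}_0)$; here intervals $i^*+1, \dots, i$ all fall in the ``hard case'' where $\cost(\mathcal{X}_j) > 3 \cdot \cost(\mathcal{B}_j)$ for $i^*+1 \le j < i$, so in particular $\max\{\cost(\mathcal{X}_j),\cost(\mathcal{B}_j)\} = \cost(\mathcal{X}_j)$ for those $j$, and also $\cost(\mathcal{X}_j) > \cost(\mathcal{B}_j)$ holds for $j = i^*+1$ by \Cref{equa9} combined with the case hypothesis (or can be absorbed into a crude upper bound). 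First I would write the total number of update steps from the start of interval $i^*$ to the end of interval $i$ as the sum of the length of interval $i^*$ plus $\sum_{j=i^*+1}^{i}$ of the length of interval $j$, and bound each length by $\frac{\epsilon}{6}$ times a quantity that is at most $\left(\frac{2+\epsilon}{6}\right)^{\,j - i^* - 1} \cdot (1 + \frac{\epsilon}{2}) \cdot \cost(\mathcal{B}_{i^*})$ for $j \ge i^*+1$ using \Cref{equa9}, and by $\frac{\epsilon}{6} \cdot 3 \cdot \cost(\mathcal{B}_{i^*})$ for $j = i^*$ using the defining property $\cost(\mathcal{X}_{i^*}) \le 3 \cdot \cost(\mathcal{B}_{i^*})$.

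Next I would sum the resulting geometric series. Since $\epsilon \le \frac14$, the ratio $r := \frac{2+\epsilon}{6}$ satisfies $r \le \frac{9}{24} = \frac{3}{8} < \frac12$, so $\sum_{j \ge i^*+1} r^{\,j-i^*-1} = \frac{1}{1-r} \le \frac{1}{1 - 3/8} = \frac{8}{5}$. Hence the contribution of intervals $i^*+1,\dots,i$ is at most $\frac{\epsilon}{6} \cdot (1+\frac{\epsilon}{2}) \cdot \frac{8}{5} \cdot \cost(\mathcal{B}_{i^*})$, and adding the interval-$i^*$ contribution of at most $\frac{\epsilon}{6} \cdot 3 \cdot \cost(\mathcal{B}_{i^*}) = \frac{\epsilon}{2} \cdot \cost(\mathcal{B}_{i^*})$ gives a total of at most $\left(\frac{\epsilon}{2} + \frac{8\epsilon}{30}\cdot(1+\frac{\epsilon}{2})\right)\cdot \cost(\mathcal{B}_{i^*})$. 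Plugging $\epsilon \le \frac14$ into the small correction term $1 + \frac{\epsilon}{2} \le \frac{9}{8}$, this is at most $\left(\frac{1}{2} + \frac{8}{30}\cdot\frac{9}{8}\right)\epsilon \cdot \cost(\mathcal{B}_{i^*}) = \left(\frac12 + \frac{3}{10}\right)\epsilon \cdot \cost(\mathcal{B}_{i^*}) = \frac{8}{10}\,\epsilon \cdot \cost(\mathcal{B}_{i^*}) \le \frac{9}{10}\,\epsilon\cdot\cost(\mathcal{B}_{i^*})$, as claimed. (If the ``$\max$'' in the interval length needs the target cost $\cost(\mathcal{B}_j \cup \mathcal{N}_j)$ rather than $\cost(\mathcal{X}_j)$ at the boundary $j=i$, I would note that during the hard-case intervals the source dominates, so this does not affect the bound; the inductive/structural facts needed are exactly \Cref{equa9} and the definition of $i^*$.)

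The main obstacle I anticipate is bookkeeping rather than a conceptual difficulty: one must be careful that interval $i$ itself (the current interval) is included in the count and that its length is still governed by \Cref{equa9} — i.e.\ that $\max\{\cost(\mathcal{X}_i),\cost(\mathcal{B}_i)\} = \cost(\mathcal{X}_i)$, which holds precisely because we are in the case $\cost(\mathcal{X}_i) > \cost(\mathcal{B}_i)$. A secondary subtlety is whether the ``$\mathcal{N}_j$'' sets added naively during earlier intervals could inflate the $\cost(\mathcal{X}_j)$ values beyond what \Cref{equa9} records; but \Cref{equa9} already accounts for $\mathcal{N}_{j-1}$ via the recursion $\cost(\mathcal{X}_j) \le \cost(\mathcal{B}_{j-1}) + \cost(\mathcal{N}_{j-1})$, so no extra care is needed. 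Finally I would double-check that the chosen constants ($\frac{9}{10}$, the bound $\epsilon \le \frac14$) leave enough slack; the computation above shows $\frac{8}{10}\epsilon$ suffices, so $\frac{9}{10}\epsilon$ is comfortably safe.
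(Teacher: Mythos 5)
Your proposal is correct and follows essentially the same route as the paper: bound interval $i^*$ by $\frac{\epsilon}{2}\cost(\mathcal{B}_{i^*})$ via the defining property $\cost(\mathcal{X}_{i^*})\le 3\cost(\mathcal{B}_{i^*})$, then bound the remaining interval lengths by a geometric series using \Cref{equa9} (the paper keeps the exact sum $\frac{6}{4-\epsilon}$ where you use the looser $\frac{8}{5}$, but both land on the same $\frac{8}{10}\epsilon\cdot\cost(\mathcal{B}_{i^*})$ total). No gaps.
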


\begin{proof}
The total number of update steps by \Cref{equa9}:
\begin{equation} \label{eq9}
\begin{array}{rcl}
\displaystyle \sum_{j = i^*}^{i} |\mathcal{N}_j|
& \leq & \displaystyle \frac{\epsilon}{2} \cost(\mathcal{B}_{i^*})
+ \frac{\epsilon}{6} \sum_{j = i^* + 1}^{i} \cost(\mathcal{X}_j) \\[2mm]
& \leq & \displaystyle \frac{\epsilon}{2} \cost(\mathcal{B}_{i^*})
+ \frac{\epsilon}{6} \left(1 + \frac{\epsilon}{2}\right) \cost(\mathcal{B}_{i^*})
\sum_{j = i^* + 1}^{i} \left(\frac{2 + \epsilon}{6}\right)^{j - i^* - 1}.
\end{array}
\end{equation}

\noindent By bounding the geometric sum, we obtain: 
\begin{equation} \label{eq10}
\sum_{j = i^*}^i |\mathcal{N}_j| \leq \cost(\mathcal{B}_{i^*}) \cdot \brac{\frac{\epsilon}{2} + \frac{\epsilon}{6} \cdot \brac{1 + \frac{\epsilon}{2}} \cdot \brac{\frac{6}{4-\epsilon}}} \leq \frac{9}{10} \cdot \epsilon \cdot \cost(\mathcal{B}_{i^*}),
\end{equation}

\noindent where the last inequality holds for any $\epsilon \leq \frac{1}{4}$, and the claim follows.    
\end{proof}

\begin{cor} \label{anc2}
The naively maintained anchor is a $((1 + O(\epsilon)) \cdot \ln n)$-approximate SC throughout the $i$th interval.
\end{cor}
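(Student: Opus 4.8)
The goal of Corollary~\ref{anc2} is to certify that the ``theoretical'' solution used as the analytic yardstick---the naive extension of the anchor $\mathcal{B}_{i^*}$ through the end of the $i$th interval---retains a good approximation ratio throughout the $i$th interval. The plan is to invoke the robustness lemma (\Cref{robusthf}) with an appropriate choice of $\delta$, exactly as in the easy case handled by \Cref{wrap1}, but now accounting for the fact that the anchor was ``frozen'' several intervals earlier, at interval $i^*$ rather than at interval $i$.

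First I would recall what \Cref{robusthf} requires: if we start from a \cite{10756164}-solution $\mathcal{B}$ of cost $\cost(\mathcal{B})$ giving a $(1+\epsilon)\ln n$-approximation, and we then perform at most $\delta\cdot\cost(\mathcal{B})$ update steps while naively maintaining $\mathcal{B}$ (adding one arbitrary set per insertion), the maintained solution is a $(1+O(\delta))(1+\epsilon)\ln n$-approximation throughout. Here the anchor is $\mathcal{B}_{i^*}$, born at the start of the $(i^*)$th interval as a genuine $(1+\epsilon)\ln n$-approximate solution of cost $\cost(\mathcal{B}_{i^*})$. The ``update steps'' that elapse between that moment and the end of the $i$th interval are precisely $\sum_{j=i^*}^{i}|\mathcal{N}_j|$ (the naive extensions performed during intervals $i^*,i^*+1,\dots,i$). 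By \Cref{anchor}, this quantity is at most $\tfrac{9}{10}\epsilon\cdot\cost(\mathcal{B}_{i^*})$. Hence we may apply \Cref{robusthf} with $\delta := \tfrac{9}{10}\epsilon \le 1$, which is a legal choice since $\epsilon\le\tfrac14$, and conclude that the naively maintained anchor is a $(1+O(\epsilon))(1+\epsilon)\ln n = (1+O(\epsilon))\ln n$-approximate SC at every update step from the start of interval $i^*$ onward---in particular, throughout the entire $i$th interval.

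A small point worth making explicit in the write-up: \Cref{robusthf} as stated lets us naively extend $\mathcal{B}$ for the \emph{next} $\delta\cdot\cost(\mathcal{B})$ update steps, and here ``the next update steps'' after $\mathcal{B}_{i^*}$ is sampled are exactly those of intervals $i^*$ through $i$ (during intervals $i^*+1,\dots,i$ the algorithm is of course also doing adding/removing work on the real output, but the \emph{anchor} is a purely analytic object that is only ever naively extended, so the hypothesis of \Cref{robusthf} is met verbatim). Also, the count in \Cref{anchor} conveniently includes interval $i^*$ itself and all of interval $i$, so the robustness bound genuinely covers the whole of interval $i$ and not merely its beginning.

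I do not anticipate a real obstacle here: \Cref{anc2} is essentially a one-line consequence of \Cref{anchor} and \Cref{robusthf}, the geometric-series bookkeeping having already been discharged in Claims~\ref{equa9} and~\ref{anchor}. The only thing to be careful about is that the $O(\epsilon)$ in the exponent-free sense absorbs both the $(1+\epsilon)$ from the native approximation of \cite{10756164} and the $(1+O(\delta)) = (1+O(\epsilon))$ robustness loss, which is immediate since $\delta = \Theta(\epsilon)$; the genuinely delicate part of this section---reconciling the $\ln n$ gap between $\cost(\mathcal{X}_i)$ and $\cost(\mathcal{B}_i)$, and showing the source and target solutions are both dominated in cost by the naively maintained anchor---is what comes \emph{after} this corollary, not in its proof.
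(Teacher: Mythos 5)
Your proposal is correct and matches the paper's own argument exactly: both apply \Cref{robusthf} to the anchor $\mathcal{B}_{i^*}$ with $\delta = \tfrac{9}{10}\epsilon$, justified by the update-step count from \Cref{anchor}. The additional remarks you make (that the anchor is purely analytic and only ever naively extended, and that the count spans all of intervals $i^*$ through $i$) are accurate clarifications of what the paper leaves implicit.
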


\begin{proof}
By \Cref{anchor}, we naively maintain our anchor $\mathcal{B}_{i^*}$ for up to $\frac{9}{10} \cdot \epsilon \cdot \cost(\mathcal{B}_{i^*})$ update steps, so we can apply \Cref{robusthf} on it with $\delta = \frac{9}{10} \cdot \epsilon$ and conclude the proof.
\end{proof}

\begin{obs} \label{tar2}
The cost of the target solution is no larger than $(1 + O(\epsilon)) \cdot \cost(\mathcal{B}_{i^*})$.
\end{obs}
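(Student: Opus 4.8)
The plan is to bound the two constituents of the target solution $\mathcal{B}_i\cup\mathcal{N}_i$ separately, each against $\cost(\mathcal{B}_{i^*})$, using \Cref{equa9} together with the standing case hypothesis $\cost(\mathcal{X}_i) > \cost(\mathcal{B}_i)$ of this part of the analysis. First I would observe that $i^* < i$ by its definition, so $i \geq i^*+1$ and \Cref{equa9} applies with $j = i$. Since $\frac{2+\epsilon}{6} < 1$ for $\epsilon \leq \frac{1}{4}$, this yields $\cost(\mathcal{X}_i) \leq \left(1+\frac{\epsilon}{2}\right)\cdot\cost(\mathcal{B}_{i^*})$, and combining with the case hypothesis also $\cost(\mathcal{B}_i) < \cost(\mathcal{X}_i) \leq \left(1+\frac{\epsilon}{2}\right)\cdot\cost(\mathcal{B}_{i^*})$.

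Next I would bound $\cost(\mathcal{N}_i)$. By the algorithm description in \Cref{algdes}, the $i$th interval has length $\frac{\epsilon}{6}\cdot\max\{\cost(\mathcal{X}_i),\cost(\mathcal{B}_i)\}$, which in the present branch equals $\frac{\epsilon}{6}\cdot\cost(\mathcal{X}_i)$, and at most one set is added naively per update step, so $|\mathcal{N}_i| \leq \frac{\epsilon}{6}\cdot\cost(\mathcal{X}_i)$. Since every set costs at most $1$, it follows that $\cost(\mathcal{N}_i) \leq |\mathcal{N}_i| \leq \frac{\epsilon}{6}\left(1+\frac{\epsilon}{2}\right)\cdot\cost(\mathcal{B}_{i^*})$.

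Finally, combining the two estimates via $\cost(\mathcal{B}_i\cup\mathcal{N}_i) \leq \cost(\mathcal{B}_i) + \cost(\mathcal{N}_i)$ gives $\cost(\mathcal{B}_i\cup\mathcal{N}_i) \leq \left(1+\frac{\epsilon}{2}\right)\left(1+\frac{\epsilon}{6}\right)\cdot\cost(\mathcal{B}_{i^*}) = (1+O(\epsilon))\cdot\cost(\mathcal{B}_{i^*})$, which is the claim.

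I do not expect a genuine obstacle here; the only points that need a moment's care are verifying that \Cref{equa9} is actually applicable — that $i\geq i^*+1$ and that $i^*$ exists, both guaranteed because $i^*$ is defined as the last interval strictly before $i$ and interval $0$ always satisfies the defining inequality $\cost(\mathcal{X}_0)=\cost(\mathcal{B}_0)\leq 3\cost(\mathcal{B}_0)$ — and noticing that the interval-length maximum collapses to $\cost(\mathcal{X}_i)$ precisely because we are in the branch $\cost(\mathcal{X}_i)>\cost(\mathcal{B}_i)$, which is exactly what lets us reuse the bound from \Cref{equa9} for $\cost(\mathcal{N}_i)$ as well.
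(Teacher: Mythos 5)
Your proof is correct and follows essentially the same route as the paper: bound $\cost(\mathcal{B}_i)$ by $\cost(\mathcal{X}_i)$ via the case hypothesis, bound $\cost(\mathcal{N}_i)$ by $\frac{\epsilon}{6}\cost(\mathcal{X}_i)$ via the interval length, and then apply \Cref{equa9} with $j=i$, arriving at the same $\left(1+\frac{\epsilon}{6}\right)\left(1+\frac{\epsilon}{2}\right)\cost(\mathcal{B}_{i^*})$ bound. Your explicit checks that $i \geq i^*+1$ and that the geometric factor collapses to $1$ are points the paper leaves implicit, but the argument is the same.
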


\begin{proof}
We have that:
$$\cost(\mathcal{B}_i \cup \mathcal{N}_i) \leq \cost(\mathcal{B}_{i}) + \cost(\mathcal{N}_{i}) < \cost(\mathcal{X}_{i}) + \frac{\epsilon}{6} \cost(\mathcal{X}_{i}) = \brac{1 + \frac{\epsilon}{6}} \cost(\mathcal{X}_{i}).$$    

\noindent By plugging in $j=i$ in \Cref{equa9}, we obtain that:
$$\cost(\mathcal{B}_i \cup \mathcal{N}_i) < \brac{1 + \frac{\epsilon}{6}}\brac{1 + \frac{\epsilon}{2}} \cdot \cost(\mathcal{B}_{i^*}),$$
\noindent and so the claim holds.
\end{proof}

\begin{obs} \label{sou2}
The cost of the source solution is no larger than $(1 + O(\epsilon)) \cdot \cost(\mathcal{B}_{i^*})$.
\end{obs}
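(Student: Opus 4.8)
The plan is to bound $\cost(\mathcal{X}_i)$ directly in terms of $\cost(\mathcal{B}_{i^*})$, exactly as was done for the target solution in Observation~\ref{tar2}, and then note that the source solution $\mathcal{X}_i$ is \emph{fixed} throughout the $i$th interval (it never grows during the interval, since naively added sets go into $\mathcal{N}_i$, not into $\mathcal{X}_i$), so its cost at any point in the interval is at most its cost $\cost(\mathcal{X}_i)$ at the start. First I would invoke Claim~\ref{equa9} with $j = i$, which gives
\[
\cost(\mathcal{X}_i) \;\le\; \left(\frac{2+\epsilon}{6}\right)^{\,i - i^* - 1} \cdot \left(1 + \frac{\epsilon}{2}\right)\cdot \cost(\mathcal{B}_{i^*}).
\]
Since $i \ge i^* + 1$, the exponent $i - i^* - 1$ is nonnegative, and since $\frac{2+\epsilon}{6} < 1$ for $\epsilon \le \frac14$, the geometric factor is at most $1$; hence $\cost(\mathcal{X}_i) \le \bigl(1 + \tfrac{\epsilon}{2}\bigr)\cost(\mathcal{B}_{i^*}) = (1 + O(\epsilon))\cost(\mathcal{B}_{i^*})$, which is the claimed bound.

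The only subtlety to check — and the step I expect to require the most care — is the reduction from ``cost of the source solution throughout the interval'' to ``$\cost(\mathcal{X}_i)$''. During the $i$th interval the output is contained in $\mathcal{B}_i \cup \mathcal{N}_i \cup \mathcal{X}_i$, and in the framework of Section~\ref{techsec} we partition this into the source part $\mathcal{X}_i$ and the target part $\mathcal{B}_i \cup \mathcal{N}_i$; the freshly inserted sets $\mathcal{N}_i$ are accounted to the target side, so the source side is literally the fixed set family $\mathcal{X}_i$ with no additions during the interval. Therefore its cost is constantly $\cost(\mathcal{X}_i)$ — no monotonicity argument or union bound over update steps is even needed — and the bound from Claim~\ref{equa9} applies verbatim. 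Combining with Observation~\ref{tar2} and Corollary~\ref{anc2}, the output cost throughout the $i$th interval is at most $\cost(\mathcal{X}_i) + \cost(\mathcal{B}_i \cup \mathcal{N}_i) \le (2 + O(\epsilon))\cost(\mathcal{B}_{i^*})$, while the naively maintained anchor certifies $\cost(\mathcal{B}_{i^*}) \le (1 + O(\epsilon))\ln n \cdot \cost(\mathcal{OPT})$ at every point in the interval, yielding the desired $(2 + O(\epsilon))\ln n$ approximation after rescaling $\epsilon$.
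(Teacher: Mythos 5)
Your proposal is correct and matches the paper's proof, which likewise just plugs $j=i$ into Claim~\ref{equa9} and uses that the geometric factor $\left(\frac{2+\epsilon}{6}\right)^{i-i^*-1}$ is at most $1$. The extra remarks about the source $\mathcal{X}_i$ being fixed during the interval (with insertions charged to $\mathcal{N}_i$) are consistent with how the paper sets things up and do not change the argument.
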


\begin{proof}
The observation follows immediately by plugging in $j=i$ in \Cref{equa9}.
\end{proof}

\begin{cor} \label{wrap2}
If $\cost(\mathcal{X}_i) > \cost(\mathcal{B}_i)$, then throughout the $i$th interval the output solution gives an approximation ratio of $(2 + O(\epsilon)) \cdot \ln n$.
\end{cor}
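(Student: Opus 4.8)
The plan is to combine the three ingredients already set up for the case $\cost(\mathcal{X}_i) > \cost(\mathcal{B}_i)$: \Cref{sou2} and \Cref{tar2}, which bound the source and target costs by $(1+O(\epsilon)) \cdot \cost(\mathcal{B}_{i^*})$, and \Cref{anc2}, which says the naively maintained anchor stays a $((1+O(\epsilon))\ln n)$-approximate set cover throughout the $i$th interval. Fix any time $t$ within the $i$th interval. As observed just before \Cref{wrap1}, at that moment the output solution is contained in $\mathcal{X}_i \cup \mathcal{B}_i \cup \mathcal{N}_i$ (the naive sets inserted so far during interval $i$ all lie in $\mathcal{N}_i$), so by subadditivity of $\cost$ over unions, $\cost(\mathcal{X}_i^t) \le \cost(\mathcal{X}_i) + \cost(\mathcal{B}_i \cup \mathcal{N}_i)$.

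First I would apply \Cref{sou2} and \Cref{tar2} to the right-hand side, obtaining $\cost(\mathcal{X}_i^t) \le 2(1+O(\epsilon)) \cdot \cost(\mathcal{B}_{i^*})$. Next I would note that the naively maintained anchor is obtained from $\mathcal{B}_{i^*}$ only by adding sets (one per insertion since the start of interval $i^*$), so its cost is monotonically non-decreasing; in particular, at time $t$ within interval $i$ its cost is at least $\cost(\mathcal{B}_{i^*})$. Hence, using \Cref{anc2}, $\cost(\mathcal{B}_{i^*}) \le \cost(\text{naively maintained anchor at time } t) \le (1+O(\epsilon)) \ln n \cdot \cost(\mathcal{OPT}_i^t)$. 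Chaining the two bounds gives $\cost(\mathcal{X}_i^t) \le 2(1+O(\epsilon))^2 \ln n \cdot \cost(\mathcal{OPT}_i^t) = (2+O(\epsilon)) \ln n \cdot \cost(\mathcal{OPT}_i^t)$, i.e.\ the output is a $((2+O(\epsilon))\ln n)$-approximation at every $t$ in the interval, as claimed.

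I do not expect a genuine obstacle in this final step; the delicate work has already been carried out upstream. The genuinely subtle points---flagged in the preamble to \Cref{secim}---are the choice of the anchor $\mathcal{B}_{i^*}$ as the most recent interval with $\cost(\mathcal{X}_{i^*}) \le 3 \cdot \cost(\mathcal{B}_{i^*})$, the geometric-contraction estimate of \Cref{equa9} (which is what lets a single anchor simultaneously dominate both the source and the target costs at interval $i$), and \Cref{anchor}, which bounds the number of update steps from the start of interval $i^*$ to the end of interval $i$ so that \Cref{robusthf} can be invoked with $\delta = \tfrac{9}{10}\epsilon$ inside \Cref{anc2}. The only care needed in writing up \Cref{wrap2} itself is to keep the $O(\epsilon)$ terms explicit throughout and then rescale $\epsilon$ at the very end (exactly as at the close of the proof of \Cref{robusthf}) so that the final statement reads $(2+\epsilon)\ln n$.
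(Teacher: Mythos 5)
Your proposal is correct and follows essentially the same route as the paper: the paper's proof of \Cref{wrap2} is exactly the observation that the output is contained in $\mathcal{B}_i \cup \mathcal{N}_i \cup \mathcal{X}_i$, bounded via subadditivity by the source plus target costs, which \Cref{sou2} and \Cref{tar2} relate to $\cost(\mathcal{B}_{i^*})$ and \Cref{anc2} relates to the optimum. Your explicit chaining through the (monotonically non-decreasing) cost of the naively maintained anchor is precisely the detail the paper leaves implicit in ``follows immediately from,'' and your identification of where the real work lies (the anchor choice, \Cref{equa9}, and \Cref{anchor}) matches the paper's structure.
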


\begin{proof}
Throughout the $i$th interval our output solution is contained in $(\mathcal{B}_i \cup \mathcal{N}_i \cup \mathcal{X}_i)$. Since $\cost(\mathcal{B}_i \cup \mathcal{N}_i \cup \mathcal{X}_i) \leq \cost(\mathcal{B}_i \cup \mathcal{N}_i) + \cost(\mathcal{X}_i)$, the corollary follows immediately from \Cref{anc2}, \Cref{tar2} and \Cref{sou2}.
\end{proof}

\begin{cor} \label{wrap}
Throughout the entire update sequence, the output solution gives an approximation ratio of $(2 + \epsilon) \cdot \ln n$.
\end{cor}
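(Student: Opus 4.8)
The plan is to combine the two case-analyses already carried out—\Cref{wrap1} for the case $\cost(\mathcal{X}_i) \le \cost(\mathcal{B}_i)$ and \Cref{wrap2} for the case $\cost(\mathcal{X}_i) > \cost(\mathcal{B}_i)$—and to handle the initial ($0$th) interval separately. Since every update step falls inside exactly one interval (the initial interval or the $i$th interval for some $i \ge 1$), it suffices to bound the approximation ratio during each type of interval and then take the worst of these bounds, absorbing the hidden constants into $\epsilon$ by a final rescaling.

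First I would dispose of the initial interval. By construction it has length $\tfrac{\epsilon}{6}\cdot\cost(\mathcal{B}_0)$ and, since $\mathcal{B}_0 = \mathcal{X}_0$ and we naively maintain $\mathcal{B}_0$ throughout, the output during the initial interval is exactly the naive extension of $\mathcal{B}_0$ over at most $\tfrac{\epsilon}{6}\cdot\cost(\mathcal{B}_0)$ update steps. Applying \Cref{robusthf} with $\delta = \tfrac{\epsilon}{6}$ gives an approximation ratio of $(1+O(\epsilon))\ln n$ throughout the initial interval, which is certainly at most $(2+O(\epsilon))\ln n$.

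Next I would observe that for every $i \ge 1$, exactly one of the two cases $\cost(\mathcal{X}_i) \le \cost(\mathcal{B}_i)$ or $\cost(\mathcal{X}_i) > \cost(\mathcal{B}_i)$ holds, so by \Cref{wrap1} and \Cref{wrap2} respectively, the output solution gives an approximation ratio of $(2+O(\epsilon))\ln n$ throughout the $i$th interval. Since the intervals partition the update sequence (after the initial interval), this bound holds at every update step. Finally, rescaling $\epsilon$ (replacing $\epsilon$ by a suitable constant multiple, which only affects update time and recourse by constant factors, as tracked in \Cref{obsrecourse} and \Cref{obstime}) converts the $(2+O(\epsilon))\ln n$ bound into the claimed $(2+\epsilon)\ln n$ bound, completing the proof.

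The only mildly delicate point—and the one I would be most careful about—is verifying that the two cases together genuinely cover all intervals $i\ge 1$ and that there is no inductive circularity: \Cref{wrap2} relies on the anchor $\mathcal{B}_{i^*}$ with $i^*$ the last index $<i$ satisfying $\cost(\mathcal{X}_{i^*}) \le 3\cost(\mathcal{B}_{i^*})$, and this is well-defined precisely because $\cost(\mathcal{X}_0) = \cost(\mathcal{B}_0)$ makes $0$ a valid candidate. Everything else is bookkeeping: the feasibility of the output at all times is \Cref{obslegal}, the recourse and update-time bounds are \Cref{obsrecourse} and \Cref{obstime}, and combining these with the approximation bound established here yields \Cref{hf}.
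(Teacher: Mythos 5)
Your proposal is correct and matches the paper's argument, which is simply to combine \Cref{wrap1} and \Cref{wrap2} and rescale $\epsilon$; your additional care about the initial interval and the well-definedness of $i^*$ is sound but not part of the paper's (one-line) proof.
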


\begin{proof}
The corollary holds by combining \Cref{wrap1} and \Cref{wrap2} (and by scaling $\epsilon$).
\end{proof}

\begin{cor}
\Cref{hf} holds by \Cref{obsrecourse}, \Cref{obstime}, \Cref{obslegal} and \Cref{wrap}.
\end{cor}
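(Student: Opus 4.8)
The plan is essentially bookkeeping: \Cref{hf} asserts four properties of the algorithm of \Cref{secim}---that it is deterministic, that it maintains a $((2+\epsilon)\ln n)$-approximate set cover, that its worst-case update time is $O\brac{\frac{f\log n}{\epsilon^2} + \frac{C}{\epsilon}}$, and that its worst-case recourse is $O\brac{\frac{C}{\epsilon}}$---and each of these has already been isolated as one of the stated results, so the proof is to invoke them in turn.

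First I would record determinism: the algorithm of \Cref{secim} is the black-box transformation of \Cref{techsec}, with interval lengths scaled up by a $\ln n$ factor, applied to the greedy-based dynamic algorithm of \cite{10756164} from \Cref{suzgreedy}; since both the transformation and \Cref{suzgreedy} are deterministic, so is the composition. Next, the worst-case recourse $O\brac{\frac{C}{\epsilon}}$ is exactly \Cref{obsrecourse} (counting at most $C\cdot\max\{\cost(\mathcal{X}_i),\cost(\mathcal{B}_i)\}$ set changes per phase spread over $\frac{\epsilon}{12}\cdot\max\{\cost(\mathcal{X}_i),\cost(\mathcal{B}_i)\}$ steps, plus one naive set per insertion); the worst-case update time is \Cref{obstime} (the per-step work is the larger of simulating \cite{10756164} in the background, $O\brac{\frac{f\log n}{\epsilon^2}}$, and realizing the recourse, $O\brac{\frac{C}{\epsilon}}$); and the fact that the output is always a legal cover---implicit in the phrase ``maintains a $\ldots$-approximate SC''---is \Cref{obslegal}, since throughout the $i$th interval the output contains one of the legal covers $\mathcal{X}_i,\mathcal{B}_i$ together with a set covering every inserted element.

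The single non-routine input is the approximation bound $(2+\epsilon)\ln n$, supplied by \Cref{wrap}; I would merely cite it, but it is worth emphasizing that this is where all the work lives and hence where I would expect any obstacle. \Cref{wrap} combines the easy case $\cost(\mathcal{X}_i)\le\cost(\mathcal{B}_i)$ (\Cref{wrap1}, where the fresh solution $\mathcal{B}_i$ is its own anchor and \Cref{robusthf} applies with $\delta=\frac{\epsilon}{6}$) with the hard case $\cost(\mathcal{X}_i)>\cost(\mathcal{B}_i)$ (\Cref{wrap2}): there one designates as the anchor $\mathcal{B}_{i^*}$ for the last interval $i^*<i$ with $\cost(\mathcal{X}_{i^*})\le 3\cost(\mathcal{B}_{i^*})$, uses the geometric-decay estimate of \Cref{equa9} to bound both the source and target costs by $(1+O(\epsilon))\cost(\mathcal{B}_{i^*})$, uses \Cref{anchor} to bound the elapsed update steps from the start of interval $i^*$ to the end of interval $i$ by $\frac{9}{10}\epsilon\cdot\cost(\mathcal{B}_{i^*})$, and then applies the robustness lemma \Cref{robusthf} to the (never actually maintained) naive extension of $\mathcal{B}_{i^*}$ with $\delta=\frac{9}{10}\epsilon$ to certify it remains a $((1+O(\epsilon))\ln n)$-approximation throughout interval $i$. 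With every ingredient in hand there is no remaining difficulty: the corollary follows by assembling \Cref{obsrecourse}, \Cref{obstime}, \Cref{obslegal} and \Cref{wrap}, after a final rescaling of $\epsilon$ to absorb the hidden constant in the approximation.
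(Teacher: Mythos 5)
Your proposal is correct and matches the paper exactly: the corollary is pure assembly, with \Cref{obsrecourse}, \Cref{obstime}, \Cref{obslegal} and \Cref{wrap} each supplying one of the claimed guarantees of \Cref{hf}, and the paper gives no further argument beyond the statement itself. Your additional summary of how \Cref{wrap} is established is accurate but not needed for this step.
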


\newpage

\newcommand{\etalchar}[1]{$^{#1}$}


\begin{thebibliography}{MSVW16}

\bibitem[AAG{\etalchar{+}}19]{abboud2019dynamic}
Amir Abboud, Raghavendra Addanki, Fabrizio Grandoni, Debmalya Panigrahi, and Barna Saha.
\newblock Dynamic set cover: improved algorithms and lower bounds.
\newblock In {\em Proceedings of the 51st Annual ACM SIGACT Symposium on Theory of Computing}, pages 114--125, 2019.

\bibitem[ADJ20]{r2}
Spyros Angelopoulos, Christoph D\"{u}rr, and Shendan Jin.
\newblock Online maximum matching with recourse.
\newblock {\em J. Comb. Optim.}, 40(4):974–1007, November 2020.

\bibitem[AOSS18a]{r6}
Sepehr Assadi, Krzysztof Onak, Baruch Schieber, and Shay Solomon.
\newblock Fully dynamic maximal independent set with sublinear update time.
\newblock In {\em Proceedings of the 50th Annual ACM SIGACT Symposium on Theory of Computing}, STOC 2018, page 815–826, New York, NY, USA, 2018. Association for Computing Machinery.

\bibitem[AOSS18b]{10.1145/3188745.3188922}
Sepehr Assadi, Krzysztof Onak, Baruch Schieber, and Shay Solomon.
\newblock Fully dynamic maximal independent set with sublinear update time.
\newblock In {\em Proceedings of the 50th Annual ACM SIGACT Symposium on Theory of Computing}, STOC 2018, page 815–826, New York, NY, USA, 2018. Association for Computing Machinery.

\bibitem[AS21]{assadi2021fully}
Sepehr Assadi and Shay Solomon.
\newblock Fully dynamic set cover via hypergraph maximal matching: An optimal approximation through a local approach.
\newblock In {\em 29th Annual European Symposium on Algorithms (ESA 2021)}. Schloss Dagstuhl-Leibniz-Zentrum f{\"u}r Informatik, 2021.

\bibitem[BCH17]{bhattacharya2017deterministic}
Sayan Bhattacharya, Deeparnab Chakrabarty, and Monika Henzinger.
\newblock Deterministic fully dynamic approximate vertex cover and fractional matching in {O}(1) amortized update time.
\newblock In {\em International Conference on Integer Programming and Combinatorial Optimization}, pages 86--98. Springer, 2017.

\bibitem[BDH{\etalchar{+}}19]{8948654}
Soheil Behnezhad, Mahsa Derakhshan, MohammadTaghi Hajiaghayi, Cliff Stein, and Madhu Sudan.
\newblock Fully dynamic maximal independent set with polylogarithmic update time.
\newblock In {\em 2019 IEEE 60th Annual Symposium on Foundations of Computer Science (FOCS)}, pages 382--405, 2019.

\bibitem[BGK{\etalchar{+}}15]{r9}
Nikhil Bansal, Anupam Gupta, Ravishankar Krishnaswamy, Kirk Pruhs, Kevin Schewior, and Cliff Stein.
\newblock {A 2-Competitive Algorithm For Online Convex Optimization With Switching Costs}.
\newblock In Naveen Garg, Klaus Jansen, Anup Rao, and Jos\'{e} D.~P. Rolim, editors, {\em Approximation, Randomization, and Combinatorial Optimization. Algorithms and Techniques (APPROX/RANDOM 2015)}, volume~40 of {\em Leibniz International Proceedings in Informatics (LIPIcs)}, pages 96--109, Dagstuhl, Germany, 2015. Schloss Dagstuhl -- Leibniz-Zentrum f{\"u}r Informatik.

\bibitem[BGS11]{6108199}
Surender Baswana, Manoj Gupta, and Sandeep Sen.
\newblock Fully dynamic maximal matching in {O}(logn) update time.
\newblock In {\em 2011 IEEE 52nd Annual Symposium on Foundations of Computer Science}, pages 383--392, 2011.

\bibitem[BHI15]{bhattacharya2015design}
Sayan Bhattacharya, Monika Henzinger, and Giuseppe~F Italiano.
\newblock Design of dynamic algorithms via primal-dual method.
\newblock In {\em International Colloquium on Automata, Languages, and Programming}, pages 206--218. Springer, 2015.

\bibitem[BHN19]{bhattacharya2019new}
Sayan Bhattacharya, Monika Henzinger, and Danupon Nanongkai.
\newblock A new deterministic algorithm for dynamic set cover.
\newblock In {\em 2019 IEEE 60th Annual Symposium on Foundations of Computer Science (FOCS)}, pages 406--423. IEEE, 2019.

\bibitem[BHNW21]{bhattacharya2021dynamic}
Sayan Bhattacharya, Monika Henzinger, Danupon Nanongkai, and Xiaowei Wu.
\newblock Dynamic set cover: Improved amortized and worst-case update time.
\newblock In {\em Proceedings of the 2021 ACM-SIAM Symposium on Discrete Algorithms (SODA)}, pages 2537--2549. SIAM, 2021.

\bibitem[BHR19]{r15}
Aaron Bernstein, Jacob Holm, and Eva Rotenberg.
\newblock Online bipartite matching with amortized {O}$(\log^2 n)$ replacements.
\newblock {\em J. ACM}, 66(5), September 2019.

\bibitem[BK19]{inbook123}
Sayan Bhattacharya and Janardhan Kulkarni.
\newblock {\em Deterministically Maintaining a $(2+\epsilon)$-Approximate Minimum Vertex Cover in {O}$(1/ \epsilon ^2 )$ Amortized Update Time}, pages 1872--1885.
\newblock 01 2019.

\bibitem[BKP{\etalchar{+}}17]{r16}
Aaron Bernstein, Tsvi Kopelowitz, Seth Pettie, Ely Porat, and Clifford Stein.
\newblock {Simultaneously Load Balancing for Every p-norm, With Reassignments}.
\newblock In Christos~H. Papadimitriou, editor, {\em 8th Innovations in Theoretical Computer Science Conference (ITCS 2017)}, volume~67 of {\em Leibniz International Proceedings in Informatics (LIPIcs)}, pages 51:1--51:14, Dagstuhl, Germany, 2017. Schloss Dagstuhl -- Leibniz-Zentrum f{\"u}r Informatik.

\bibitem[BLSZ18]{r28}
Bartłomiej Bosek, Dariusz Leniowski, Piotr Sankowski, and Anna Zych.
\newblock {\em A Tight Bound for Shortest Augmenting Paths on Trees}, pages 201--216.
\newblock 03 2018.

\bibitem[BSZ23]{bukov2023nearly}
Anton Bukov, Shay Solomon, and Tianyi Zhang.
\newblock Nearly optimal dynamic set cover: Breaking the quadratic-in-$ f $ time barrier.
\newblock {\em arXiv preprint arXiv:2308.00793}, 2023.

\bibitem[CDKL09]{r32}
K.~Chaudhuri, C.~Daskalakis, R.~D. Kleinberg, and H.~Lin.
\newblock Online bipartite perfect matching with augmentations.
\newblock In {\em IEEE INFOCOM 2009}, pages 1044--1052, 2009.

\bibitem[CHHK16]{r30}
Keren Censor-Hillel, Elad Haramaty, and Zohar Karnin.
\newblock Optimal dynamic distributed mis.
\newblock In {\em Proceedings of the 2016 ACM Symposium on Principles of Distributed Computing}, PODC '16, page 217–226, New York, NY, USA, 2016. Association for Computing Machinery.

\bibitem[CZ19]{8948632}
Shiri Chechik and Tianyi Zhang.
\newblock Fully dynamic maximal independent set in expected poly-log update time.
\newblock In {\em 2019 IEEE 60th Annual Symposium on Foundations of Computer Science (FOCS)}, pages 370--381, 2019.

\bibitem[DS14]{dinur2014analytical}
Irit Dinur and David Steurer.
\newblock Analytical approach to parallel repetition.
\newblock In {\em Proceedings of the forty-sixth annual ACM symposium on Theory of computing}, pages 624--633, 2014.

\bibitem[GGK13]{r36}
Albert Gu, Anupam Gupta, and Amit Kumar.
\newblock The power of deferral: maintaining a constant-competitive steiner tree online.
\newblock In {\em Proceedings of the Forty-Fifth Annual ACM Symposium on Theory of Computing}, STOC '13, page 525–534, New York, NY, USA, 2013. Association for Computing Machinery.

\bibitem[GK14]{r38}
Anupam Gupta and Amit Kumar.
\newblock Online steiner tree with deletions.
\newblock In {\em Proceedings of the Twenty-Fifth Annual ACM-SIAM Symposium on Discrete Algorithms}, SODA '14, page 455–467, USA, 2014. Society for Industrial and Applied Mathematics.

\bibitem[GKKP17]{gupta2017online}
Anupam Gupta, Ravishankar Krishnaswamy, Amit Kumar, and Debmalya Panigrahi.
\newblock Online and dynamic algorithms for set cover.
\newblock In {\em Proceedings of the 49th Annual ACM SIGACT Symposium on Theory of Computing}, pages 537--550, 2017.

\bibitem[GKKV95]{r35}
Edward~F. Grove, Ming-Yang Kao, P.~Krishnan, and Jeffrey~Scott Vitter.
\newblock Online perfect matching and mobile computing.
\newblock In {\em Proceedings of the 4th International Workshop on Algorithms and Data Structures}, WADS '95, page 194–205, Berlin, Heidelberg, 1995. Springer-Verlag.

\bibitem[GKS]{r39}
Anupam Gupta, Amit Kumar, and Cliff Stein.
\newblock {\em Maintaining Assignments Online: Matching, Scheduling, and Flows}, pages 468--479.

\bibitem[GP13]{6686191}
Manoj Gupta and Richard Peng.
\newblock { Fully Dynamic $(1+ \epsilon)$-Approximate Matchings }.
\newblock In {\em 2013 IEEE 54th Annual Symposium on Foundations of Computer Science (FOCS)}, pages 548--557, Los Alamitos, CA, USA, October 2013. IEEE Computer Society.

\bibitem[Heu13]{Heuvel_2013}
Jan van~den Heuvel.
\newblock {\em The complexity of change}, page 127–160.
\newblock London Mathematical Society Lecture Note Series. Cambridge University Press, 2013.

\bibitem[HIPS19]{hjuler_et_al:LIPIcs.STACS.2019.35}
Niklas Hjuler, Giuseppe~F. Italiano, Nikos Parotsidis, and David Saulpic.
\newblock {Dominating Sets and Connected Dominating Sets in Dynamic Graphs}.
\newblock In Rolf Niedermeier and Christophe Paul, editors, {\em 36th International Symposium on Theoretical Aspects of Computer Science (STACS 2019)}, volume 126 of {\em Leibniz International Proceedings in Informatics (LIPIcs)}, pages 35:1--35:17, Dagstuhl, Germany, 2019. Schloss Dagstuhl -- Leibniz-Zentrum f{\"u}r Informatik.

\bibitem[IDH{\etalchar{+}}11]{ito}
Takehiro Ito, Erik~D. Demaine, Nicholas J.~A. Harvey, Christos~H. Papadimitriou, Martha Sideri, Ryuhei Uehara, and Yushi Uno.
\newblock On the complexity of reconfiguration problems.
\newblock {\em Theor. Comput. Sci.}, 412(12–14):1054–1065, March 2011.

\bibitem[KR08]{khot2008vertex}
Subhash Khot and Oded Regev.
\newblock Vertex cover might be hard to approximate to within $2-\varepsilon$.
\newblock {\em Journal of Computer and System Sciences}, 74(3):335--349, 2008.

\bibitem[MSVW16]{r51}
Nicole Megow, Martin Skutella, Jos\'{e} Verschae, and Andreas Wiese.
\newblock The power of recourse for online mst and tsp.
\newblock {\em SIAM Journal on Computing}, 45(3):859--880, 2016.

\bibitem[Nis18]{nishimura}
Naomi Nishimura.
\newblock Introduction to reconfiguration.
\newblock {\em Algorithms}, 11(4), 2018.

\bibitem[NS15]{10.1145/2700206}
Ofer Neiman and Shay Solomon.
\newblock Simple deterministic algorithms for fully dynamic maximal matching.
\newblock {\em ACM Trans. Algorithms}, 12(1), November 2015.

\bibitem[OR10]{10.1145/1806689.1806753}
Krzysztof Onak and Ronitt Rubinfeld.
\newblock Maintaining a large matching and a small vertex cover.
\newblock In {\em Proceedings of the Forty-Second ACM Symposium on Theory of Computing}, STOC '10, page 457–464, New York, NY, USA, 2010. Association for Computing Machinery.

\bibitem[PS]{doi:10.1137/1.9781611974331.ch51}
David Peleg and Shay Solomon.
\newblock {\em Dynamic $(1 + \epsilon)$-Approximate Matchings: A Density-Sensitive Approach}, pages 712--729.

\bibitem[Sol16]{7782946}
Shay Solomon.
\newblock { Fully Dynamic Maximal Matching in Constant Update Time }.
\newblock In {\em 2016 IEEE 57th Annual Symposium on Foundations of Computer Science (FOCS)}, pages 325--334, Los Alamitos, CA, USA, October 2016. IEEE Computer Society.

\bibitem[Sol18]{Solomon18}
Shay Solomon.
\newblock Local algorithms for bounded degree sparsifiers in sparse graphs.
\newblock In Anna~R. Karlin, editor, {\em 9th Innovations in Theoretical Computer Science Conference, {ITCS} 2018, January 11-14, 2018, Cambridge, MA, {USA}}, volume~94 of {\em LIPIcs}, pages 52:1--52:19. Schloss Dagstuhl - Leibniz-Zentrum f{\"{u}}r Informatik, 2018.

\bibitem[SS21]{ShayNoam}
Noam Solomon and Shay Solomon.
\newblock {A Generalized Matching Reconfiguration Problem}.
\newblock In James~R. Lee, editor, {\em 12th Innovations in Theoretical Computer Science Conference (ITCS 2021)}, volume 185 of {\em Leibniz International Proceedings in Informatics (LIPIcs)}, pages 57:1--57:20, Dagstuhl, Germany, 2021. Schloss Dagstuhl -- Leibniz-Zentrum f{\"u}r Informatik.

\bibitem[SU23]{solomon2023dynamic}
Shay Solomon and Amitai Uzrad.
\newblock {Dynamic $((1+\epsilon)\ln n)$-Approximation Algorithms for Minimum Set Cover and Dominating Set}.
\newblock In {\em Proceedings of the 55th Annual ACM Symposium on Theory of Computing}, pages 1187--1200, 2023.

\bibitem[SUZ24]{10756164}
Shay Solomon, Amitai Uzrad, and Tianyi Zhang.
\newblock A lossless deamortization for dynamic greedy set cover.
\newblock In {\em 2024 IEEE 65th Annual Symposium on Foundations of Computer Science (FOCS)}, pages 264--290, 2024.

\bibitem[WS11]{williamson2011design}
David~P Williamson and David~B Shmoys.
\newblock {\em The design of approximation algorithms}.
\newblock Cambridge university press, 2011.

    
\end{thebibliography}

\end{document}